  \providecommand\BibTeX{{%
    \normalfont B\kern-0.5em{\scshape i\kern-0.25em b}\kern-0.8em\TeX}}}
\newtheorem{theorem}{Theorem}
\newtheorem{proposition}{Proposition}[section]
\newtheorem{lemma}{Lemma}[section]
\newtheorem*{lemma*}{Lemma} 
\DeclareMathOperator{\N}{\text{Normal}}
\DeclareMathOperator{\Pow}{\text{Pareto}}
\DeclareMathOperator{\E}{\mathrm{E}}
\DeclareMathOperator{\Pb}{\mathrm{P}}
\DeclareMathOperator{\1}{\mathrm{1}}
\DeclareMathOperator{\Q}{\mathcal{U}}
\DeclareMathOperator{\V}{\mathcal{V}}
\DeclareMathOperator{\Z}{\mathcal{Z}}
\newcommand\Qgreedy{\Q^{\mathrm{obl}}}
\newcommand\Qdp{\Q^{\mathrm{dp}}}
\newcommand\Qopt{\Q^{\mathrm{opt}}}
\newcommand\Qfair{\Q^{\text{$\gamma$-rule}}}
\newcommand\pxagreedy{\pxa^{\mathrm{obl}}}
\newcommand\pxadp{\pxa^{\mathrm{dp}}}
\newcommand\pxaopt{\pxa^{\mathrm{opt}}}
\newcommand\pxafair{\pxa^{\text{$\gamma$-rule}}}
\newcommand\pxbgreedy{\pxb^{\mathrm{obl}}}
\newcommand\pxbdp{\pxb^{\mathrm{dp}}}
\newcommand\pxbopt{\pxb^{\mathrm{opt}}}
\newcommand\pxggreedy{\pxg^{\mathrm{obl}}}
\newcommand\pxgopt{\pxg^{\mathrm{opt}}}
\DeclarePairedDelimiter\ceil{\lceil}{\rceil}
\DeclarePairedDelimiter\floor{\lfloor}{\rfloor}
\newcommand\esp[1]{\mathrm{E}\left[#1\right]}
\newcommand\sProba[1]{\Pb(#1)}
\newcommand{\w}{W}
\newcommand{\wh}{\hat W}
\newcommand{\wha}{\hat W_A}
\newcommand{\whg}{\hat W_G}
\newcommand{\wi}{W_i}
\newcommand{\whi}{\hat W_i}
\newcommand{\pg}{p_G}
\newcommand{\pa}{p_A}
\newcommand{\pb}{p_B}
\newcommand{\phig}{\frac{1}{\sxg}\phi\left(\frac{w-\hat w}{\sxg}\right)}
\newcommand{\phita}{\frac{1}{\sxa}\phi\left(\frac{w-\twha}{\sxa}\right)}
\newcommand{\phitb}{\frac{1}{\sxb}\phi\left(\frac{w-\twhb}{\sxb}\right)}
\newcommand{\phitg}{\frac{1}{\sxg}\phi\left(\frac{w-\twhg}{\sxg}\right)}
\newcommand{\pq}{p_\w(w)}
\newcommand{\tw}{\theta}
\newcommand{\twh}{\hat \theta}
\newcommand{\twha}{\hat\theta_{A}}
\newcommand{\twhb}{\hat\theta_{B}}
\newcommand{\twhg}{\hat\theta_{G}}
\newcommand{\twhgi}{\hat\theta_{G_i}}
\newcommand{\txg}{\twhg}
\newcommand{\txgdp}{\twhg^{\mathrm{dp}}}
\newcommand{\txggreedy}{\twh^{\mathrm{obl}}}
\newcommand{\txgopt}{\twhg^{\mathrm{opt}}}
\newcommand{\txa}{\twha}
\newcommand{\txb}{\twhb}
\newcommand{\ty}{\tw}
\newcommand{\sxa}{\sigma_{A}}
\newcommand{\sxb}{\sigma_{B}}
\newcommand{\sxg}{\sigma_{G}}
\newcommand{\mq}{\mu_{\w}}
\newcommand{\sq}{\sigma_{\w}}
\newcommand{\pxa}{x_{A}}
\newcommand{\pxb}{x_{B}}
\newcommand{\pxg}{x_{G}}
\newcommand{\pyg}{y_{G}}
\newcommand{\pya}{y_{A}}
\newcommand{\pyb}{y_{B}}
\newcommand{\py}{y}
\newcommand{\ax}{\alpha_1}
\newcommand{\ay}{\alpha_2}
\newcommand{\ff}{\sfrac{4}{5}}
\begin{document}

\title{On Fair Selection in the Presence of Implicit Variance}

\author{Vitalii Emelianov}
\affiliation{
  \institution{Univ. Grenoble Alpes, Inria, CNRS, Grenoble INP, LIG}
  \city{Grenoble}
  \country{France}}
\email{vitalii.emelianov@inria.fr}

\author{Nicolas Gast}
\affiliation{
  \institution{Univ. Grenoble Alpes, Inria, CNRS, Grenoble INP, LIG}
  \city{Grenoble}
  \country{France}}
\email{nicolas.gast@inria.fr}

\author{Krishna P. Gummadi}
\affiliation{
  \institution{Max Planck Institute for Software Systems}
  \city{Saarbrücken}
  \country{Germany}}
\email{gummadi@mpi-sws.org}

\author{Patrick Loiseau}
\affiliation{
  \institution{Univ. Grenoble Alpes, Inria, CNRS, Grenoble INP, LIG}
  \city{Grenoble}
  \country{France}}
\email{patrick.loiseau@inria.fr}


\begin{abstract}

Quota-based fairness mechanisms like the so-called Rooney rule or four-fifths rule are used in selection problems such as hiring or college admission to reduce inequalities based on sensitive demographic attributes (gender, ethnicity, etc.). These mechanisms are often viewed as introducing a trade-off between selection fairness and utility (i.e., the overall quality of the selected candidates). In recent work, however, Kleinberg and Raghavan [\emph{Proc. of ITCS '18}] showed that, in the presence of implicit bias in estimating candidates' quality, the Rooney rule can in fact increase the utility of the selection process (beyond improving its fairness).

We argue that even in the absence of implicit bias, the estimates of candidates' quality from different groups may differ in another fundamental way, namely, in their variance. We term this phenomenon \emph{implicit variance} and we ask: can fairness mechanisms be beneficial to the utility of a selection process in the presence of implicit variance (even in the absence of implicit bias)? To answer this question, we propose a simple model in which candidates have a true latent quality that is drawn from a group-independent normal distribution. To make the selection, a decision maker receives an unbiased estimate of the quality of each candidate, with normal noise, but whose variance depends on the candidate's group. We then compare the utility obtained by imposing a fairness mechanism that we term $\gamma$-rule, which includes demographic parity ($\gamma = 1$) and the four-fifths rule ($\gamma = 0.8$) as special cases, to that of a group-oblivious baseline selection algorithm that simply picks the candidates with the highest estimated quality independently of their group. Our main result shows that the demographic parity mechanism always strictly increases the selection utility, while any other $\gamma$-rule also always increases it weakly. We extend our model to a two-stage selection process where the true quality is observed at the second stage and analyze how our results are changed in that case. We finally discuss multiple extensions of our results, in particular to different distributions of the true latent quality.

\end{abstract}

\begin{CCSXML}
<ccs2012>
   <concept>
       <concept_id>10003752.10010070.10010071</concept_id>
       <concept_desc>Theory of computation~Machine learning theory</concept_desc>
       <concept_significance>300</concept_significance>
       </concept>
   <concept>
       <concept_id>10003752.10010070.10010071.10010083</concept_id>
       <concept_desc>Theory of computation~Models of learning</concept_desc>
       <concept_significance>300</concept_significance>
       </concept>
 </ccs2012>
\end{CCSXML}

\ccsdesc[300]{Theory of computation~Machine learning theory}
\ccsdesc[300]{Theory of computation~Models of learning}

\keywords{selection problem, fairness, implicit bias, implicit variance}

\maketitle

\section{Introduction}
\label{section: introduction}

\paragraph{\textbf{Discrimination in selection and the role of implicit bias.}}

Many selection problems such as hiring or college admission are subject to discrimination \cite{Bertrand04a}, where the outcomes for certain individuals are negatively correlated with their membership in salient demographic groups defined by attributes like gender, race, ethnicity, sexual orientation or religion. Over the past two decades, implicit bias---that is an unconscious negative perception of the members of certain demographic groups---has been put forward as a key factor in explaining this discrimination \cite{implicit_bias_greenwald06}. While human decision makers are naturally susceptible to implicit bias when assessing candidates, algorithmic decision makers are also vulnerable to implicit biases when the data used to train them or to make decisions was generated by humans.

To mitigate the effects of discrimination on candidates from underrepresented groups, various fairness mechanisms\footnote{These mechanisms are sometimes termed ``positive discrimination'' (e.g., in Germany, France, China, or India) or ``affirmative actions'' (in the USA), often referring to their justification as corrective measures against discrimination suffered in the past by disadvantaged groups. In our work, we analyze the effect of these mechanisms in a particular setting of selection problems (with implicit variance) independently of their motivation, hence we use the more neutral term ``fairness mechanisms.''} are adopted in many domains, either by law or through softer guidelines. For instance, the \emph{Rooney rule} \cite{rooney_rule_collins07} requires that, when hiring for a given position, at least one candidate from the underrepresented group be interviewed. The Rooney rule was initially introduced for hiring American football coaches, but it is increasingly being adopted by many other businesses in particular for hiring top executives \cite{Cavicchia15a,Passariello16a}. Another widely used fairness mechanism is the so-called \emph{$\ff$-rule} \cite{holzer00}, that requires that the selection rate for the underrepresented group be at least 80\% of that for the overrepresented group (otherwise one says that there is adverse impact). This rule is part of the ``Uniform Guidelines On Employee Selection Procedures''\footnote{A set of guidelines jointly adopted by the Equal Employment Opportunity Commission, the Civil Service Commission, the Department of Labor, and the Department of Justice in 1978.}. A stricter version of the $\ff$-rule is the so-called \emph{demographic parity} constraint, which requires the selection rates for all groups to be equal. An overview of these and other fairness mechanisms can be found in \cite{holzer00}.

Fairness mechanisms, however, have been the subject of frequent debates. On one hand, they are believed to promote the inclusion of deserving candidates from underrepresented groups who would have otherwise been excluded in particular due to implicit bias. On the other hand, they are viewed as requiring consideration of candidates from underrepresented groups at the expense of candidates from overrepresented groups, which may potentially decrease the overall utility of the selection process, i.e., the overall quality of selected candidates.

\paragraph{\textbf{Formal analysis of fairness mechanisms in the presence of implicit bias.}}

Perhaps surprisingly, the mathematical analysis of the effect of fairness mechanisms on utility in the context of selection problems was initiated only recently by Kleinberg and Raghavan \cite{kleinberg18} (see also an extension to ranking problems in \cite{celis20}). The authors of \cite{kleinberg18} assume that each candidate $i$ has a true latent quality $\wi$ that comes from a group-independent distribution. They model implicit bias by assuming that the decision maker sees an estimate of the quality $\whi = \wi$ for candidates from the well-represented group and $\whi = \wi / \beta$ for candidates from the underrepresented group, where $\beta > 1$ measures the amount of implicit bias. The factor $\beta$ is unknown (as it is implicit bias) and the decision maker selects candidates by ranking them according to $\whi$. Then Kleinberg and Raghavan \cite{kleinberg18} show that, under a well-defined condition (that roughly qualifies scenarios where the bias is large), the Rooney rule improves in expectation the utility of the selection (measured as the sum of true qualities of candidates selected for interview). This result contradicts conventional wisdom that fairness considerations in a selection process are at odds with the utility of the selection process. Rather, it formalizes the intuition that, in the presence of strong implicit bias (which makes it hard to compare candidates across groups), considering the best candidates across a diverse set of groups not only improves fairness but it also has a positive effect on utility.

\paragraph{\textbf{The phenomenon of implicit variance and its role in discrimination.}}

In this paper, we identify and analyze a fundamentally different source of discrimination in selection problems than implicit bias. Even in the absence of implicit bias in a decision maker's estimate of candidates' quality, the estimates may differ between the different groups in their \emph{variance}---that is, the decision maker's ability to precisely estimate a candidate's quality may depend on the candidate's group. There are at least two main reasons for group-dependent variances in practice. The first arises from \emph{candidates}: different groups of candidates may exhibit different variability when their quality is estimated through a given test. For instance, students of different genders have been observed to show different variability on certain test scores \cite{gender_variability_baye16,O'Dea18a}. The second arises from the \emph{decision makers}: decision makers might have different levels of experience\footnote{Or different amounts of data in case of algorithmic decision making.} judging candidates from different groups and consequently, their ability to precisely assess the quality of candidates belonging to different groups might be different. For instance, when hiring top executives, one may have less experience in evaluating the performance of female candidates because there have been fewer women in those positions in the past (in France for instance, there was only one woman CEO amongst the top-40 companies in 2016-2020). The quality estimate's variance might also change from one decision maker to another. For example, in college admissions, recruiters might be able to judge candidates from schools in their own country more accurately than those from international schools.

We term the above issue with quality estimates `\textbf{\emph{implicit variance}}' as decision makers are often unaware of their group-dependent variances. We posit that implicit variance is an omnipresent and fundamental feature affecting selection problems (including in algorithmic decision making). Indeed, having different variances for the different groups is mostly inevitable and hardly fixable, while estimating these variances can be a difficult task (and variance is indeed ignored in many algorithms).  
In this paper, we model the implicit variance phenomenon by assuming that the decision maker sees of an estimate of the quality of a candidate $\whi$ that is equal to the candidate's true latent quality $\wi$ plus an additive noise whose variance depends on the group of the candidate.\footnote{This noise may be a property of the decision maker getting a noisy perception of the candidate's quality or a property of the candidate (i.e., the variability in the candidate's performance).}  In this situation, a natural baseline decision maker to consider is the \emph{group oblivious} selection algorithm that simply selects the candidates with the highest estimated quality, irrespective of their group, to maximize the selection utility. The group oblivious selection algorithm represents not only a decision maker unaware of the implicit variance in their estimates, but also a decision maker determined to \emph{not use} group information.\footnote{If a decision maker knows the group-dependent variances, then they could use the variances together with group information of the candidates to optimize utility, see the Bayesian-optimal algorithm below.} 

Unfortunately, our analysis shows that in the presence of implicit variance, group oblivious selection can lead to underrepresentation of groups with lower-variance quality estimates compared to groups with higher-variance quality estimates. One natural way to address this representation inequality would be to adopt fairness mechanisms proposed to address discrimination in selection such as the ones discussed above; but this poses the same question that was investigated by Kleinberg and Raghavan \cite{kleinberg18} in the case of implicit bias: \emph{what is the effect of fairness mechanisms on the quality of a selection in the presence of implicit variance?}

\paragraph{\textbf{Our model and overview of our results.}}
To answer this question, we propose a simple model of implicit variance with two groups $A$ and $B$: for each candidate $i$, the decision maker gets a quality estimate $\whi = \wi + \sigma_{G_i} \varepsilon_i$, where $G_i$ is the group to which the candidate belongs and $\varepsilon_i$ is a standard normal random variable. The estimator is unbiased but has a variance $\sigma_{G_i}^2$ that depends on the candidate's group. We assume that the true quality comes from a group-independent distribution---assumed normal in our analytical results. In the one-stage selection problem, the decision maker then selects a fraction $\alpha_1$ (called selection budget) of the candidates.

Using this model, we first observe that, for any selection budget $\alpha_1\neq 1/2$, the group oblivious selection algorithm (our baseline without fairness mechanism) leads to a smaller selection rate---i.e., to underrepresentation---for one of the two groups: the low-variance group if $\alpha_1<1/2$ (the most common case) and the high-variance group if $\alpha_1>1/2$. 
Then, we investigate how the utility of the group oblivious baseline is affected when imposing a fairness mechanism. Specifically, we study a generalization of the $\ff$-rule that we call \emph{$\gamma$-rule}, which imposes that the selection rate for a given group is at least $\gamma$ times that of the other group for some parameter $\gamma \in [0, 1]$. This includes both the $\ff$-rule ($\gamma=0.8$) and demographic parity ($\gamma=1$) as special cases.

Our main result shows that for the one-stage selection problem with any selection budget, beyond giving a more fair representation, demographic parity strictly improves the selection utility (measured by the expected quality of a selected candidate). Moreover, the $\gamma$-rule with any $\gamma < 1$ yields a utility strictly lower than demographic parity but still weakly higher than the group-oblivious algorithm. We then consider a two-stage selection process. There, a pre-selection is first made based on the quality estimates. Then the true quality is observed for each pre-selected candidate and the selection is refined to meet a lower second-stage budget $\alpha_2$. This can model hiring decisions where one first makes a short list based on CV and then refines the selection after interview; or grant selection processes that often happen in two stages with only an extended abstract at first stage and then a full proposal at the second stage. In this two-stage selection process, we show that demographic parity strictly improves the selection quality if the first-stage budget is close enough to the second-stage budget (i.e., one does not interview too many more candidates than slots available) or if the first-stage budget is large; but that it may hurt in between---although much less than the gain in the other regimes. As above, the same holds (but weakly) for the $\gamma$-rule with other values of $\gamma$. Finally, through numerical simulations, we show that our analytical results can be extended in particular to cases where the latent quality distribution is not normal.

Overall, our results show that fairness mechanisms can increase utility in selection problems with implicit variance and freed of any bias. In practical scenarios, one may (at least currently) encounter both implicit bias and implicit variance in decision making, and we do not claim that our main takeaway generalizes to those situations. Similarly, if the baseline decision maker is not group oblivious, fairness mechanisms may not increase utility. 
Finally we remark that in selection problems with implicit variance, the key characteristic of a group is high- \emph{vs} low-variance, not minority \emph{vs} majority. In such problems, the group oblivious algorithm often overrepresents the high-variance group. If the high-variance group corresponds to a minority of candidates, this may seem counter-intuitive. We stress however that ($i$) this corresponds to a group oblivious baseline without bias (which might not match certain practical scenarios); and ($ii$) our model \emph{does not require} that the higher-variance group corresponds to a minority group. We argue that the opposite case is equally interesting in practice (in particular when the implicit variance arises from the candidates).

\paragraph{\textbf{Related works}}

There is an abundant literature on fairness in machine learning, in particular on (one-stage) classification, that tackles the question of how to learn a classifier while enforcing some fairness notion in the outcome \cite{Pedreshi08a,Hardt:2016,Zafar17c,Zafar17a,Chouldechova17a,Corbett-Davies:2017,Lipton18a,Mathioudakis19a}. In this literature, fairness is usually seen as a constraint that reduces the classifier's accuracy and the fairness-accuracy tradeoff is analyzed. In contrast, in our work, we examine selection problems in which fairness can improve utility. Selection also differs from classification by the presence of selection budgets (i.e., maximal number of class-1 predictions), which changes the problem significantly.

The problem of selection is considered in \cite{kleinberg18} under the presence of implicit bias \cite{implicit_bias_greenwald06}. In their work, the authors study the Rooney rule \cite{rooney_rule_collins07} as a fairness mechanism and show that under certain conditions, it improves the quality of selection. An extension of the Rooney rule is studied under a similar model in \cite{celis20}, where the authors investigate the ranking problem (of which the selection problem can be seen as a special case) also in the presence of implicit bias and obtain similar results. In both papers, simple mathematical results expressing conditions under which the Rooney rule improves utility are obtained in the limit regime where the number of candidates is very large. We use the same limit regime in our work but, in contrast, we do not consider that there is implicit bias and introduce instead the notion of implicit variance to capture the difference in precision of the quality estimate for different groups. Although our model can easily be extended to incorporate implicit bias as well, we purposely restrict it to the simplest possible form of implicit variance so as to show its effect on the selection problem independently of bias.
Implicit bias, or simply bias (possibly from an algorithm trained on biased data) in the evaluation of candidates quality is certainly a primary factor of discrimination; but it is also one that may reasonably be fixable through the use of algorithms combined with appropriate debiasing techniques and ground truth data \cite{Raghavan20a} (e.g., by learning fair representations of data \cite{zemel13, locatello19}). 

In our work, we also consider the $\ff$-rule \cite{holzer00} (or rather an extension of it that we call the $\gamma$-rule and that includes demographic parity) rather than the Rooney rule. The main difference between the two is that the $\ff$-rule imposes a constraint on the \emph{fraction} of selected candidates from the underrepresented group whereas the Rooney or its extension in \cite{celis20} imposes a constraint on the \emph{number} of selected candidates from the underrepresented group.

The two aforementioned papers \cite{celis20,kleinberg18} essentially analyze one-stage selection, whereby the utility is the sum of the utilities of selected candidates. In practice, many selection problems are done in two stages where the first-stage selection is refined in a second stage with access to finer information. A few recent papers specifically analyze the two-stage setting. In \cite{kannan19}, a two-stage college admission and hiring procedure is considered. The authors study if certain fair policies---irrelevance of group membership and equal opportunity---can be satisfied. They show that it is possible to satisfy both if the college grades are not reported to the employer, but there are settings where these fairness conditions cannot be satisfied even in isolation.
In \cite{emelianov19}, the authors study an optimal multistage selection. They propose two fairness notions for the multistage setting: local (per stage) and global (final stage) fairness and study their trade-off, the price of local fairness. They show that this price is bounded and is dependent on the timing of the sensitive feature revelation (e.g., gender): the later the sensitive feature is given, the lower the price is.  The model that the authors consider assumes an optimal selection procedure that requires knowledge of the data generation procedure.  In contrast, in our work, we study selection procedures that do not require knowledge of the data distribution and analyze when imposing a fairness mechanism at the first stage leads to an improvement of the utility of the selection after the second stage.

Fairness mechanisms has been a subject of a number of studies in the economic literature, in particular from empirical data. In \cite{coate93}, the authors study whether affirmative actions can remove stereotypes about a particular population. In \cite{aff_action_balafoutas12}, an empirical  evaluation of the influence of affirmative actions in recruiting is performed and it is shown that it can bring quality together with equality. Our work complements those studies through a theoretical model that leads to analytical results on the effect of fairness mechanisms in the presence of implicit variance. The model of observed quality in this paper is similar in spirit to the model of statistical discrimination in \cite{phelps72,Aigner77a}, but in contrast to those works we assume that the decision maker does not know the distribution parameters and the discrimination happens for a different reason.


\section{The Model}
\label{section: model}

We consider the following scenario. A decision maker is given $n$ candidates, out of which a subset is selected. Each candidate $i\in\{1, \cdots, n\}$ is endowed with a true latent quality $\wi$.  The qualities $\wi$ are drawn \emph{i.i.d.} from an underlying probability distribution that is group-independent but unknown to the decision maker.  For our analytical results and unless otherwise explicitly specified, we assume that this distribution is a normal distribution of mean $\mu_W$ and variance $\sigma_W^2>0$. The goal of the decision maker is to maximize the expected quality of the selected candidates: $\esp{\sum_{i\in\text{selection}} \wi}$.

\subsection{The implicit variance model}

We assume that the set of candidates can be partitioned in two groups: group $A$ and group $B$. There are $n_A$ candidates from group $A$ and $n_B=n-n_A$ candidates from group $B$.  We refer to them as $A$-candidates and $B$-candidates. When making the selection decision, the decision maker has access to an unbiased estimator of the true quality. We denote the estimator of the quality of candidate $i$ by $\whi$. We assume that the variance of the estimator depends on the group: for a candidate $i$ that belongs to group $G_i\in\{A,B\}$, its estimated quality is
\begin{equation}
  \label{eq: x1g}
  \whi = \left\{
    \begin{array}{ll}
      \wi + \sigma_{A} \cdot \varepsilon_i & \text{ if $i$ is an $A$-candidate,}\\
      \wi + \sigma_{B} \cdot \varepsilon_i &\text{ if $i$ is a $B$-candidate,}
    \end{array}
    \right.
\end{equation}
where $\varepsilon_i$ is a centered random variable from $\N(0,
1)$---the standard normal distribution, of mean $0$ and variance $1$. The variables $\varepsilon_i$ are assumed \emph{i.i.d.}.

Without loss of generality, in the rest of the paper we assume that $\sxa^2 > \sxb^2$, that is that the quality estimate has higher variance for group A.  We note that none of our results require that $A$ is also the minority group, i.e., that $n_A < n_B$. It is possible to think of scenarios where the minority group has lower variance in cases where the difference in variances arises from the candidates. In the example of students tests scores (see Section~\ref{section: introduction}), for instance, one could potentially observe that males have greater variability in topics in which they are in majority. If the difference in variances arises from the decision-maker and has a statistical nature, however, the minority group (for past selections) will have higher variance due to less data points to build the estimator.  Throughout the paper, we refer to this difference in variance as \emph{implicit variance} because we assume that the decision maker does not know the variance of the estimators: it is an unconscious phenomenon. Also, note that a different estimator, having access to different data, will have different implicit variances for the two groups.
Fig.~\ref{fig: pdf illustration} illustrates the resulting distribution of quality estimates for groups $A$ and $B$ for different distributions of the true latent quality (by abuse of notation, we denote by
$\wha$ a variable that has the same distribution as
$\wi + \sigma_{A} \varepsilon_i$ and similarly for $B$).


\begin{figure}
  \begin{subfigure}{0.35\textwidth}
  \begin{tikzpicture}
    \node (img)  {\includegraphics[scale=0.45]{./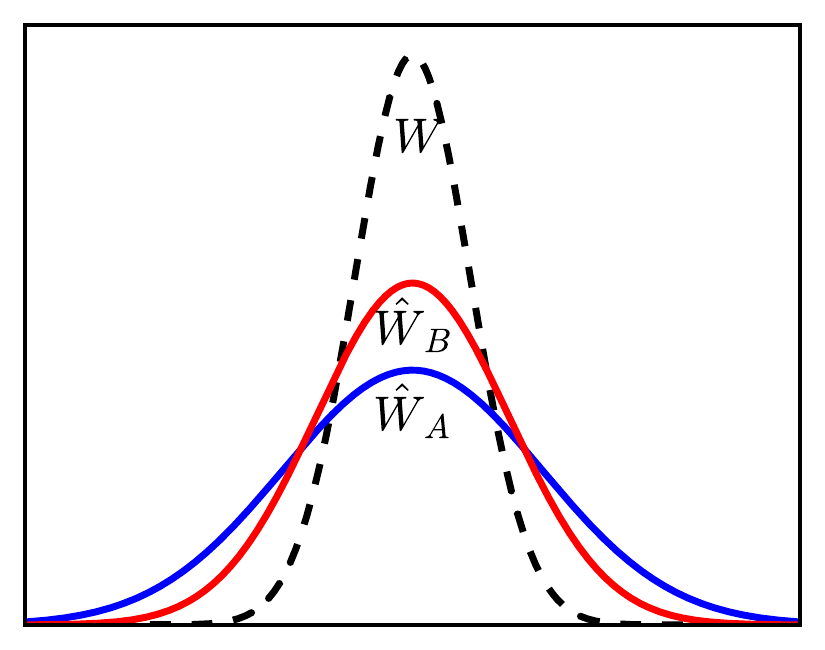}};
    \node[below=of img, node distance=0cm, yshift=1.3cm] {\tiny (Estimated) quality};
    \node[left=of img, node distance=0cm, rotate=90, anchor=center,yshift=-1cm] {\tiny PDF};
   \end{tikzpicture}
   \vspace{-.2cm}\caption{$\w \sim \N$}
  \end{subfigure}
  \begin{subfigure}{0.35\textwidth}
      \begin{tikzpicture}
        \node (img)  {\includegraphics[scale=0.45]{./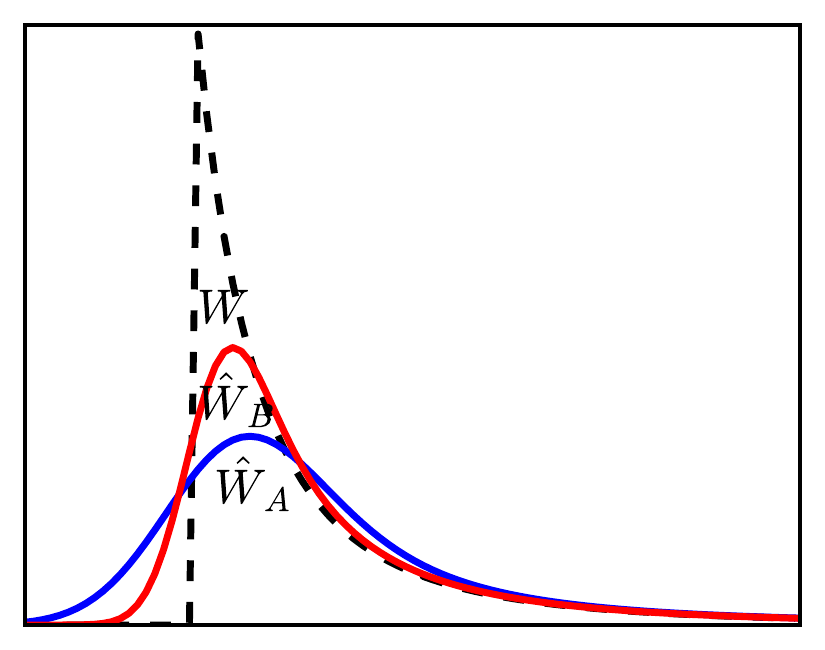}};
        \node[below=of img, node distance=0cm, yshift=1.3cm] {\tiny (Estimated) quality};
        \node[left=of img, node distance=0cm, rotate=90, anchor=center,yshift=-1cm] {\tiny PDF};
      \end{tikzpicture}
      \vspace{-.2cm}\caption{$\w \sim \Pow$}
    \end{subfigure}
\vspace{-.4cm}\caption{Probability density function of the true latent quality $\w$
  and the estimated quality $\wh$.}
\vspace{-.4cm}
\label{fig: pdf illustration}
\end{figure}

\subsection{One-stage and two-stage selection problems}

Candidates are selected in a one-stage or two-stage process.  In the
first stage, for each of the candidate $i$, the decision maker
observes the quality estimate $\whi$ as well as its group
$G_i\in\{A,B\}$. They then select $m_1$ candidates out of those
$n$. The selection process stops here if the selection has only one
stage. 
If the process has two stages, the decision maker observes the true
quality $\wi$ of the $m_1$ candidates that were selected at first
stage. They select the $m_2$ candidates having the largest $\wi$ among
those $m_1$ candidates.  We remark that when $m_1=m_2$, the two-stage
selection process reduces to the one-stage selection process. 

%

    

\subsection{Selection algorithms}
\label{sec:selection_algorithms}

The goal of the decision maker is to maximize the expected quality of the selected candidates. Since the true quality $\wi$ is observed at second stage, the natural selection at second stage is to sort the candidates according to their quality $\wi$ and to select the best $m_2$.  For the first stage, the situation is different because the decision maker only has access to an estimator of the quality $\whi$ whose variance depends on the candidate's group. In this paper, we compare the following first-stage selection algorithms to show that fairness mechanisms can improve the selection quality.

\paragraph{Group Oblivious}
As $\whi$ is an unbiased estimator of $\wi$ whose variance is unknown to the decision maker, the most natural first-stage selection rule is to sort the candidates according to $\whi$ irrespective of their group and to keep the best $m_1$. We call this the \emph{group oblivious} selection. This selection algorithm might be seen as a fair treatment because the selection does not use the group label. Yet, because of our implicit variance model, this might lead to discriminate against high-variance or low-variance groups. We will discuss that in Lemma~\ref{lemma: selections for dp and greedy 1 stage}.

\paragraph{$\ff$-rule and $\gamma$-rule} Let us denote by $\pxa$ (and
$\pxb$) the fraction of the $A$-candidates (and $B$-candidates) that
are selected at first stage.  The group oblivious selection might
favor one group or the other, that is $\pxa\gg\pxb$ or $\pxb\gg\pxa$.
To mitigate the inequality, the decision maker can introduce selection
quotas. One example is the well studied Rooney rule \cite{kleinberg18}
that states that at least one candidate from each group should be
chosen. Another example is the four-fifths rule \cite{holzer00} that
imposes that $\pxa \ge \frac45 \pxb$ and $\pxb \ge \frac45\pxa$. 
In this paper, we consider a generalization of this rule that is
parametrized by $\gamma \in [0, 1]$. We say that a selection is $\gamma$-fair if
\begin{align}
  \pxa \ge \gamma \pxb \qquad\text{ and }\qquad
  \pxb \ge \gamma \pxa.
  \label{eq:beta-fair}
\end{align}
In practice, this is easily done by sorting $A$- and $B$-candidates
separately.  Indeed, by using that $\pxa \ge \gamma \pxb$ and
$\pxb \ge \gamma \pxa$, the total number of candidates selected at
first stage satisfies
\begin{align*}
  m_1=\pxa n_A+\pxb n_B &\le \pxa n_A + \frac{\pxa}{\gamma}n_B
                          = \pxa n_A\frac{n_A\gamma +
                          n_B}{n_A\gamma}, %
                          \quad \textrm{ and }\quad 
  m_1
                          \le \pxb n_B\frac{n_B\gamma + n_A}{n_B\gamma}.
\end{align*}
This means that to satisfy~\eqref{eq:beta-fair}, the selection
algorithm picks the best estimated
$\ceil{m_1\gamma n_A/(n_B+\gamma n_A)}$ $A$-candidates and  the
best estimated $\ceil{m_1\gamma n_B/(n_A+\gamma n_B)}$ $B$-candidates. Then, the remaining positions are filled with the best estimated among the remaining candidates, irrespective of their group.
Note that when $\gamma=0$, the $\gamma$-fair algorithm
reduces to the group oblivious algorithm. 

\paragraph{Demographic parity} When $\gamma=1$, the
$\gamma$-fair algorithm corresponds to the classical notion of
\emph{demographic parity} \cite{Zafar17a} that mandates that the selection rates be equal across different groups.  Note that
because $n_A$, $n_B$ and $m_1$ are integer variables, it might be
impossible to satisfy the constraints in~\eqref{eq:beta-fair}
when $\gamma$ is too close to $1$. In such a case, we say that an
algorithm is $\gamma$-fair if the constraint \eqref{eq:beta-fair} is
satisfied up to one candidate.

\paragraph{Bayesian optimal} We will compare the performance of the
above selection algorithms with the performance of what we call a
\emph{Bayesian-optimal} selection algorithm. This algorithm is an
idealized selection algorithm that knows all the parameters of the
problem (the quality distribution and the variances $\sxa^2$ and
$\sxb^2$) and chooses the candidates in order to maximize the expected
quality at final stage.  Recall that our model assumes that the
decision maker does not know the quality distribution $\w$ nor the
variances of the estimator $\sxa^2$ and $\sxb^2$. Hence, this
Bayesian-optimal algorithm is not implementable in practice; we will
 use it as an upper bound of what could be achieved.

Note that while the previous algorithms are oblivious to what happens
in the second stage, the Bayesian-optimal selection does depend on the
fraction of candidates selected at second stage, $\alpha_2$. 

\subsection{Simplification of the selection problem for large $n$ and $m$}

In the remainder of the paper, we study the selection problem when the
number of candidates is large.  That is, we assume that there exist fixed
fractions $0\le\pa\le1$ and $0<\alpha_2\le\alpha_1\le1$ such that 
\begin{align*}
  n_A = \floor{p_An}\qquad m_1 = \floor{\alpha_1n} \qquad m_2 =
  \floor{\alpha_2n},
\end{align*}
and let $n$ grow. 
Our theoretical results are obtained in the limit where $n$ goes to
infinity (similarly to \cite{kleinberg18,celis20}). In Section~\ref{ssec:approximation} we will show
numerically that our results for $n=\infty$ continue to hold for
finite selection sizes. Note that $p_A$ represents the fraction of $A$-candidates in the population while $\alpha_1$ and $\alpha_2$ represent the global selection ratios (or budgets) at first and second stage respectively.

As we prove below, characterizing the performance of a selection
problem is simpler when the number of candidates $n$ is infinite
because a selection algorithm is characterized by three selection
thresholds $\twha$, $\twhb$ and $\tw$ as follows. The thresholds
$\twhg$'s correspond to the first stage selection and the threshold
$\tw$ corresponds to the second stage selection: if $i$ is a
$G_i$-candidate, they will be selected at first stage if
$\wh_i \ge \twhgi$. They will pass both stages if $\wh_i \ge \twhgi$
and $\w_i \ge \tw$. For given thresholds $\twha$, $\twhb$ and $\tw$,
we denote the expected utility of the corresponding selection by
$\V(\twha, \twhb, \tw)$:
\begin{align*}
  \V(\twha, \twhb, \tw) = %
  \esp{\w_i\,|\, \wh_i \geq \twhgi, \w_i \geq \tw}.
\end{align*}
For these given thresholds $\twha, \twhb, \tw$, the fractions of
selected candidates are $\Pb(\wh_i\ge\twhgi)$ after the first stage,
and $\Pb(\w_i\ge\tw , \wh_i\ge\twhgi)$ after the second stage.  Using
the above definition, we denote by $\Q(\pxa)$ the expected utility of
a threshold-type selection algorithm that selects $A$-candidates with
probability $\pxa$ at the first stage and that satisfies the selection size constraints in
expectation:
\begin{align}
  \label{eq: u}
  \Q(\pxa) &=\V(\twha, \twhb, \tw),\text{ where $\twha, \twhb, \tw$
             are such that}
             \left\{\begin{array}{l}
                      \Pb(\whi \ge \twha\,|\,G_i=A) = \pxa,\\
                      \Pb(\whi \ge \twhgi) = \ax, \\
                      \Pb(\whi \ge \twhgi, \wi \ge \tw) = \ay.
                    \end{array}
  \right.
\end{align}
Note that combining the first two constraints in \eqref{eq: u} immediately gives that such an algorithm selects $B$-candidates with probability $\pxb = (\ax - x_A p_a)/(1-p_A)$. Hence it is sufficient to describe the algorithm with $x_A$.


The above definition of expected quality is not directly applicable to
the selection algorithms presented in
Section~\ref{sec:selection_algorithms} because those algorithms are
defined neither in terms of fraction of selected candidate nor in
terms of thresholds. In fact, for a given selection algorithm, the
fractions of selected $A$- and $B$-candidates depend on the
realizations of the random variables representing the quality ($\wi$)
and the estimated quality ($\whi$). As a result, these fractions ($\pxa$
and $\pxb$) are random variables. For instance, if because of
randomness the $A$-candidates are evaluated much worse than
the $B$-candidates, then $\pxa$ will be $0$ for the group
oblivious algorithm.
The following proposition shows that when the population is large,
these random fluctuations disappear. It shows that, when $n$ is large,
the performance of the various algorithms are simply characterized by
$\pxa$.
\begin{proposition}
  \label{th:n_infinity}
  For any problem parameters and any of the first stage selection
  algorithms presented in Section~\ref{sec:selection_algorithms},
  \begin{enumerate}
  \item there exists a deterministic fraction $x_A\in[0,1]$ such that
    the fraction of $A$-candidates that are selected by the algorithm
    converges (in probability) to $x_A$ as $n$ grows;
  \item there exist deterministic thresholds $\twha, \twhb, \tw$ such
    that the expected utility of this algorithm converges to
    $\V(\twha, \twhb, \tw)$.
  \end{enumerate}
\end{proposition}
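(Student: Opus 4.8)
The plan is to show that each of the four first-stage algorithms of Section~\ref{sec:selection_algorithms} is, up to a vanishing correction as $n\to\infty$, a \emph{threshold rule} on the estimated quality with group-dependent thresholds, and then to pass to the limit via the law of large numbers. Fix the problem parameters. The triples $(G_i,\wi,\whi)$ are \emph{i.i.d.}, and conditionally on $\{G_i=A\}$ the pair $(\wi,\whi)$ is a non-degenerate bivariate normal vector (since $\sq^2>0$ and $\sxa^2>0$), and likewise for $B$. In particular the conditional law of $\whi$ has an everywhere-positive density, so the conditional cdf $F_A$ of $\whi$ given $\{G_i=A\}$, the analogous $F_B$, and the population mixture cdf $H=\pa F_A+\pb F_B$ of $\whi$ are all continuous and strictly increasing. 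This smoothness is exactly what makes the population quantiles below unique and ensures that every empirical quantile appearing in the argument converges almost surely to the corresponding population quantile (consistency of sample quantiles).

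\emph{First stage.} For the group oblivious algorithm, an $A$-candidate is selected iff $\whi$ is among the $\floor{\ax n}$ largest estimates in the whole sample, i.e.\ iff $\whi$ exceeds the corresponding sample quantile; since $H$ is strictly increasing at $H^{-1}(1-\ax)$, this sample quantile converges a.s.\ to $\twha:=H^{-1}(1-\ax)$, and the same value serves as $\twhb$. For the $\gamma$-rule, the algorithm first keeps the best $\ceil{m_1\gamma n_A/(n_B+\gamma n_A)}$ $A$-candidates and the best $\ceil{m_1\gamma n_B/(n_A+\gamma n_B)}$ $B$-candidates, then fills the remaining slots with the best remaining estimates irrespective of group; since the obliviously-filled candidates are exactly those ranked below their own group's quota, the selection is again a threshold rule, with $\twha$ the minimum of the group-$A$ quota threshold and the (common) oblivious-fill threshold, and symmetrically for $\twhb$. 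Each of the three integer counts involved is asymptotically a fixed multiple of $n$, so each of these thresholds again converges a.s.\ to a population quantile; the case $\gamma=1$ (demographic parity, with the ``up to one candidate'' slack, which perturbs the counts by $O(1)$ and hence does not move the limits) is covered. For the Bayesian-optimal algorithm, a standard exchange argument shows that within each group the optimal first-stage selection is a threshold on $\whi$: if $i$ and $j$ are in the same group with $\whi>\wh_j$ while $j$ is selected and $i$ is not, then swapping them cannot decrease the expected final quality, because conditionally on the observed estimates $\wi$ first-order stochastically dominates $\w_j$ (the Gaussian posteriors differ only by a positive location shift) and ``the sum of the $\floor{\ay n}$ largest true qualities in a set'' is monotone under the natural coupling. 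Hence the Bayesian-optimal selection is determined by a single integer $k_A$ together with the top-$k_A$ rule in group $A$ and the top-$(m_1-k_A)$ rule in group $B$; maximizing over $k_A\in\{0,\dots,m_1\}$, one checks that $k_A/n$ converges (the limiting objective is a continuous function of this fraction), which once more yields convergent thresholds.

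\emph{From thresholds to $\pxa$ and the utility.} Once $\twha$ and $\twhb$ are shown to converge, applying the Glivenko--Cantelli theorem to the conditional empirical cdf of $\whi$ given $\{G_i=A\}$ yields that the fraction of $A$-candidates selected converges in probability to the deterministic number $x_A:=\Proba{\whi\ge\twha\mid G_i=A}\in[0,1]$, which is part~(1). For part~(2), the second stage reveals $\wi$ for the $m_1$ first-stage survivors and keeps the $\floor{\ay n}$ with the largest $\wi$. The fraction of all candidates satisfying $\whi\ge\twhgi$ and $\wi\ge t$ is a continuous, strictly decreasing function of $t$ equal to $\ax$ as $t\to-\infty$ and to $0$ as $t\to+\infty$, so the true quality of the $\floor{\ay n}$-th best survivor converges a.s.\ to the unique $\tw$ with $\Proba{\whi\ge\twhgi,\,\wi\ge\tw}=\ay$. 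Finally, the average true quality of the finally selected candidates equals $\frac{1}{\floor{\ay n}}\sum_{i=1}^n \wi\,\mathbbm{1}\{\whi\ge\twhgi,\,\wi\ge\tw\}+o(1)$, and by the law of large numbers this converges to $\esp{\wi\mid \whi\ge\twhgi,\,\wi\ge\tw}=\V(\twha,\twhb,\tw)$; here one uses that $\wi$ is Gaussian (hence has moments of all orders and is uniformly integrable over $o(1)$ perturbations of the thresholds) to control the error terms.

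The step I expect to be the main obstacle is making the ``threshold rule up to a vanishing correction'' reduction fully rigorous for the $\gamma$-rule and the Bayesian-optimal algorithm, where the selection is specified through several interacting order statistics --- the two group quotas together with the oblivious fill in the first case, and the optimized split $k_A$ in the second --- so one must verify that the thresholds extracted are genuinely monotone in the estimates, that the various integer roundings (the ceilings, the one-candidate slack of demographic parity, the boundary of the oblivious fill) do not perturb the limiting quantiles, and, for the Bayesian-optimal algorithm, that the limiting objective in $k_A$ has a well-defined maximizer. Everything else reduces to routine applications of the uniform strong law of large numbers and the consistency of sample quantiles for the \emph{i.i.d.} sample $(G_i,\wi,\whi)$, which are available precisely because the underlying normal laws are non-degenerate and smooth.
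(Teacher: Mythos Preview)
Your approach is correct and coincides with the paper's: both reduce each algorithm to a threshold rule and then invoke the Glivenko--Cantelli theorem (and the resulting consistency of sample quantiles) to pass to the limit. The paper's proof is only a short sketch, so your version actually supplies substantially more detail than the original---in particular your per-algorithm treatment, the exchange argument for the Bayesian-optimal rule, and the explicit identification of the second-stage threshold $\tw$ are all left implicit there.
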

\begin{proof}[Sketch of Proof]
  The above result is essentially a direct consequence of the law of
  large numbers.  By the Glivenko-Cantelli theorem, the empirical
  distribution of the estimated qualities of the $G$-candidates
  converges to the distribution of $\whg$ as $n\to\infty$. This shows
  that taking the best $\floor{n\pa\pxa}$ $A$-candidates or taking all
  $A$-candidates above the $\pxa$-quantile of the distribution $\wha$
  is asymptotically equivalent as $n\to\infty$. The same argument can
  be used to show that second stage selection is asymptotically
  equivalent to selecting all candidates above a given threshold.
\end{proof}


In what follows, we will study directly the model when the population
$n$ is large.  We denote by respectively $\pxagreedy$, $\pxafair$,
$\pxadp$, and $\pxaopt$ the asymptotic fraction of $A$-candidates that
are selected at first stage for the group oblivious, the $\gamma$-fair,
the demographic parity and the Bayesian-optimal algorithms. Moreover, we will
denote the expected performance of the various algorithms by
\begin{align*}
  \Qgreedy = \Q(\pxagreedy); %
  \qquad \Qdp = \Q(\pxadp); %
  \qquad \Qfair = \Q(\pxafair); %
  \qquad \Qopt = \Q(\pxaopt). %
\end{align*}
For a finite $n$, characterizing precisely the utility of an
algorithm like group oblivious is computationally difficult due to the
correlations between the selection of the different agents.
Proposition~\ref{th:n_infinity} allows us to greatly simplify the
study of the performance of the various heuristics because the
function $\Q$, defined in Equation~\eqref{eq: u}, depends only on one
parameter $\pxa$, and is simpler to characterize than the expectation
over a finite number of candidates $n$.

\paragraph*{\textbf{Summary of the notation}}

To simplify the exposition, and since candidates are interchangeable,
in the remainder of the paper, we will omit the subscript $i$ and
write directly $\w$ and $\wh$ for the quality and the estimated quality
of a given candidate.  Our notation is summarized in
Table~\ref{table: notation}.



\begin{table}[ht]
  \caption{Summary of notation.}
  \label{table: notation}
  \vspace{-.2cm}  \begin{tabular}{ll}
    \toprule
    $\pg$ & fraction of $G$-candidates, $n_G/n$, for $G\in\{A,B\}$\\
    $\sq^2$ & variance of latent quality $\w$\\
    $\sxg^2$ & implicit variance of estimated quality $\wh$ given
               group $G\in\{A,B\}$\\
    \midrule
    $\pxg$ & fraction of $G$-candidates that are selected at first stage\\
    $\twhg$ & threshold for $G$-candidates at first stage\\
    $\tw$ & threshold at second stage\\
    $\ax$, $\ay$ & fractions of candidates selected at stages 1 and 2 (or budgets)\\
    $\Q$ & expected selection quality\\
    \midrule
    $\phi$, $\Phi$, $\Phi^c$, $\Phi^{-1}$ & pdf, cdf, complementary cdf and quantile of $\N(0,1)$\\
    \bottomrule
\end{tabular}
\end{table}





\section{One-Stage Selection}
\label{section: one stage}

We start our study with the simplest case: one-stage selection.  In this
setting $\ay=\ax$, which means that the decision maker observes the values
of $\wh$ and makes a selection that is final, i.e., no further
subselection is performed after observing the exact values of
$\w$. This type of selection is the most commonly studied in the
related work, see for instance \cite{kleinberg18,celis20}. 

In this section, we compare the first-stage selection algorithms
that we introduced in Section~\ref{sec:selection_algorithms}. The main
result of this section is that using any $\gamma$-fair algorithm
increases the performance compared to using the group-oblivious
algorithm. 
To show this, we start by describing key properties of the one-stage algorithms in Section \ref{ssec:one-stage:policies}. The main result is then
stated and proven in Section \ref{ssec:one-stage:improve}. 

\subsection{Behavior of the different first-stage selection algorithms}
\label{ssec:one-stage:policies}

\paragraph*{Group oblivious} This algorithm sorts the candidates
according to the estimates $\wh_i$'s and selects the candidates having
the highest estimates, regardless of their group. The decision maker
does not distinguish between $A$- and $B$-candidates and
treats them equally.  This corresponds to applying the same threshold
for the two groups, i.e., $\twha = \twhb$.  Due to different variances
of estimation, this might lead to selecting more people from one group
or the other.

Recall that we assume that $\sxa>\sxb$. This means that the distribution of $\wh$ has more extremes for an $A$-candidate.  Thus, if the selection size is small, more $A$-candidates will be selected compared to $B$-candidates because the probability to estimate an $A$-candidate as a ``genius'' is higher than for $B$-candidates. In contrast, if the selection size is large, the chance of estimating an $A$-candidate as bad is larger than for $B$-candidates, in which case the decision maker selects a lower fraction of $A$-candidates. This can be formally stated as follows.
\begin{lemma}[Group oblivious Selection]
  \label{lemma: selections for dp and greedy 1 stage}
  When using the group oblivious selection algorithm, the fractions
  $\pxggreedy$ of selected candidates from each group satisfy:
    \begin{enumerate}
    \item if $\ax < 1/2$, then $\pxagreedy > \pxbgreedy$;
    \item if $\ax > 1/2$, then $\pxagreedy < \pxbgreedy$;
    \item if $\ax = 1/2$, then $\pxagreedy = \pxbgreedy=1/2$.
    \end{enumerate}
\end{lemma}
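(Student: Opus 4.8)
The plan is to work in the $n=\infty$ regime justified by Proposition~\ref{th:n_infinity}, where the group-oblivious algorithm applies a common threshold $\twh := \twha = \twhb$ to both groups. The fractions selected are then $\pxagreedy = \Proba{\wha \ge \twh}$ and $\pxbgreedy = \Proba{\whb \ge \twh}$, while the budget constraint reads $\pa\,\Proba{\wha \ge \twh} + (1-\pa)\,\Proba{\whb \ge \twh} = \ax$. Since $\wha \sim \N(\mq, \sq^2 + \sxa^2)$ and $\whb \sim \N(\mq, \sq^2 + \sxb^2)$, both are normal with the \emph{same mean} $\mq$ but different variances, with $\sq^2+\sxa^2 > \sq^2+\sxb^2$ because $\sxa > \sxb$. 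So the whole lemma reduces to a statement about comparing upper-tail probabilities of two centered-at-the-same-point normals with different variances, evaluated at a common threshold $\twh$ that is itself pinned down by the convex-combination constraint.

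**The key comparison.**

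First I would reduce to the symmetric case by shifting: let $t = \twh - \mq$. Then $\pxagreedy = \Phi^c(t/\sqrt{\sq^2+\sxa^2})$ and $\pxbgreedy = \Phi^c(t/\sqrt{\sq^2+\sxb^2})$. The crucial elementary fact is: for $t>0$, a larger variance gives a larger upper-tail probability (the argument $t/\sigma$ is smaller, and $\Phi^c$ is decreasing), so $\pxagreedy > \pxbgreedy$; for $t<0$ the inequality reverses; for $t=0$ both equal $1/2$. So the entire content of the lemma is determining the sign of $t$, i.e., whether the common oblivious threshold lies above or below the common mean $\mq$. Now observe that $\ax = \pa\pxagreedy + (1-\pa)\pxbgreedy$ is a convex combination of $\pxagreedy$ and $\pxbgreedy$, hence lies strictly between them (when they differ); and in the symmetric case $t>0$ both are $<1/2$, forcing $\ax<1/2$, while $t<0$ forces $\ax>1/2$, and $t=0$ gives $\ax=1/2$. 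This is exactly the contrapositive of what we want, so I would just run it forward: if $\ax<1/2$, then we cannot have $t\le 0$ (that would force $\ax\ge 1/2$), so $t>0$, hence $\pxagreedy>\pxbgreedy$; symmetrically for $\ax>1/2$; and $\ax=1/2$ forces $t=0$ by the same dichotomy, giving case~(3) including the explicit value $1/2$.

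**Main obstacle and loose ends.**

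The main thing to be careful about — rather than a deep obstacle — is the well-definedness and uniqueness of the oblivious threshold $t$: the map $t \mapsto \pa\,\Phi^c(t/\sqrt{\sq^2+\sxa^2}) + (1-\pa)\,\Phi^c(t/\sqrt{\sq^2+\sxb^2})$ is continuous and strictly decreasing from $1$ to $0$ as $t$ ranges over $\mathbb{R}$, so for each $\ax\in(0,1)$ there is a unique $t$, and its sign is determined by comparing the value at $t=0$ (which is exactly $1/2$, since both $\Phi^c(0)=1/2$) against $\ax$. The edge cases $\ax\in\{0,1\}$ are degenerate ($t=\pm\infty$, both fractions equal $\ax$) and can be noted as trivial or excluded as uninteresting. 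One should also remark that if $\pa=0$ or $\pa=1$ one group is empty and the statement is vacuous or trivial for that group; assuming $0<\pa<1$ (implicit in having two groups) the strict inequalities in (1)–(2) follow from strict monotonicity of $\Phi^c$ together with $\sxa \ne \sxb$. I expect the write-up to be short: essentially the shift, the monotonicity-in-$\sigma$ observation for fixed-sign argument, and the convex-combination squeeze.
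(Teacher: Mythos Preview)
Your proposal is correct and follows essentially the same route as the paper: express the oblivious selection as a common threshold $\twh$, write $\pxggreedy=\Phi^c\big((\twh-\mq)/\sqrt{\sq^2+\sxg^2}\big)$, and use that for $\twh>\mq$ (resp.\ $<\mq$) the larger-variance group has the larger (resp.\ smaller) tail, with the sign of $\twh-\mq$ determined by $\ax$ versus $1/2$. Your treatment is in fact slightly more explicit than the paper's in justifying the link between $\ax$ and the sign of $t=\twh-\mq$ via strict monotonicity of $t\mapsto \pa\Phi^c(t/\sqrt{\sq^2+\sxa^2})+(1-\pa)\Phi^c(t/\sqrt{\sq^2+\sxb^2})$ and its value $1/2$ at $t=0$.
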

\begin{proof}[Proof Sketch]
  The selection fraction $\pxg$ is by definition the probability that
  a $G$-candidate has a value of $\wh$ larger than $\twhg$. As
  $\sxa>\sxb$, one has
  $\sProba{\wh\ge \mu_W+x|A} > \sProba{\wh\ge \mu_W+x|B}$ and
  $\sProba{\wh\ge \mu_W-x|A} < \sProba{\wh\ge \mu_W-x|B}$ for any
  $x>0$.  This implies that $\pxagreedy{} > \pxbgreedy{}$ if $\alpha_1 < 1/2$ and
  $\pxagreedy{} < \pxbgreedy{}$ if $\alpha_1>1/2$.  Since the distribution of quality
  $\w$ is symmetric, then one has $\pxagreedy=\pxbgreedy=1/2$ when
  $\alpha_1=1/2$. The proof is detailed in Appendix~\ref{proof: selections for dp and greedy 1 stage}.
\end{proof}

\paragraph*{$\gamma$-rule} An algorithm that satisfies the $\gamma$-rule
is a variant of the group-oblivious algorithm that guarantees a minimum
selection rate for each group.  A straightforward computation shows
that an algorithm satisfies both the $\gamma$-rule conditions of~\eqref{eq:beta-fair} and the size constraint
$\pxa\pa+\pxb(1-\pa)=\alpha_1$ if and only if
$\pxa\in[\frac{\alpha_1}{p_A+p_B/\gamma},
\frac{\alpha_1}{p_A+p_B\gamma}]$. By our definition of
Section~\ref{sec:selection_algorithms}, the $\gamma$-rule selection algorithm is the algorithm that
is the closest to the group-oblivious algorithm while respecting the $\gamma$-rule. Thus, the fraction of $A$-candidates  selected at
first stage of the $\gamma$-rule selection algorithm is
\begin{align*}
  \pxafair = \min\left(\frac{\alpha_1}{p_A+p_B\gamma},
  \max\left(\pxagreedy,\frac{\alpha_1}{p_A+p_B/\gamma}\right)\right).
\end{align*}

\paragraph*{Demographic parity} Contrary to the group oblivious selection algorithm, this algorithm applies different thresholds for the two groups to preserve the demographics. Since the selection size is fixed to $\ax$ and both selection fractions are equal, then $\pxadp = \pxbdp=\ax$.  Thus, if the selection size $\ax$ is small, the demographic parity selection algorithm increases the threshold $\twha$ for $A$-candidates by removing extremes and filling other places by $B$-candidates. If the selection size is large, it lowers the threshold for $A$-candidates by supporting the ones that were removed from the selection.

Formally, using the properties of normal distributions, we can write that for a fixed $\ax$, a $G$-candidate with estimate $\wh$ is selected if $\wh\ge \sqrt{\sq^2 + \sxg^2}\Phi^{-1}(1-\ax) + \mq$. Recall that in our regime with $n=\infty$, the thresholds $\twha$ and $\twhb$ are the estimates of quality of the worst $A$- and $B$-candidates that are selected. Then, with the demographic parity selection algorithm, we have $(\twha-\mu_W)/(\twhb-\mu_W)=\sqrt{\sxa^2 +\sq^2}/\sqrt{\sxb^2 +\sq^2}$.

\paragraph*{Bayesian-optimal selection} Since $(\w, \wh)$ is a bivariate
normal, then using the property of conditional expectation, the
expected quality of candidate given its estimate is
\begin{equation}
  \E(\w|\wh=\hat w) = \frac{\sq^2}{\sxg^2 + \sq^2}\hat w + \left(1 - \frac{\sq^2}{\sxg^2 + \sq^2} \right) \mq.
\label{eq:expected w given w hat}
\end{equation}
The above expression corresponds to the expectation of the posterior
distribution of $\w$ given $\wh$. This expectation is $\wh$ if $\sigma_G^2=0$ (i.e., there is no noise)
and converges to $\mu_W$ when $\sigma^2_G\to\infty$. 

In one-stage selection, the Bayesian-optimal selection algorithm picks the candidates with the highest posterior quality expectation \eqref{eq:expected w given w hat}. This means that the thresholds $\twhg$, which are the estimations of quality of the worst selected $A$- and $B$-candidates, satisfy $(\twha-\mu_W)/(\twhb-\mu_W)=(\sxa^2 +\sq^2)/(\sxb^2 +\sq^2)$.  This shows that when the number of candidates to be selected is small ($\alpha_1<1/2$), the Bayesian-optimal selection is even more conservative than the demographic parity selection: it imposes more strict constraints on the high variance group $A$ if selection is for candidates having estimated qualities higher than the median. We state the results in the following lemma.
 
\begin{lemma}[Bayesian-optimal Selection]
  \label{lemma: optimal selection}
  When using the Bayesian-optimal one-stage selection algorithm, the fraction
  of $A$- and $B$-candidates selected, $\pxaopt$ and
  $\pxbopt$, satisfy
  \begin{enumerate}
  \item if $\ax < 1/2$, then $\pxaopt < \pxbopt$;
  \item if $\ax > 1/2$, then $\pxaopt > \pxbopt$;
  \item if $\ax = 1/2$, then $\pxaopt = \pxbopt=1/2$.
  \end{enumerate}
\end{lemma}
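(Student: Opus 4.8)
The plan is to read off the Bayesian-optimal selection from the shrinkage formula~\eqref{eq:expected w given w hat} and from the threshold characterization stated just before the lemma, and then reduce everything to the strict monotonicity of $\Phi^c$. Write $\tau_G^2 := \sxg^2 + \sq^2$ for the variance of $\whi$ conditional on $G_i = G$; since $\sxa^2 > \sxb^2$ and $\sq^2 > 0$, we have $\tau_A > \tau_B > 0$. The one-stage Bayesian-optimal algorithm maximizes $\esp{\sum_{i\in\mathrm{selection}}\wi}$, so by linearity of expectation (and, in the $n=\infty$ regime, Proposition~\ref{th:n_infinity}) it selects the candidates with the largest posterior mean $\E(\wi\mid\whi,G_i)$, which by~\eqref{eq:expected w given w hat} equals $\frac{\sq^2}{\tau_G^2}(\whi-\mq)+\mq$, an increasing affine function of $\whi$. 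Hence the selection is of threshold type, and as noted in the text the group thresholds satisfy $(\twha-\mq)/(\twhb-\mq) = \tau_A^2/\tau_B^2$. Conditional on $G_i=G$, $\whi$ is normal with mean $\mq$ and variance $\tau_G^2$, so the fraction of selected $G$-candidates is
\begin{align*}
  \pxgopt = \Pb(\whi \ge \twhg \mid G_i = G) = \Phi^c\!\left(\frac{\twhg-\mq}{\tau_G}\right), \qquad G\in\{A,B\}.
\end{align*}

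Next I would determine on which side of the median $\mq$ the thresholds lie, using the first-stage budget constraint $p_A\pxaopt + p_B\pxbopt = \alpha_1$ (assume $0<p_A<1$ and, to avoid trivialities, $0<\alpha_1<1$). By the ratio above, $\twha-\mq$ and $\twhb-\mq$ have the same sign. If both are $\ge 0$, then $\pxaopt,\pxbopt\le\Phi^c(0)=1/2$ and hence $\alpha_1\le 1/2$, with equality iff $\twha=\twhb=\mq$; symmetrically, if both are $\le 0$ then $\alpha_1\ge 1/2$. Therefore $\alpha_1<1/2 \iff \twhg>\mq$ for both groups, $\alpha_1>1/2 \iff \twhg<\mq$ for both groups, and $\alpha_1=1/2 \iff \twha=\twhb=\mq$ — the last case immediately gives $\pxaopt=\pxbopt=1/2$, which is item~3.

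Finally I would compare the two selected fractions through the arguments of $\Phi^c$. From $\twha-\mq = (\tau_A^2/\tau_B^2)(\twhb-\mq)$,
\begin{align*}
  \frac{\twha-\mq}{\tau_A} \;=\; \frac{\tau_A}{\tau_B}\cdot\frac{\twhb-\mq}{\tau_B}, \qquad\text{with } \frac{\tau_A}{\tau_B}>1.
\end{align*}
When $\alpha_1<1/2$ we have $(\twhb-\mq)/\tau_B>0$, so $(\twha-\mq)/\tau_A > (\twhb-\mq)/\tau_B$, and the strict monotonicity of $\Phi^c$ yields $\pxaopt<\pxbopt$ (item~1). When $\alpha_1>1/2$ the factor $(\twhb-\mq)/\tau_B$ is negative, the inequality reverses, and $\pxaopt>\pxbopt$ (item~2). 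I do not anticipate a genuinely hard step: the reduction to a threshold rule and the threshold ratio are already supplied before the lemma, so the only point that needs a little care is the sign analysis of the thresholds from the budget constraint in the second step, since the entire comparison hinges on whether the common threshold lies above or below the median $\mq$.
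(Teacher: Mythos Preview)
Your proof is correct, but it takes a different route from the paper's formal proof. The paper (Appendix~B.2) derives the threshold characterization $\twhg-\mq=C(\sq^2+\sxg^2)$ by first computing $\frac{d\Q}{d\pxa}$ explicitly (Lemma~\ref{lemma: q' 1 stage}), proving strict concavity of $\Q$ via a Harris/FKG inequality, and then reading off the optimum from the first-order condition. You instead exploit the posterior-mean formula~\eqref{eq:expected w given w hat} and the threshold ratio that the text states just before the lemma, reducing everything to monotonicity of $\Phi^c$ and a sign analysis of the budget constraint. Once the threshold ratio is in hand, your comparison step and the paper's are essentially identical. What your route buys is elementarity: no derivatives, no correlation inequality. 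What the paper's route buys is the strict concavity of $\pxa\mapsto\Q(\pxa)$, which you do not establish but which is the crucial ingredient in the proof of Theorem~\ref{theorem: 1 stage} (and its two-stage analogue); so the heavier machinery is not redundant, it is simply reused downstream.
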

\begin{proof}[Proof Sketch]
  To study the Bayesian-optimal selection, we express the utility $\Q$ as a function of $\pxa$. The key difficulty is to compute the first and second derivatives of $\Q$ with respect to $\pxa$. Then, using Harris inequality \cite{fortuin71}, we show that $\Q$ is strictly concave. Hence the root of the equation $\frac{d\Q}{d\pxa} = 0$ gives the optimal threshold $\twhg$.  The expression for the thresholds allows us to compare the values of $\pxaopt$ and $\pxbopt$ depending on selection size $\ax$. The full proof is given in Appendix~\ref{proof: optimal selection}.
\end{proof}

The behavior of these algorithms is illustrated in Fig.~\ref{fig: selection one stage}. We plot the result for a small selection size ($\alpha_1<0.5$); the situation for $\alpha_1>0.5$ is symmetric. We observe that when the selection size is small, the group oblivious algorithm selects many more $A$-candidates than $B$-candidates. In order to select the same fraction of candidates from both groups, the demographic parity selection algorithm uses a higher threshold for the high-variance candidates. The $\gamma$-rule selection algorithm (for $\gamma=0.8$, which corresponds to the $\ff$ rule) is between the group oblivious and demographic parity selection algorithms. We also observe that the Bayesian-optimal selection algorithm is even more conservative and selects fewer $A$-candidates than the other algorithms.


\begin{figure}[ht]
  \begin{subfigure}{0.24\textwidth}
  \begin{tikzpicture}
    \node (img)  {\includegraphics[scale=0.45]{./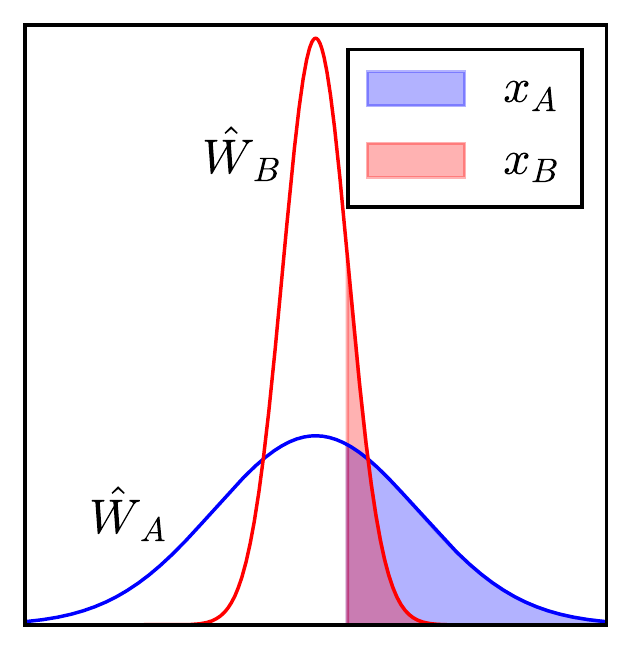}};
    \node[below=of img, node distance=0cm, yshift=1.3cm] {\tiny Estimated quality};
    \node[left=of img, node distance=0cm, rotate=90, anchor=center,yshift=-1cm] {\tiny PDF};
   \end{tikzpicture}
   \vspace{-.6cm}\caption{Group Oblivious} 
  \end{subfigure}
\hspace{0.1cm}
\begin{subfigure}{0.24\textwidth}
  \begin{tikzpicture}
    \node (img)  {\includegraphics[scale=0.45]{./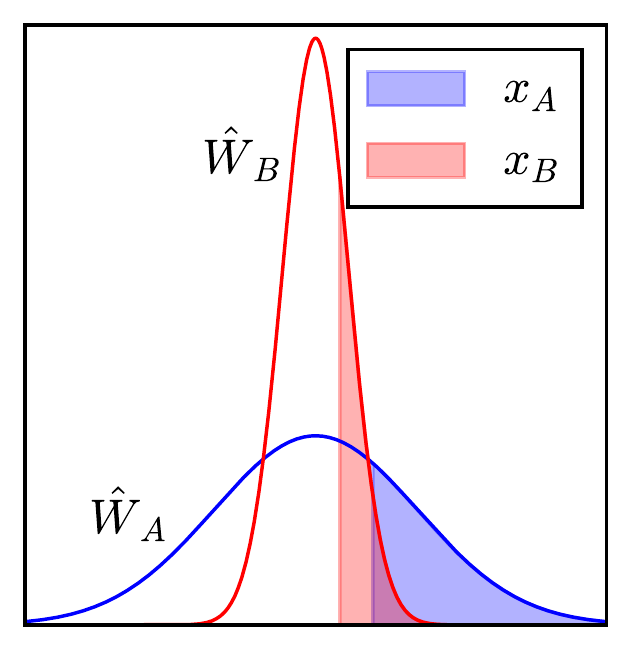}};
    \node[below=of img, node distance=0cm, yshift=1.3cm] {\tiny Estimated quality};
  \end{tikzpicture}
  \vspace{-.2cm}\caption{$\ff$-rule} 
\end{subfigure}
  \begin{subfigure}{0.24\textwidth}
    \begin{tikzpicture}
      \node (img)  {\includegraphics[scale=0.45]{./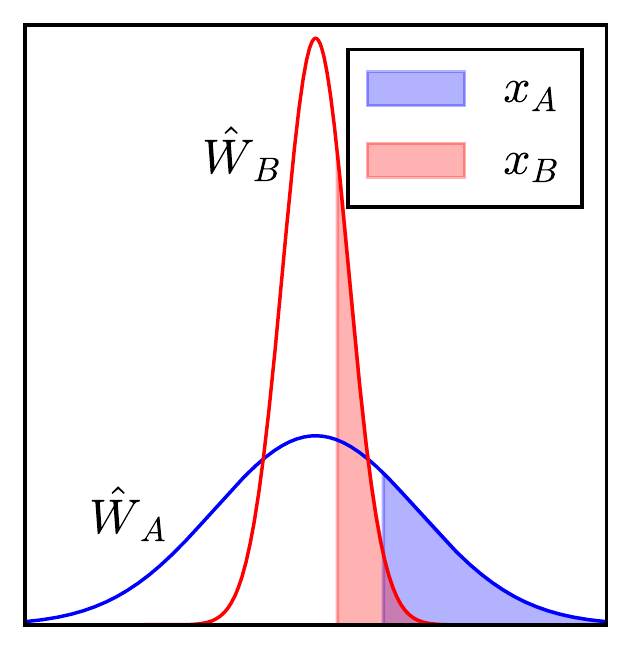}};
      \node[below=of img, node distance=0cm, yshift=1.3cm] {\tiny Estimated quality};
    \end{tikzpicture}
    \vspace{-.2cm}\caption{Demographic Parity} 
  \end{subfigure}
    \begin{subfigure}{0.24\textwidth}
      \begin{tikzpicture}
        \node (img)  {\includegraphics[scale=0.45]{./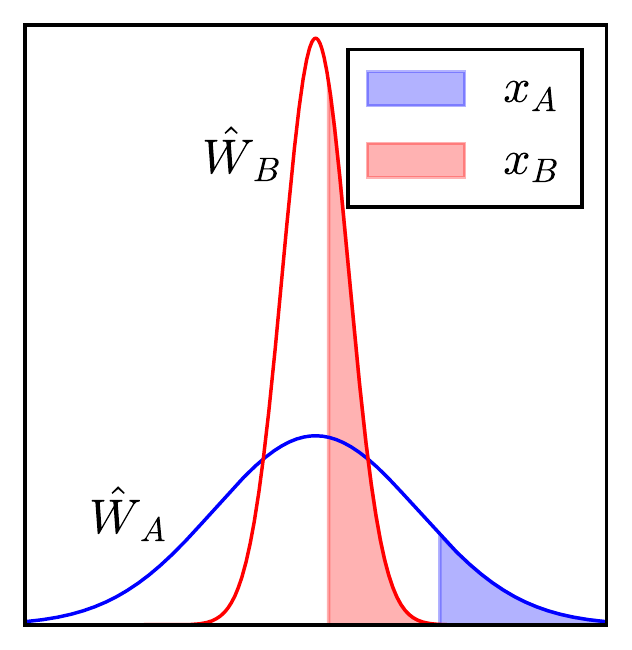}};
        \node[below=of img, node distance=0cm, yshift=1.3cm] {\tiny Estimated quality};
      \end{tikzpicture}
      \vspace{-.2cm}\caption{One-stage Optimal} 
      \end{subfigure}
\vspace{-.2cm}\caption{Illustration of the fraction of selected candidates at first-stage ($\alpha_1<0.5$).}
      \label{fig: selection one stage}
\end{figure}

In Fig.~\ref{fig: utility 1 stage}, we plot the expected utility $\Q(\pxa)$ as a function of the fraction of selected $A$-candidates. We compare different values of selection sizes $\alpha_1\in\{0.15,0.35,0.6,0.8\}$. As stated in the proof of Lemma~\ref{lemma: optimal selection}, this function is concave and its maximum is attained in $\pxaopt$. In this figure, we also plot the $\gamma$-fair regions for $\gamma = 0.8$ ($\ff$ rule). To satisfy the $\gamma$-rule \eqref{eq:beta-fair}, the fraction $\pxagreedy$ selected by the group-oblivious selection algorithm should suffer a correction such that the corresponding selection fraction will lie on the boundary of the $\gamma$-region, if it is outside of it. 
 
\begin{figure}[ht]
  \centering
  \begin{subfigure}{.24\textwidth}
    \includegraphics[width=\linewidth]{./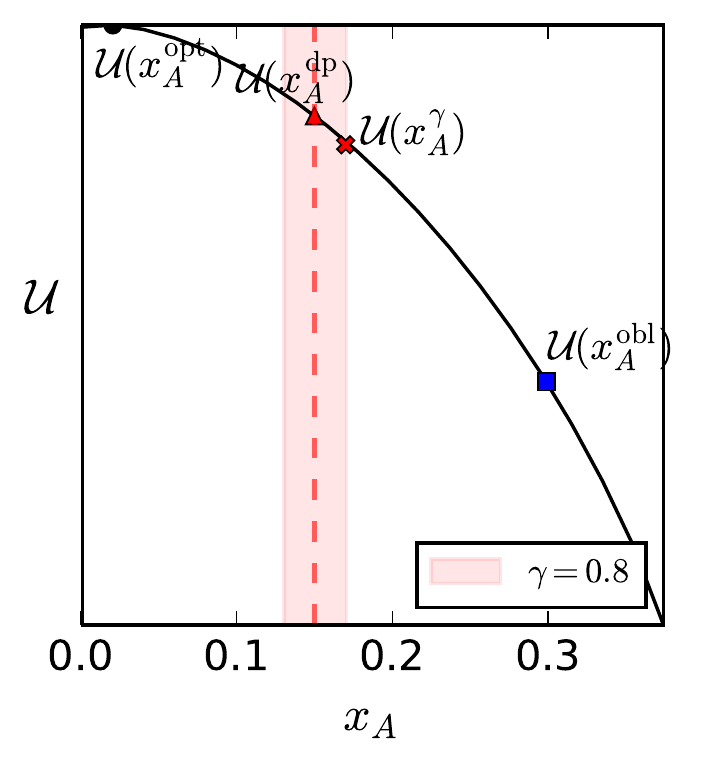} 
    \vspace{-.6cm}\caption{$\alpha_1=0.15$}
    \label{fig: utility 1 stage_a}
  \end{subfigure}
  \begin{subfigure}{.24\textwidth}
    \includegraphics[width=\linewidth]{./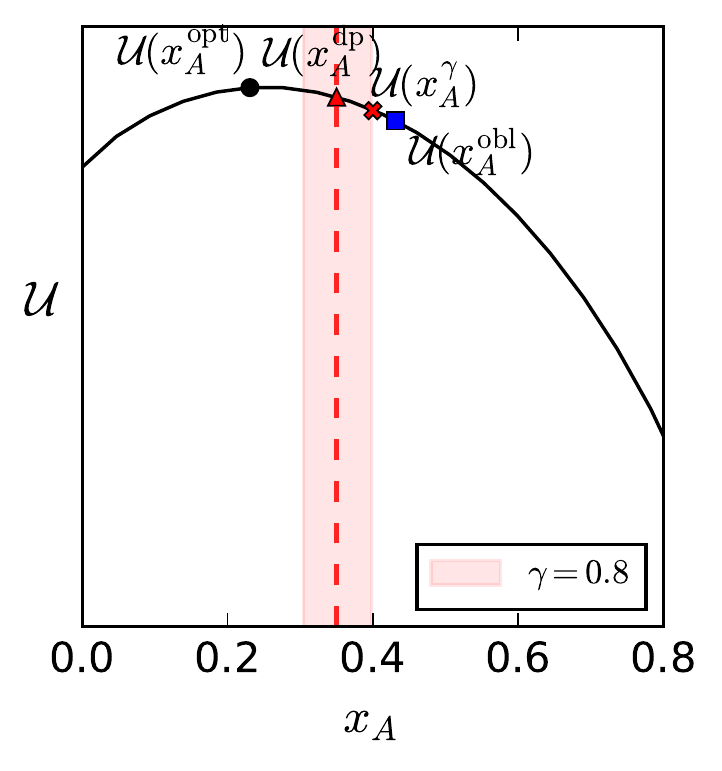} 
    \vspace{-.6cm}\caption{$\alpha_1=0.35$}
    \label{fig: utility 1 stage_b}
  \end{subfigure}
  \begin{subfigure}{.24\textwidth}
    \includegraphics[width=\linewidth]{./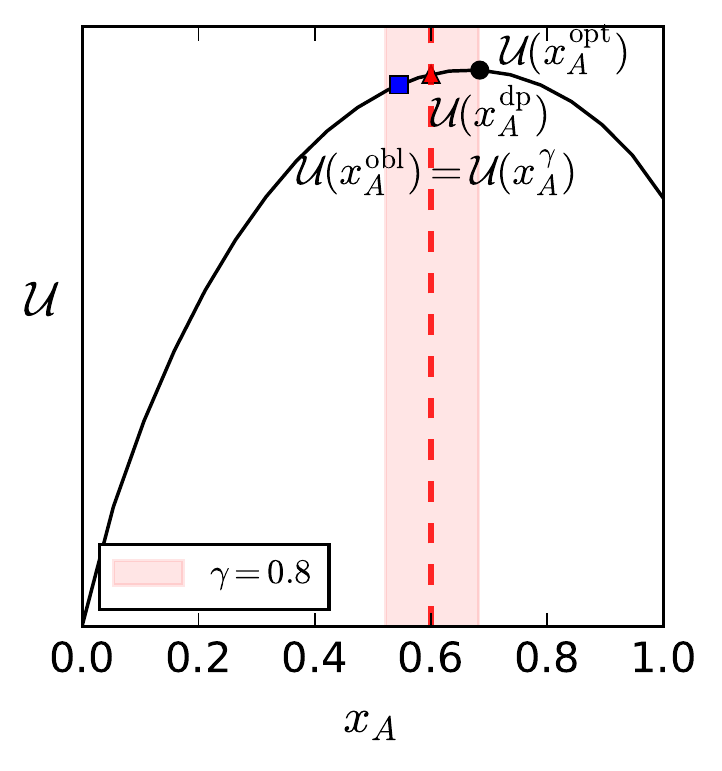}
    \vspace{-.6cm}\caption{$\alpha_1=0.6$}
    \label{fig: utility 1 stage_c}
  \end{subfigure}
  \begin{subfigure}{.24\textwidth}
    \includegraphics[width=\linewidth]{./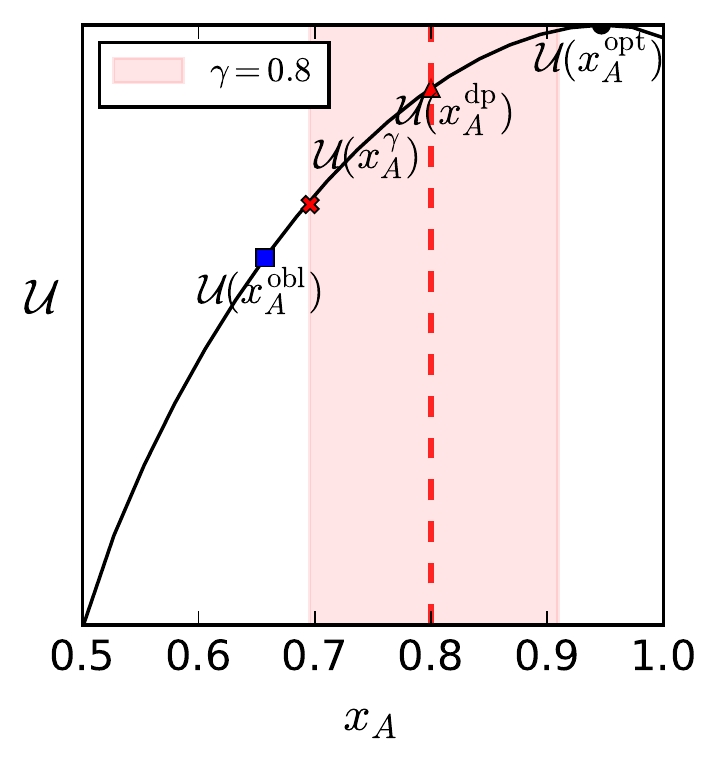} 
    \vspace{-.6cm}\caption{$\alpha_1=0.8$}
    \label{fig: utility 1 stage_d}
  \end{subfigure}
  \vspace{-.2cm}\caption{\textbf{One-stage selection}: Utility $\Q(\pxa)$ as a
    function of selection probability $\pxa$. The function is concave
    and attains its maximum in $\pxaopt$. The parameters are $\mu_W=1$,
    $\sigma_W=1$, $\pa=0.4$, $\sxa=3$, and $\sxb=0.2$.}
  \label{fig: utility 1 stage}
\end{figure}

\subsection{Fairness mechanisms improve selection quality}
\label{ssec:one-stage:improve}

In our work we ask the following question: can fairness mechanisms be beneficial to the utility of a selection process?  A positive answer is given in the following theorem where we show that, for one-stage selection, demographic parity always increases the average quality of a selection compared to the group oblivious algorithm.  Our result also shows that the softer $\gamma$-rule lies between the group oblivious and demographic parity selection algorithms.

\begin{theorem}[Fairness Mechanisms Improves Selection Utility]
  \label{theorem: 1 stage}
  For the one-stage selection problem and for any $\ax\ne1/2$, the
  demographic 
  parity selection algorithm provides a larger utility than a $\gamma$-rule selection algorithm with $\gamma<1$, which in
  turns provides a larger utility than the group-oblivious selection algorithm:
  \begin{align*}
    \Qdp>\Qfair\ge\Qgreedy.
  \end{align*}
  The above inequality is an equality when $\alpha_1=1/2$.
\end{theorem}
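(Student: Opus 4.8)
The plan is to leverage the strict concavity of the utility function $\Q(\pxa)$, which was established in the proof of Lemma~\ref{lemma: optimal selection}. Strict concavity makes $\Q$ unimodal: it is strictly increasing up to the maximizer $\pxaopt$ and strictly decreasing afterwards, over the feasible range of $\pxa$. Consequently, comparing the three algorithms reduces to locating the three fractions $\pxagreedy$, $\pxadp=\ax$, and $\pxafair$ relative to $\pxaopt$, and then invoking monotonicity of $\Q$ on the appropriate branch.

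First I would pin down the order of $\pxaopt$, $\pxadp$, and $\pxagreedy$. Since $\ax=\pxagreedy p_A+\pxbgreedy p_B$ is a convex combination of $\pxagreedy$ and $\pxbgreedy$ with weights $p_A,p_B$, Lemma~\ref{lemma: selections for dp and greedy 1 stage} gives, for $\ax<1/2$, that $\pxbgreedy<\ax<\pxagreedy$; the analogous identity together with Lemma~\ref{lemma: optimal selection} gives $\pxaopt<\ax<\pxbopt$. Hence $\pxaopt<\pxadp<\pxagreedy$, so the whole interval $[\pxadp,\pxagreedy]$ lies on the strictly decreasing branch $(\pxaopt,\cdot)$ of $\Q$, yielding $\Qdp>\Qgreedy$. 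For $\ax>1/2$ all inequalities flip, giving $\pxagreedy<\pxadp<\pxaopt$ and $[\pxagreedy,\pxadp]$ on the strictly increasing branch, again yielding $\Qdp>\Qgreedy$.

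Next I would place $\pxafair$, which (by the formula derived in Section~\ref{ssec:one-stage:policies}) is the projection of $\pxagreedy$ onto the $\gamma$-feasible interval $\bigl[\tfrac{\ax}{p_A+p_B/\gamma},\tfrac{\ax}{p_A+p_B\gamma}\bigr]$. For $\gamma<1$ this interval contains $\ax=\pxadp$ strictly in its interior, because $p_A+p_B/\gamma>1>p_A+p_B\gamma$. When $\ax<1/2$ and $\pxagreedy>\ax$, projecting gives $\pxafair=\min\bigl(\pxagreedy,\tfrac{\ax}{p_A+p_B\gamma}\bigr)\in(\ax,\pxagreedy]$; symmetrically, when $\ax>1/2$ one gets $\pxafair\in[\pxagreedy,\ax)$. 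Either way $\pxafair$ sits strictly between $\pxadp$ and $\pxagreedy$ or coincides with $\pxagreedy$, and it lies on the same monotone branch as both; hence $\Q(\pxafair)<\Q(\pxadp)$ and $\Q(\pxafair)\ge\Q(\pxagreedy)$, i.e.\ $\Qdp>\Qfair\ge\Qgreedy$. Finally, for $\ax=1/2$, Lemmas~\ref{lemma: selections for dp and greedy 1 stage} and~\ref{lemma: optimal selection} give $\pxagreedy=\pxaopt=\pxadp=1/2$, which also lies in the $\gamma$-feasible interval, so $\pxafair=1/2$ and all four utilities coincide.

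The only nontrivial ingredient is the concavity of $\Q$, and that is already available from Lemma~\ref{lemma: optimal selection}; everything else is bookkeeping of which side of $\pxaopt$ each fraction falls on. The point deserving the most care is the strictness of $\pxaopt<\ax<\pxagreedy$ (and its mirror image): this relies on the assumption $\sxa>\sxb$ feeding into the two lemmas and on $0<p_A<1$ so that the convex-combination identity $\pxa p_A+\pxb p_B=\ax$ genuinely separates the two selection rates; the degenerate cases $p_A\in\{0,1\}$ (one group empty) make all algorithms coincide and are immediate.
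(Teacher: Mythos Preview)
Your proposal is correct and follows essentially the same approach as the paper's proof: both order the selection fractions as $\pxaopt<\pxadp<\pxafair\le\pxagreedy$ (for $\ax<1/2$, and the mirror image for $\ax>1/2$) using Lemmas~\ref{lemma: selections for dp and greedy 1 stage} and~\ref{lemma: optimal selection}, and then invoke the strict concavity of $\Q$ established in the proof of Lemma~\ref{lemma: optimal selection}. Your treatment of where $\pxafair$ falls is in fact more explicit than the paper's (which simply asserts the ordering and refers to Fig.~\ref{fig: utility 1 stage}), but the underlying argument is identical.
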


\begin{proof}
  Fig.~\ref{fig: utility 1 stage} is a good illustration of this
  proof.  As the selection size is equal to $\alpha_1$, we have
  $\pxa \pa + \pxb \pb = \ax$. By using the results in Lemma
  \ref{lemma: selections for dp and greedy 1 stage} and Lemma
  \ref{lemma: optimal selection}, this implies that:
  \begin{itemize}
  \item When $\ax<1/2$, we have
    $\pxaopt < \pxadp=\ax < \pxafair \le \pxagreedy$. We observe
    that in Fig.~\ref{fig: utility 1 stage_a}b.
  \item When $\ax>1/2$, we have
    $\pxaopt > \pxadp=\ax > \pxafair \ge \pxagreedy$. We observe
    that in Fig.~\ref{fig: utility 1 stage_c}d.
  \end{itemize}
  The results then follow from the concavity of $\Q$ proven earlier. 
%
\end{proof}

Demographic parity helps the selection utility by reducing the effect of implicit variance, but it is also interesting to see how large this performance gap can be.
In Fig.~\ref{fig: all 1 stage}, we show the obtained utilities $\Q$, the selection fractions $\pxa$ and the gap values $(\Qdp{}-\Qgreedy{})/\Qgreedy{}$ for different budgets $\ax$ from 0.01 to 0.99.  Fig.~\ref{fig: all 1 stage_a} illustrates the utilities corresponding to different selection algorithms.  We observe that demographic parity outperforms group oblivious selection, which corresponds to the result of Theorem \ref{theorem: 1 stage}. We also observe that the utilities of the Bayesian-optimal and demographic parity selections decrease with $\alpha_1$. This is expected because this graph represents the average quality of a candidate: the average quality decreases with the number of selected candidates. What is more surprising is that the behavior of the group oblivious selection algorithm is not monotonous: the expected utility $\Q$ increases when $\alpha_1$ goes from $0.1$ to $0.3$. In fact, when $\alpha_1<0.1$, very few $B$-candidates are selected by the group oblivious algorithm. When $\alpha_1>0.1$, this algorithm selects a few good $B$-candidates which leads to an increased average performance.

In Fig.~\ref{fig: all 1 stage_c} we show the performance gap between group oblivious and demographic parity selection algorithms for different values of $\sxa$ and fixed $\sxb=0.2$, $\sq=1$. The values of $\sxa$ are such that $\sxa/\sxb=k$, $k=1,5,10,15$. We see that the gap is in general larger when the selection size $\ax$ is small. This is due to the fact that as the selection size increases, the selections by the group oblivious and demographic parity algorithms become close. The performance gap is zero when $\alpha_1 = 0.5$ because the selections are exactly the same (due to the symmetry of the underlying quality distribution), but it becomes positive again for larger values of $\alpha_1$. In addition, the larger the implicit variance ratio $\sxa^2/ \sxb^2$, the larger the gain that demographic parity brings.

\begin{figure}
    \centering
    \begin{subfigure}{.28\textwidth}
      \includegraphics[width=\linewidth]{./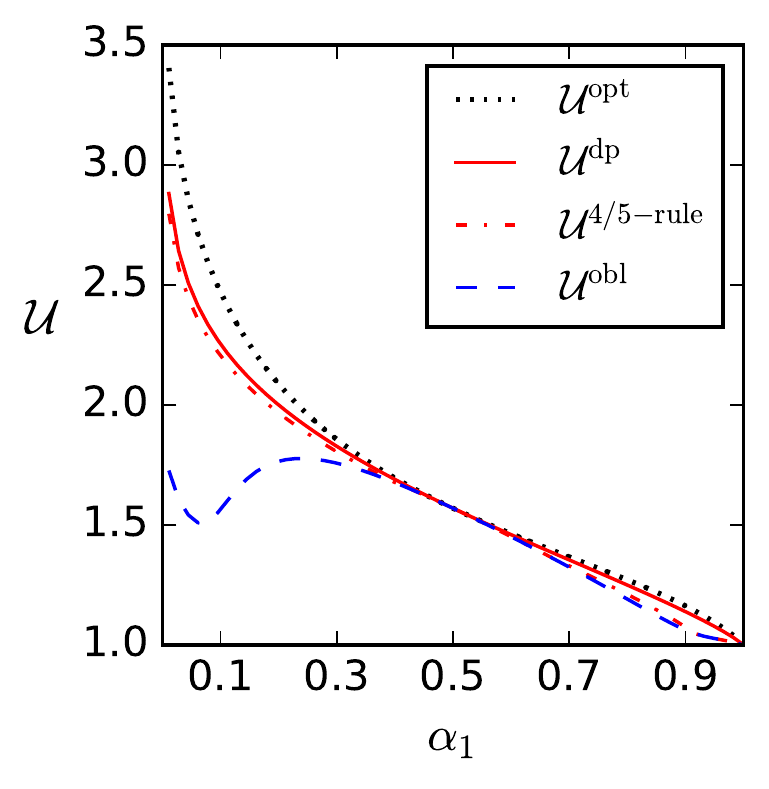}
      \vspace{-.6cm}\caption{Utility}
      \label{fig: all 1 stage_a}
    \end{subfigure}
    \begin{subfigure}{.28\textwidth}
      \includegraphics[width=\linewidth]{./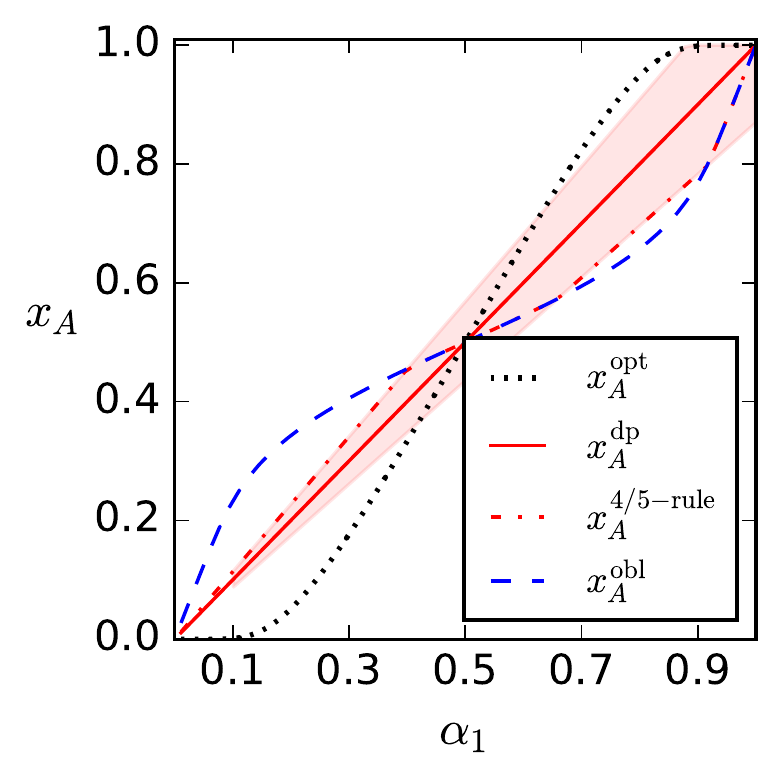}
      \vspace{-.6cm}\caption{Selection fraction}
      \label{fig: all 1 stage_b}
    \end{subfigure}
    \begin{subfigure}{.28\textwidth}
      \includegraphics[width=\linewidth]{./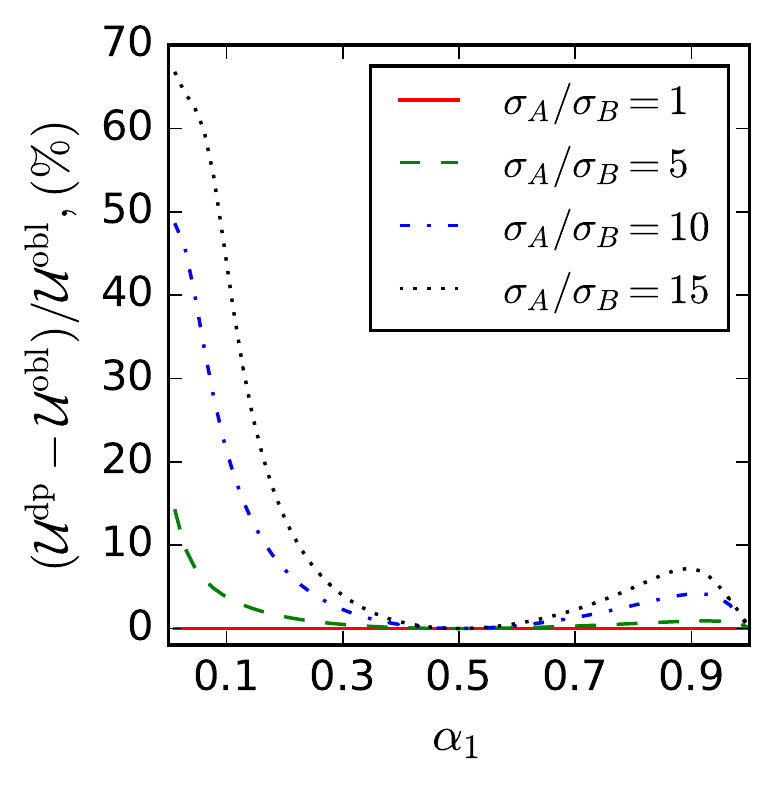}
      \vspace{-.6cm}\caption{Performance gap}
      \label{fig: all 1 stage_c}
    \end{subfigure}
    \vspace{-.2cm}\caption{\textbf{One-stage selection}: Utility $\Q$, selection
      fraction $\pxa$ and performance gap $(\Qdp-\Qgreedy)/ \Qgreedy{}$ for
      different budgets $\ax$. The parameters are $\mq=1$,
      $\sq=1$, $\sxb=0.2$, and $\pa=0.4$; $\sxa=3$ for
      panels~(a,b). \vspace{-2mm}}
    \label{fig: all 1 stage}
\end{figure}




\section{Two-Stage Selection}
\label{section: two stage}

In the previous section, we considered one-stage selection. In reality many processes are in multiple stages, one typical example being hiring. Candidates are first evaluated using tests. After, the examination proceeds with higher accuracy, for instance by performing an interview for every candidate selected at first stage. In our model, first stage results are represented by low accuracy estimates $\wh$ and second stage results are the true values of quality $\w$.  Let us recall that the first-stage fraction of selected candidates is denoted by $\ax$, and the second-stage fraction is $\ay$.

As for the one-stage process, we compare the performance obtained when using at first stage one of the algorithms introduced in Section~\ref{sec:selection_algorithms}. The selection properties of the demographic parity, $\gamma$-rule and group oblivious algorithms are the same as for the one-stage in Section \ref{ssec:one-stage:policies}: group oblivious tends to select a larger fraction of $A$-candidates if the selection size $\ax$ is small and a smaller fraction of $A$-candidates if the selection size is large. Demographic parity and $\gamma$-rule preserve the demographics of candidates during the selection.

For a first-stage selection size $\ax$ close to $\ay$, we expect the two-stage selection to have a behavior similar to one-stage. The behavior for large selection sizes is not obvious, but we expect that both the demographic parity and group oblivious selection algorithms should give similar performance, since there is always a chance to fix the selection at second stage because of the large first stage selection. In fact, the following theorem shows that both for $\alpha_1$ close enough to $\alpha_2$ and for $\alpha_1$ large enough, the demographic parity algorithm leads to a higher average utility than the group-oblivious one. We believe that the case of $\ax$ close enough to $\ay$ is close to what would happen in a number of real selection problems: the proportion of candidates that the decision maker is able to preselect is not too much higher than the total amount of candidates needed.

\begin{theorem}
  \label{theorem: two stage}
  For any problem parameters, there exists $\ax^*$ such that if
  $\ax<\ax^*$ or $\ax>1/2$, then imposing a fairness mechanism at first stage improves the
  utility $\Q$ of the two-stage selection process:
  \begin{align*}
    \Qdp > \Qfair \ge \Qgreedy.
  \end{align*}
    The above inequality is an equality when $\alpha_1=1/2$. 
\end{theorem}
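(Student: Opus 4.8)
The plan is to follow the template of Theorem~\ref{theorem: 1 stage}: reduce the claim to a monotonicity property of the two-stage utility $\Q(\cdot)$, then establish it in each regime. Since the first-stage algorithms are the same as in the one-stage problem, Lemma~\ref{lemma: selections for dp and greedy 1 stage} still applies; combined with the $\gamma$-rule formula of Section~\ref{sec:selection_algorithms} it gives $\pxadp=\ax<\pxafair\le\pxagreedy$ when $\ax<1/2$ and $\pxagreedy\le\pxafair<\pxadp=\ax$ when $\ax>1/2$ (with $\pxafair\neq\ax$ whenever $\gamma<1$), while for $\ax=1/2$ all three fractions equal $1/2$, which yields the stated equality. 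Hence it suffices to show that $\Q$ is strictly decreasing on $[\ax,\pxagreedy]$ for small $\ax$ (the regime $\ax<\ax^*$, which forces $\ax<1/2$) and strictly increasing on $[\pxagreedy,\ax]$ for $\ax>1/2$.

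The key step I would carry out is an exact formula for $\Q'(\pxa)$. Parametrizing a threshold algorithm by $\pxa$ as in~\eqref{eq: u}, the first-stage thresholds $\twha,\twhb$ are functions of $(\pxa,\ax)$, and the second-stage threshold $\tw$ is then fixed by $\ay$. Differentiating the total second-stage utility $\ay\Q=\esp{\w\,\1[\wh\ge\twhgi,\ \w\ge\tw]}$ together with the three constraints of~\eqref{eq: u}, and substituting the threshold derivatives obtained from those constraints, the terms proportional to $d\tw/d\pxa$ should cancel (an envelope-type cancellation: moving $\tw$ only affects candidates with $\w\approx\tw$, whose marginal contribution to utility is $\tw$ times a change in second-stage mass that the constraint holds fixed). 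I expect to be left with
\begin{align*}
  \Q'(\pxa)=\frac{\pa}{\ay}\,(r_A-r_B),\qquad
  r_G:=\esp{\,(\w-\tw)^+\mid \wh=\twhg,\ G\,}.
\end{align*}
Because $(\w,\wh)$ is jointly normal, $\w\mid\{\wh=\twhg,G\}\sim\N(m_G,v_G)$ with $m_G=\mu_W+\frac{\sq^2}{\sqrt{\sxg^2+\sq^2}}\Phi^{-1}(1-\pxg)$ (from~\eqref{eq:expected w given w hat}) and posterior variance $v_G=\sq^2\sxg^2/(\sxg^2+\sq^2)$, so $v_A>v_B$ always; and $r(m,v):=\esp{(\N(m,v)-\tw)^+}$ is strictly increasing in $m$ (its $m$-derivative is $\Phi(\cdot)$) and strictly increasing in $v$ (positive ``vega'': its $\sqrt v$-derivative is $\phi(\cdot)>0$). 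So the sign of $\Q'$ comes down to comparing $m_A$ with $m_B$ against the fixed gain $v_A>v_B$.

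For $\ax>1/2$ both effects point the same way. For any admissible $\pxa\le\ax$ one has $\pxa\le\ax\le\pxb=(\ax-\pa\pxa)/\pb$, so $\Phi^{-1}(1-\pxa)\ge\Phi^{-1}(1-\pxb)$ while $\pxb>1/2$ forces $\Phi^{-1}(1-\pxb)\le0$; a short case split on whether $\pxa$ is above or below $1/2$, using $\sxa>\sxb$, then gives $m_A\ge m_B$. Together with $v_A>v_B$ and the monotonicity of $r$, this yields $r_A>r_B$, hence $\Q'>0$ on the whole admissible range $\{\pxa\le\ax\}\supseteq[\pxagreedy,\ax]$, which is exactly what is needed (if $\ay=\ax>1/2$ the problem is one-stage and Theorem~\ref{theorem: 1 stage} applies directly; the endpoint $\ax=1$, where all first-stage algorithms coincide and the inequality is an equality, is excluded).

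For $\ax$ small I would argue by continuity from $\ax=\ay$, where the second stage is vacuous ($\tw=-\infty$). In that limit $r_A-r_B\to m_A-m_B$ — the $\tw\to-\infty$ singularities cancel — and on $[\ay,\pxagreedy(\ay)]$ one has $\pxa\ge\ax\ge\pxb$ with both fractions below $1/2$, which forces $m_A<m_B$ (consistent with the one-stage concavity behind Lemma~\ref{lemma: optimal selection}). Since $r_A-r_B$ extends continuously in $(\ax,\pxa)$ up to the boundary $\ax=\ay$ and is bounded strictly below $0$ on the compact slice $\{\ay\}\times[\ay,\pxagreedy(\ay)]$, a continuity-and-compactness argument will produce $\ax^*>\ay$ such that $\Q'<0$ on $[\ax,\pxagreedy(\ax)]$ for every $\ax\in(\ay,\ax^*)$. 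I expect the main obstacle to be this very first step — carefully differentiating the three implicitly-defined thresholds and verifying the $d\tw/d\pxa$ cancellation — together with making the boundary continuity ($\tw\to-\infty$) in the small-$\ax$ regime rigorous; once $\Q'(\pxa)=\tfrac{\pa}{\ay}(r_A-r_B)$ is in hand, everything else is bookkeeping with normal quantiles.
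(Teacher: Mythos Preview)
Your derivative formula $\Q'(\pxa)=\tfrac{p_A}{\ay}(r_A-r_B)$ with $r_G=\esp{(\w-\tw)^+\mid\wh=\twhg,\,G}$ is exactly Lemma~\ref{lemma: derivatives Q}(1) of the paper, rewritten via the normal closed form $r_G=\hat\sigma_G\int_{-\infty}^{(\hat\mu_G-\tw)/\hat\sigma_G}\Phi(\tau)\,d\tau$; the envelope cancellation you anticipate is precisely the computation carried out in Appendix~\ref{ssec:q' 2 stage proof}. The overall architecture---order the three fractions $\pxagreedy,\pxafair,\pxadp$ via Lemma~\ref{lemma: selections for dp and greedy 1 stage} and then argue monotonicity of $\Q$ on the relevant interval---also matches. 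Where you genuinely diverge is in how monotonicity is established for $\ax>1/2$: the paper evaluates $\Q'$ only at the single point $\pxadp$, shows it is positive there using exactly your ``vega'' observation, and then invokes \emph{strict concavity} of $\Q$ (proved separately via a Harris-inequality argument on the second derivative) to locate $\pxaopt>\pxadp$ and hence obtain monotonicity on $[\pxagreedy,\pxadp]$. You instead verify $m_A\ge m_B$ and $v_A>v_B$ for \emph{every} $\pxa\le\ax$, which yields $\Q'>0$ pointwise on the whole interval and bypasses the second-derivative/Harris machinery entirely; this is a legitimate simplification. For the small-$\ax$ regime both proofs are continuity arguments from the one-stage boundary $\ax=\ay$, but the paper works directly with the three utility values (finite and continuous in $\ax$, so the strict inequalities of Theorem~\ref{theorem: 1 stage} persist in a neighborhood), whereas you pass through $r_A-r_B$ and must control the $\tw\to-\infty$ limit; the paper's route is shorter there.
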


\begin{proof}[Sketch of Proof]
  \emph{Case $\alpha_1<\alpha_1^*$}. We showed in Theorem~\ref{theorem:
    1 stage} that for one-stage selection, demographic parity
  always leads to a better selection. Recall that the case
  $\ax = \ay$ corresponds to one-stage selection. Thus, if
  $\ax = \ay$, then $\Q(\pxadp) > \Q(\pxafair) \ge  \Q(\pxagreedy)$. Since
  $\Q(\pxadp)$, $\Q(\pxafair)$ and $\Q(\pxagreedy)$ are continuous functions of $\ax$ (for a fixed $\ay$),
  we can always find $\ax^*$ (that may depend on $\ay$), such that $\forall \ax < \ax^*$,
  $\Q(\pxadp) > \Q(\pxafair) \geq \Q(\pxagreedy)$ holds.

  \emph{Case $\alpha_1>1/2$}. This second case is harder to study. The key difficulty is to compute the first and second derivatives of $\Q$ with respect to $\pxa$, where $\Q$ is now the expected utility of the final selection (after selecting the candidates with the highest true quality at second stage). The expression for $\frac{d\Q}{d\pxa}$ allows us to check whether a particular algorithm selection fraction $\pxa$ is smaller or larger than the Bayesian-optimal selection fraction $\pxaopt$. We use the expression of the demographic parity selection fraction $\pxadp$ and substitute it into the expression of $\frac{d\Q}{d\pxa}$. We obtain that for $\ax > 1/2$, the derivative of $\Q$ is positive, which means that $\pxadp < \pxaopt$. At the same time, for $\ax > 1/2$, from Lemma~\ref{lemma: selections for dp and greedy 1 stage}, we have $\pxagreedy < \pxadp$. We prove that $\Q$ is strictly concave, again using Harris inequality \cite{fortuin71}, and thus we conclude that $\Qdp > \Qgreedy$.  A detailed proof is provided in Appendix~\ref{proof: two stage}.
\end{proof}







Theorem~\ref{theorem: two stage} provides sufficient conditions under which the $\gamma$-rule and demographic parity algorithms will improve utility. To illustrate this gain, we plot in Fig.~\ref{fig: u2 and p1a normal_a} the utility values obtained when using the Bayesian optimal, demographic parity, $\gamma$-rule or group oblivious selection algorithm. We fix the second stage selection size $\ay=0.1$ and vary the first stage selection size $\ax\in[0.1,1]$.

We observe that for this example, the value of $\ax^*$ in Theorem~\ref{theorem: two stage} seems to be around $0.2$. For $\ax<\ax^*$, the demographic parity or $\ff$-rule algorithms provide a large gain. When $\ax \in (0.2, 0.5)$, the demographic parity and $\gamma$-rule algorithms are not as good as the group-oblivious one but the loss of quality is minimal.  The $\gamma$-rule algorithm provides a compromise between demographic parity and group oblivious: it provides a smaller gain when demographic parity is better than group oblivious; but it provides a smaller loss when demographic parity is not as good as group oblivious.

In Fig.~\ref{fig: u2 and p1a normal_c}, we show the performance gap for different cases of implicit variance values. We observe that in general the larger the implicit variance, the larger the gain that demographic parity brings. The gain can be up to 40\% in our experiments. In general, as the budget $\ax$ grows, we observe a smaller performance gap between the demographic parity and group oblivious algorithms. When demographic parity harms the utility, the harm does not exceed 2\%.
\begin{figure}[ht]
    \centering
\begin{subfigure}{.31\textwidth}
    \includegraphics[width=\linewidth]{./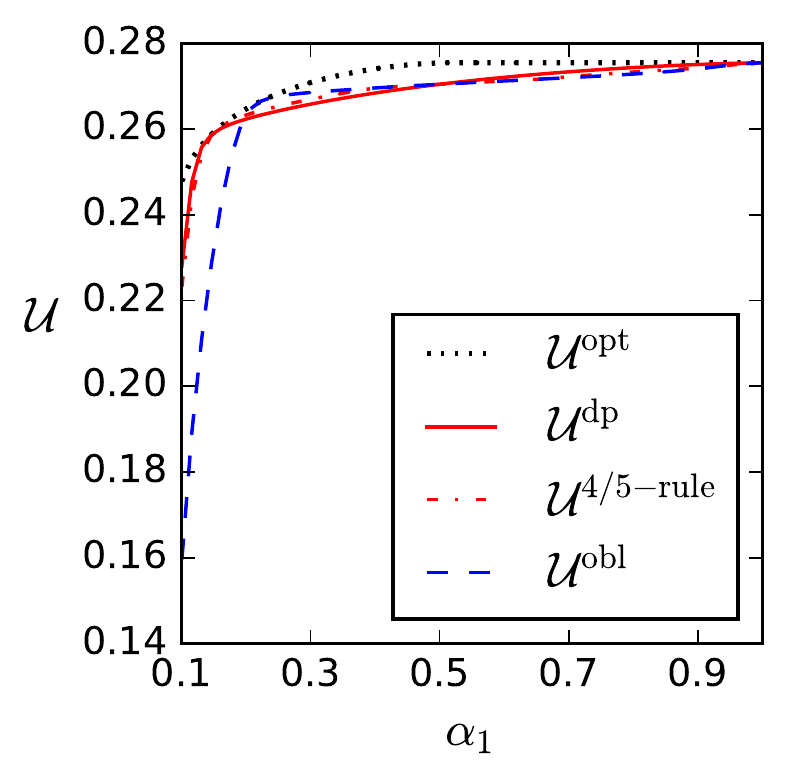}
    \vspace{-.6cm}\caption{Utility}
    \label{fig: u2 and p1a normal_a}
\end{subfigure}
\begin{subfigure}{.3\textwidth}
    \includegraphics[width=\linewidth]{./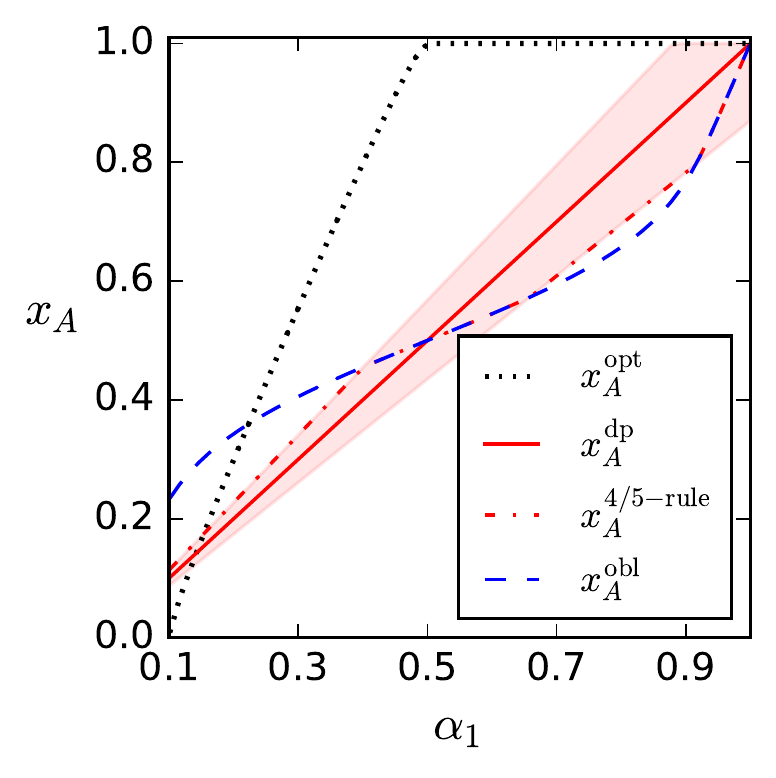}
\vspace{-.6cm}\caption{Selection fraction} 
    \label{fig: u2 and p1a normal_b}
\end{subfigure}
\begin{subfigure}{.3\textwidth}
    \includegraphics[width=\linewidth]{./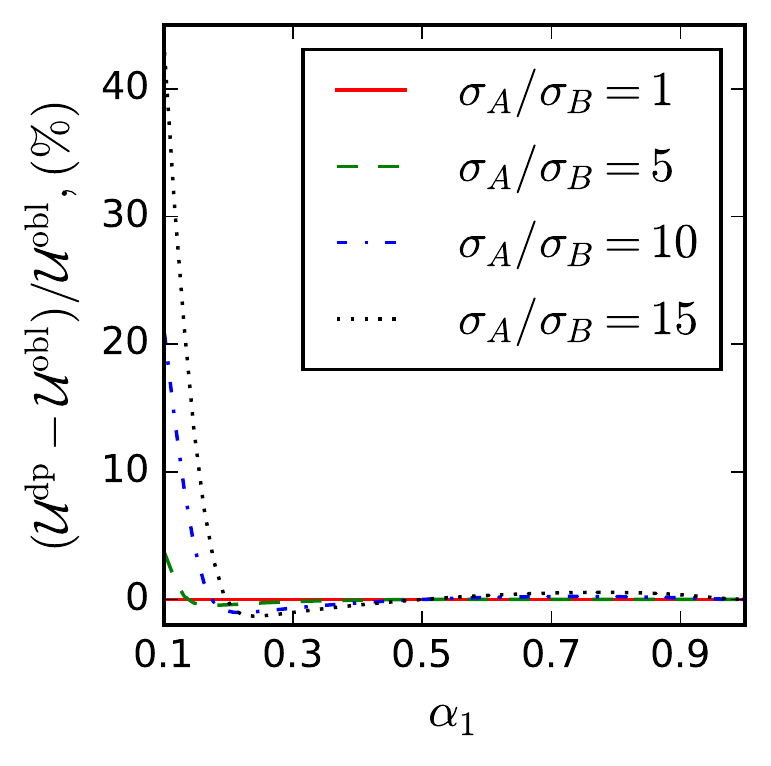}
\vspace{-.6cm}\caption{Performance gap}
    \label{fig: u2 and p1a normal_c}
\end{subfigure}
\vspace{-.2cm}\caption{
Utility of \textbf{two-stage} selection $\Q$, selection fraction
$\pxa$ and performance gap $(\Qdp - \Qgreedy)/ \Qgreedy{}$ for different budgets
$\ax$. The parameters are $\pa=0.4$, $\sxb=0.2$,
$\ay=0.1$; $\sxa$ is set to $\sxa=3$ in (a) and (b). 
}
\label{fig: u2 and p1a normal}
\end{figure}

\section{Experiments}
\label{section: experiments}

In this section,\footnote{All codes are available at:
  \url{https://gitlab.inria.fr/vemelian/implicit-variance-code}
  \href{https://archive.softwareheritage.org/swh:1:rev:196b3b43ff92bf55cd9cd648df1432b339f6c905;origin=https://gitlab.inria.fr/vemelian/implicit-variance-code;visit=swh:1:snp:fc4463ba3f863c6f0382904d5756fd7bbac4c456/}{[permalink
    on softwareheritage.org]}.} we challenge our theoretical results by using sets of data
that do not satisfy our assumptions. We show in
Section~\ref{ssec:synthetic} that the results are qualitatively
similar when the candidates' true quality comes from a Pareto distribution. We
also observe a similar behavior when considering in
Section~\ref{ssec:jee} a real dataset coming from the national Indian
exam data. We conclude in
Section~\ref{ssec:approximation} with experiments that show that a case with $n=20$ candidates
behaves similarly as with $n=\infty$.

\subsection{Synthetic data with Pareto quality}
\label{ssec:synthetic}

Our assumption in the theoretical evaluation of Sections~\ref{section:
  one stage} and \ref{section: two stage} was that qualities $\w$
follow a normal distribution. In some cases, however, the quality
distribution is quite different from normal and can be better modeled
by a power law \cite{kleinberg18}, this for example the case for wealth, income or number of citations \cite{Clauset09a}, meaning that a minority possesses a large fraction of the aggregate quality. In this experiment, we suppose that $\w \sim \Pow(w_0, \kappa)$, where $w_0 > 0$ is a scale and $\kappa > 0$ is a shape parameter: the probability density function of $\w$ can be written as $p_\w(w) = \frac{\kappa w_0^\kappa}{w^{\kappa+1}}$.  We generate $100$ datasets of size $n=10,000$. For every dataset we perform a group oblivious and demographic parity selection. In Fig.~\ref{fig:pareto}, we report the average utilities $\langle \Q_n\rangle$ over the $100$ experiments.

In Fig.~\ref{fig:pareto_a}, we show the performance gap between different
selection algorithms for one-stage selection. We see that demographic
parity improves the utility in most of the cases and that the largest gap
corresponds to the smallest budget $\ax$. Note that contrary to
Theorem~\ref{theorem: 1 stage}, demographic parity does not always
improve utility (for instance here when $\ax\in[.3,.5]$). Yet, the
loss due to demographic parity is never larger than $0.1\%$ while the
gain can be up to $40\%$.  In Fig.~\ref{fig:pareto_b}, the two-stage
case is shown. As expected, demographic parity helps utility for small
budgets $\ax$ close to $\ay=0.01$, since the selection is almost the same as in
one stage. As the budget $\ax$ increases, both the demographic parity and
group oblivious selection algorithms tend to perform close to each other due to
large number of choices at the second stage.  Finally, in
Fig.~\ref{fig:pareto_c}, we show how the selection fraction
$\pxagreedy$ depends on $\ax$. We see that for small budgets $\ax$,
 the group oblivious algorithm tends to select more from group $A$, while for
large budgets, the situation is opposite. 

\begin{figure}[ht]
  \vspace{-.2cm}
  \centering
  \begin{subfigure}{.30\textwidth}
    \includegraphics[width=.98\linewidth]{./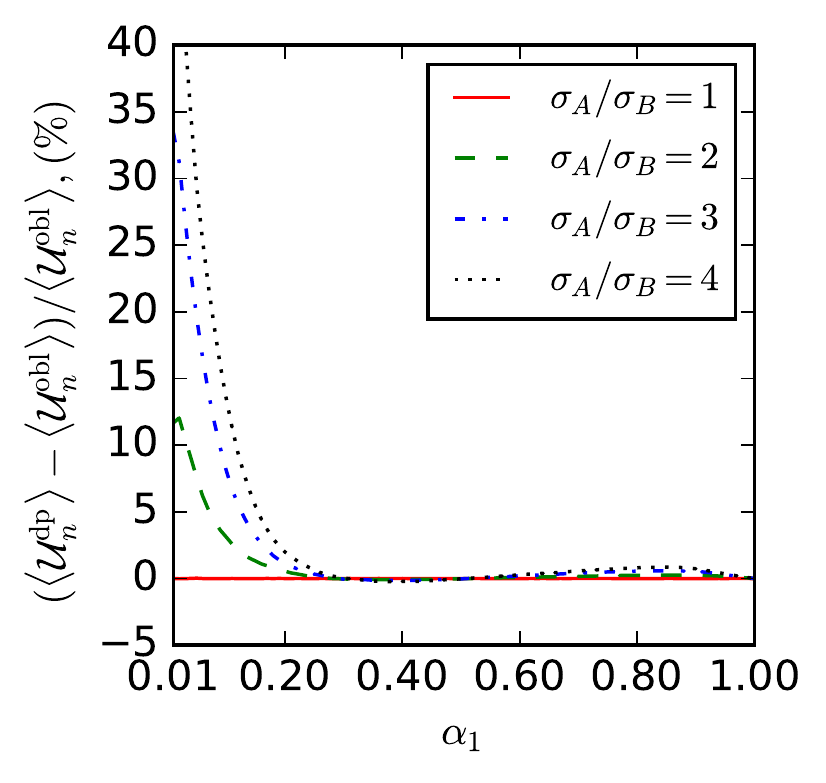}
    \vspace{-.3cm}\caption{One-stage selection}
    \label{fig:pareto_a}
  \end{subfigure}
  \begin{subfigure}{.30\textwidth}
    \includegraphics[width=.98\linewidth]{./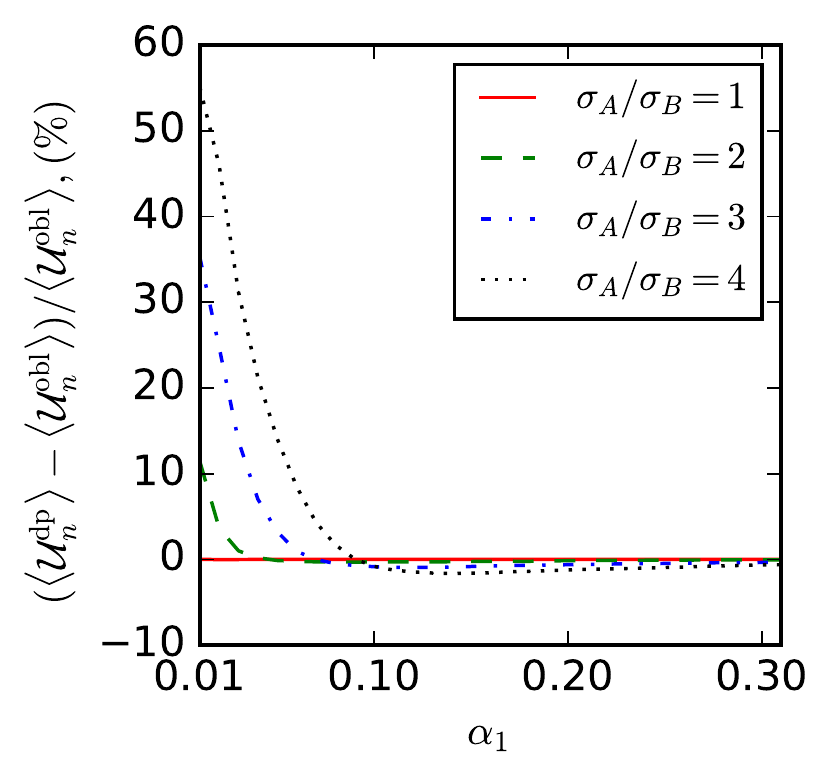}
    \vspace{-.3cm}\caption{Two-stage selection ($\alpha_2{=}0.01$)}
    \label{fig:pareto_b}
  \end{subfigure}
  \begin{subfigure}{.30\textwidth}
    \includegraphics[width=.98\linewidth]{./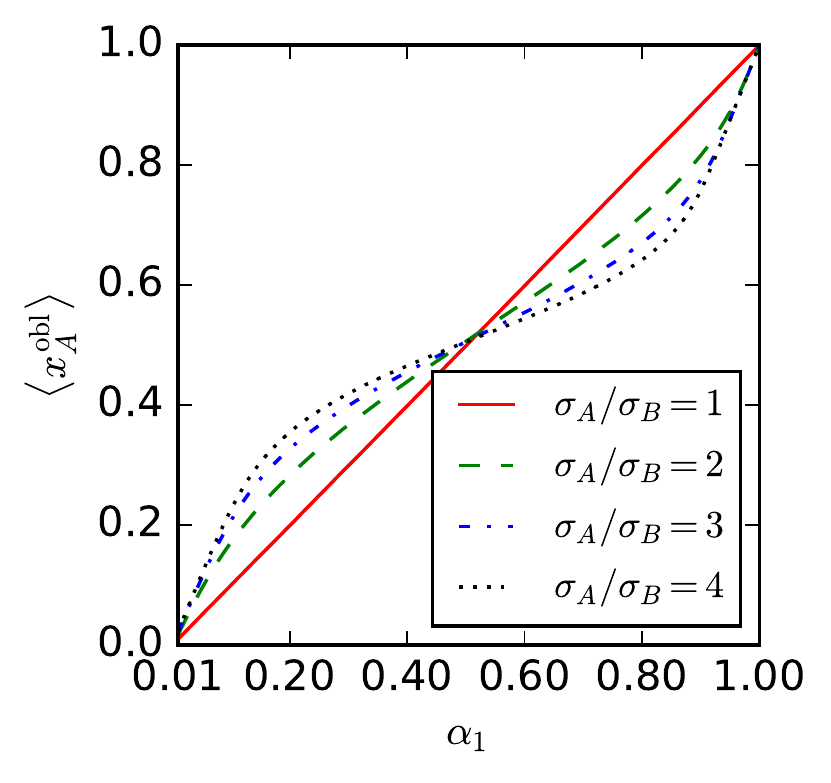} 
    \vspace{-.3cm}\caption{Selection fraction}
    \label{fig:pareto_c}
  \end{subfigure}
  \vspace{-.3cm}\caption{ \textbf{Synthetic data} with Pareto distribution
    $\w \sim \Pow(1,3)$: Gain of demographic parity over group
    oblivious for the one- and two-stage selections. The parameters
    are $\pa=0.4$, $\sxb=1$, $\sxa\in\{1,2,3,4\}$.  }
  \label{fig:pareto}
\end{figure}

Similar results are obtained for other distributions of quality $\w$. We consider uniform, Gaussian mixture and beta distributions. The corresponding plots can be found in Appendix \ref{section: additional figures}.

\subsection{IIT-JEE scores dataset}
\label{ssec:jee}

In this section, we consider a real dataset, the IIT-JEE dataset \cite{jee09}, with joint entrance exam results in India in 2009. These scores are used as an admission criteria to enter the high-rated universities. The dataset consists of 384,977 records. Every record has information about one student: its name, gender, grade for Mathematics, Physics, Chemistry and total grade. In the dataset, there are 98,028 women and 286,942 men.  This dataset is the same as the one considered in \cite{celis20}.

In order to construct a model of implicit variance, we consider an artificial scenario where the field ``grade'' is the true latent quality $\w$ of the candidates. The mean values and standard deviations of $W$ for the two groups are: $\mu_{\w_\mathrm{men}} = 30.8$, $\sigma_{\w_\mathrm{men}} = 51.8$, $\mu_{\w_\mathrm{women}} = 21.2$, $\sigma_{\w_\mathrm{women}} = 39.3$. We then suppose that an unbiased estimator $\wh$ of the grade is observed at the first stage. The standard deviation of estimation for male candidates is set to $\sigma_\mathrm{m} = 10$. For the women group, which is the minority group, we consider different cases: $\sigma_\mathrm{w} = k\cdot\sigma_\mathrm{m}$, for $k=1,4,7,10$.  The distribution of grades $\w$ and observed values $\wh$ for $k=4$ are shown in Fig.~\ref{fig:jee curves_a} and \ref{fig:jee curves_b}.


We start our experiment with \textbf{one-stage} selection. For the dataset we perform a group oblivious (select best $m_1$) and demographic parity selection (select best $m_1$, but maintain the demographic parity condition $\pxa=\pxb$ up to one candidate). The selection size varies from  2\% to 100\% of total number of candidates, i.e., out of 384,977 students the decision maker selects 7,700 students or more. A selection rate of 2\% was set by IIT in 2009 \cite{celis20}.

The results for one-stage selection are given in Fig.~\ref{fig:jee curves_c}. We observe that for both small and large values of $\ax$ demographic parity helps utility, if the noise values of women evaluation $\sigma_\mathrm{w}$ are large. We see that the gain can be up to around 30\% if the selection size is small and up to 5\% if the selection size is large. For the case where $\sigma_\mathrm{w}$ and $\sigma_\mathrm{m}$ are close, we observe no gain if the selection is large and we observe a minor loss in utility (around 2\%) if the selection is small. This is due to the fact that in the dataset, there are more men with a high true latent quality $\w$, as seen in Fig.~\ref{fig:jee curves_a}.

\begin{figure}
  \centering
  \begin{subfigure}{.24\textwidth}
    \includegraphics[width=\linewidth]{./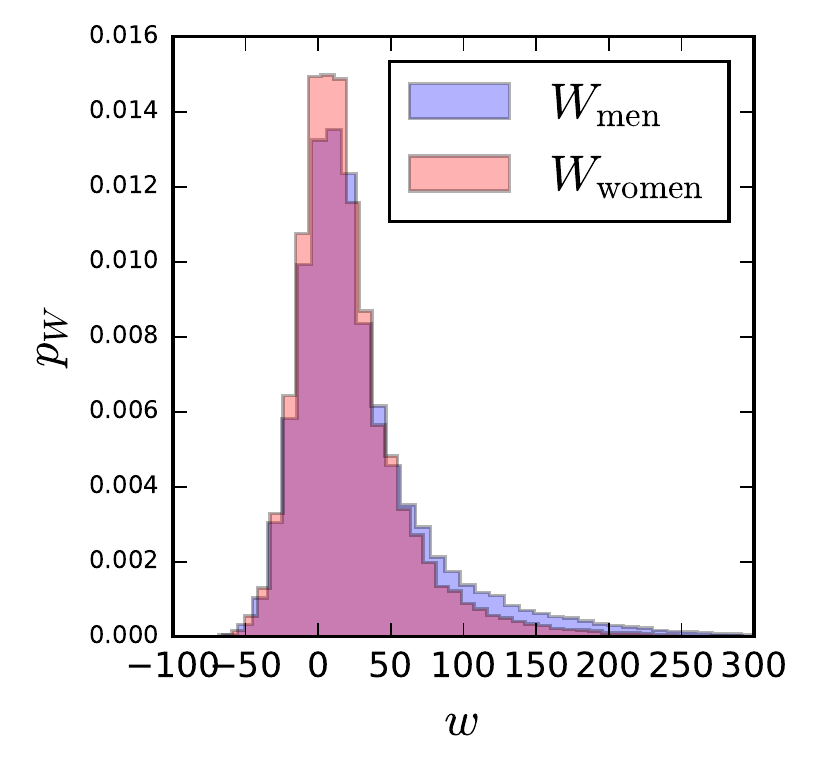}
    \vspace{-.7cm}\caption{Histogram of $\w$}
    \label{fig:jee curves_a}
  \end{subfigure}
  \begin{subfigure}{.24\textwidth}
    \includegraphics[width=\linewidth]{./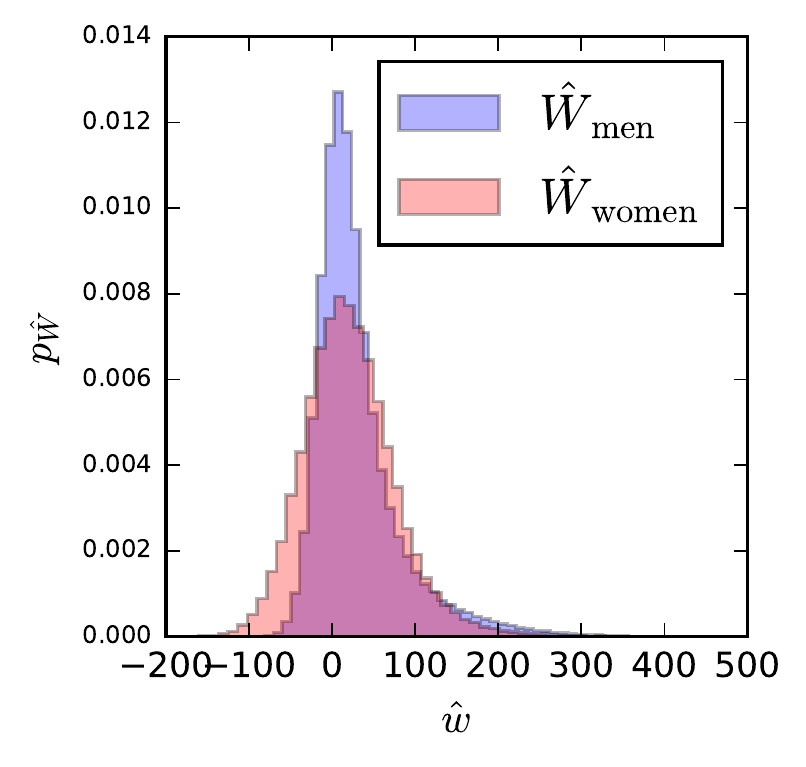}
    \vspace{-.7cm}\caption{Histogram of $\wh$}
    \label{fig:jee curves_b}
  \end{subfigure}
  \begin{subfigure}{.24\textwidth}
    \includegraphics[width=\linewidth]{./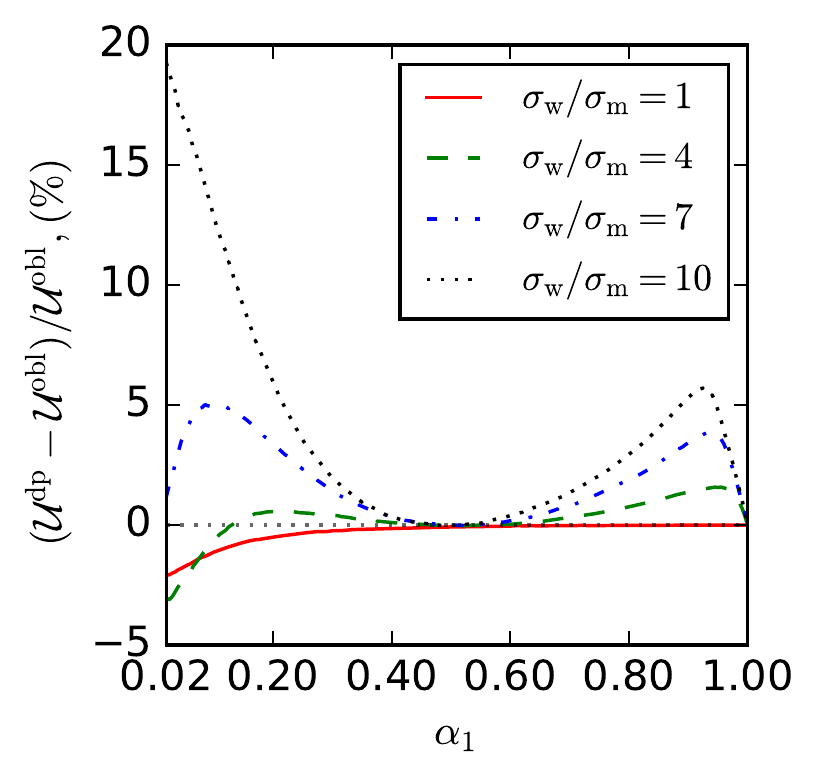} 
    \vspace{-.7cm}\caption{1-stage selection}
    \label{fig:jee curves_c}
  \end{subfigure}
  \begin{subfigure}{.24\textwidth}
    \includegraphics[width=\linewidth]{./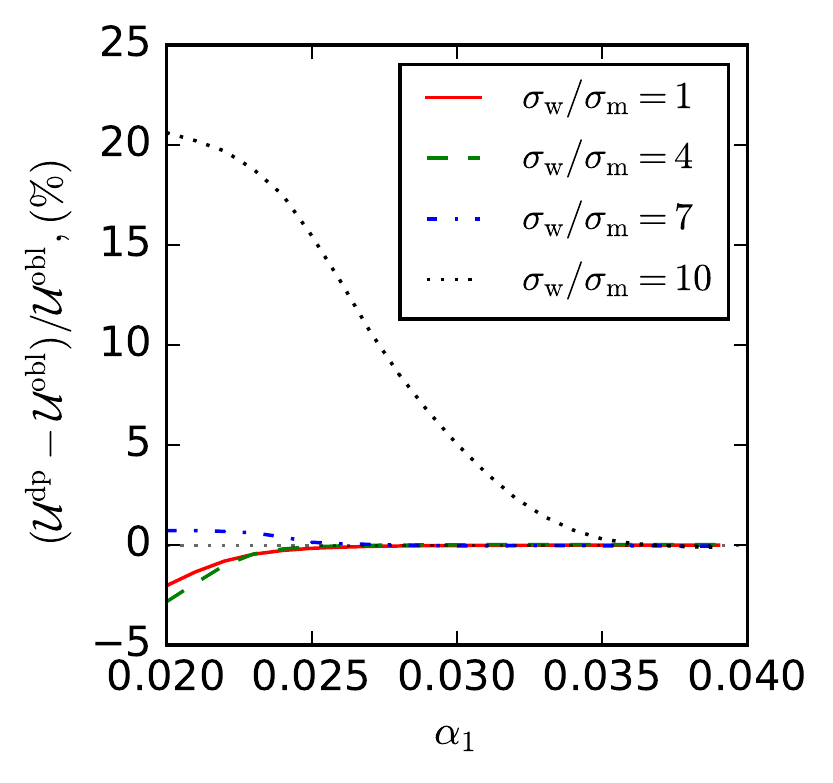}
    \vspace{-.7cm}\caption{2-stage selection}
    \label{fig:jee curves_d}
  \end{subfigure}
  \vspace{-.2cm}\caption{   
    Distribution of $\w$ and $\wh$ given gender, and one- and two-stage selection for \textbf{IIT-JEE dataset} \cite{jee09}. Mean values and standards deviations of $W$ for two groups are: $\mu_{\w_\mathrm{men}} = 30.8$, $\sigma_{\w_\mathrm{men}} = 51.8$,  $\mu_{\w_\mathrm{women}} = 21.2$, $\sigma_{\w_\mathrm{women}} = 39.3$. Added noise has standard deviation $\sigma_\mathrm{m}=10$ and $\sigma_\mathrm{w} = k\cdot\sigma_\mathrm{m}$; $k=4$ in plot (b).}
  \label{fig:jee curves}
  \vspace{-.2cm}
\end{figure}

We now analyze \textbf{two-stage} selection. As for the one-stage case, we perform a group-oblivious selection (select best $m_1$; then select best $m_2$ out of $m_1$) and a demographic parity selection (select best $m_1$ but maintain  demographic parity; then select best $m_2$ out of $m_1$). In Fig.~\ref{fig:jee curves_d} we show the case where the final-stage selection rate is $\ay=2\%$, i.e., we select 7,700 candidates out of 384,977. As observed, the performance gain when using demographic parity can be up to 20\%. However, as the selection size $\ax$ increases, $\Qdp$ and $\Qgreedy$ become close, since there will always be enough candidates among those selected at first stage to subselect a tiny proportion of good candidates.


\subsection{Accuracy of the approximation for small $n$}
\label{ssec:approximation}

As discussed in Section~\ref{section: model}, we cannot solve the problem with finite selection sizes exactly. Instead, we use an approximation that is exact as number of candidates $n$ tends to infinity (Proposition~\ref{th:n_infinity}). However, it is important to know how the approximation behaves for a small number of candidates $n$ and small selection sizes $m_1$, $m_2$. For our experiment, we generate datasets of different sizes $n=20, 50, 100$. For every size parameter $n$, we generate $10,000$ different datasets. For a population size $n$, we denote by $\langle\Q_n\rangle$ the average quality of the selected candidates over our $10,000$ experiments. In each case, the true latent qualities $\w$ are generated from a normal distribution $\N(1,1)$.


In Fig.~\ref{fig: finite normal_a} we plot the average utilities
$\langle\Q_n\rangle$ for a population of $n=100$, where we select
$m_2=10$ individuals and where we vary $m_1$ from 10 to 100. The
shaded region corresponds to a confidence interval.  We consider two
selection algorithms (demographic parity and group oblivious) and
compare the performance for $n=100$ with the limiting quantities
$\Qdp$ and $\Qgreedy$.  We observe that, even for $n=100$, the average
values of utility are close to the approximation.  In Fig.~\ref{fig:
  finite normal_b} we compare the gap of average performances $(\langle
\Qdp_n\rangle - \langle\Qgreedy_n\rangle)/\langle\Qgreedy_n\rangle$ for different $n$. We observe
that the approximation $n=+\infty$ is a good prediction of the average
gain provided by the use of demographic parity.  In order to
distinguish more precisely over the various experiments, in
Fig.~\ref{fig: finite normal_c} we compare the average gain of
performance $\langle (\Qdp_n - \Qgreedy_n)/ \Qgreedy_n\rangle$. Again, the curves
for finite $n$ are almost indistinguishable from the case where $n\to\infty$. 


\begin{figure}
  \centering
  \begin{subfigure}{.32\textwidth}
    \includegraphics[height=.95\linewidth]{./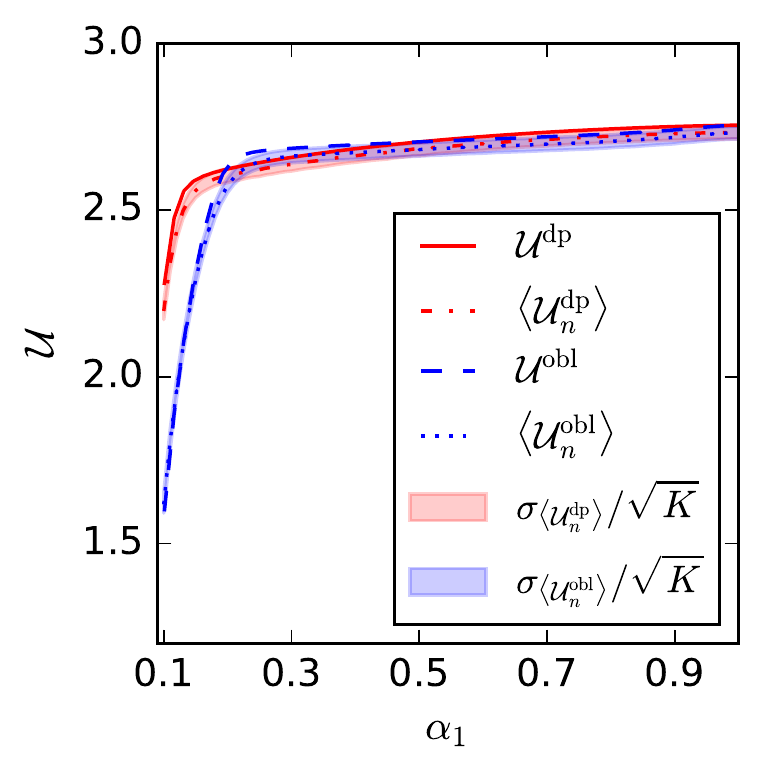} 
    \vspace{-.3cm}\caption{Average utility $\langle \Q_n\rangle$, $n=100$}
    \label{fig: finite normal_a}
  \end{subfigure}
  \begin{subfigure}{.32\textwidth}
    \includegraphics[height=.95\linewidth]{./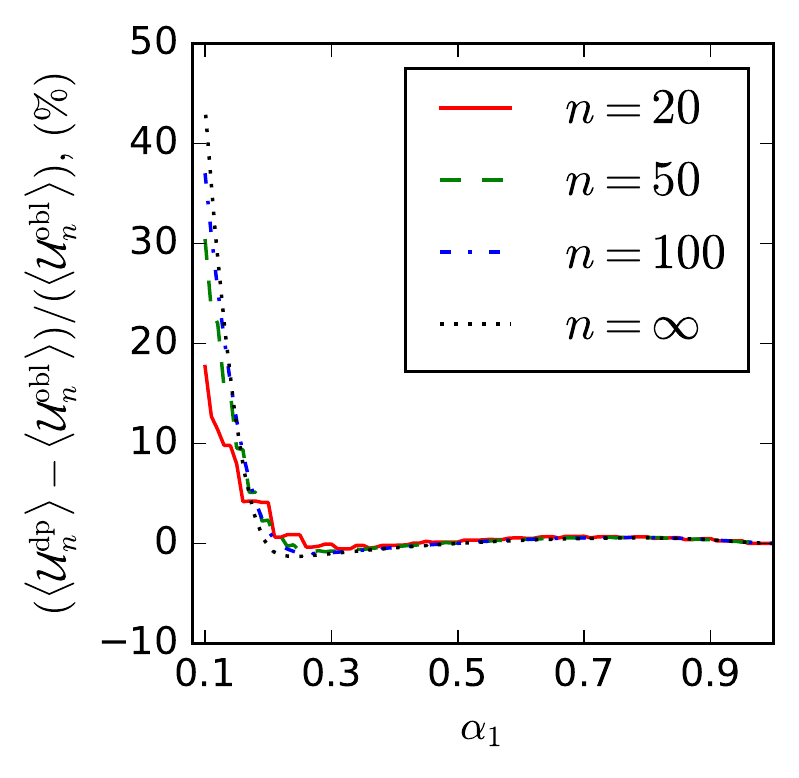}
    \vspace{-.3cm}\caption{Gap of average performance}
    \label{fig: finite normal_b}
  \end{subfigure}
  \begin{subfigure}{.32\textwidth}
    \includegraphics[height=.95\linewidth]{./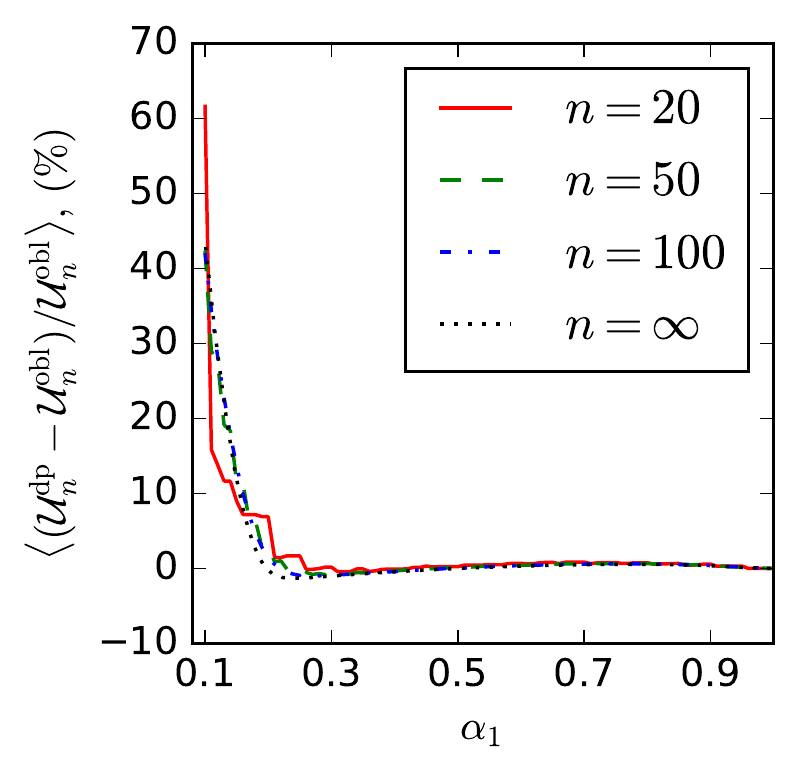}
    \vspace{-.3cm}\caption{Average performance gap}
    \label{fig: finite normal_c}
  \end{subfigure}
  \vspace{-.2cm}\caption{\textbf{Finite population size}: quality of the two-stage selection and expected gain of demographic parity over group oblivious. The quality distribution $\w$ is $\N(1,1)$ and the noise parameters are $\sxa = 3$, $\sxb=0.2$, $\ay=0.1$. The number of experiments per set of parameters is $K=10,000$. The shaded areas are the confidence intervals (corresponding to one standard deviation on the estimation of the empirical mean).}
  \label{fig: finite normal}
\end{figure}

\section{Discussion and Extensions}
\label{section: discussion}

In this work, we study one- and two-stage selection problems in the presence of implicit variance. We propose a purposely simple model of the problem that captures the phenomenon of implicit variance and allows us to obtain clean mathematical results. In particular, we show that fairness mechanisms (a generalization of the $\ff$ rule) often lead to a higher selection utility compared to using a group oblivious algorithm. Our model is flexible and can be extended in several directions.

\paragraph{Different prior of the quality distribution}

Our theoretical results are obtained under the assumption that the true latent quality $\w$ follows a group-independent distribution (to isolate the effect of implicit variance) that corresponds to a normal law (to allow for analytical derivations). Both assumptions can be relaxed. \emph{First}, we can plug into the model any distribution of latent quality (e.g., Pareto, uniform, mixture of Gaussians, etc.). We show numerically in Section~\ref{section: experiments} and Appendix~\ref{section: additional figures} that it does not change the flavor of the main results. \emph{Second}, we can consider quality distributions dependent on the group. A natural extension in that direction would be to consider two different normal distributions. It is possible to extend our results to that case (at the expense of increased complexity). Our experiment on the ITT-JEE dataset (Section~\ref{ssec:jee}), however, gives a preview of how the results are modified: if the effect of implicit variance is small compared to the difference in the true quality distributions then demographic parity can decrease the selection quality for small selection budgets. If the effect of implicit variance is predominant then our results continue to hold.

\paragraph{Combining implicit variance and implicit bias}
Our model does not include implicit bias so as to better isolate the effect of implicit variance. It is easy, though, to incorporate implicit bias as in \cite{kleinberg18,celis20}. The most natural in our model would be to consider $\whi = \wi - \beta_{G_i} + \sigma_{G_i} \varepsilon_i$, where $\beta_{G}$ is the implicit additive bias against group $G$ (typically, $\beta>0$ for the disadvantaged group and $\beta=0$ for the other). In effect, the additive parameter $\beta$ shifts the distribution of $\whi$, while the additive noise widens it. If true qualities are normally distributed, we would then have $\wh_A \sim \N(\mq - \beta_A, \sq^2 + \sxa^2)$ and $\wh_B \sim \N(\mq - \beta_B, \sq^2 + \sxb^2)$. We leave as future work a detailed investigation of the group oblivious and fair selection utilities in that case.

\paragraph{Effect on global fairness in two-stage selection}

Throughout the paper, we have studied the effect of imposing demographic parity at the first stage on the final selection \emph{utility}. However, a natural question to ask is what is the effect of imposing fairness at the first stage on the fairness (or disparity) of the final selection (we term it global fairness following~\cite{emelianov19}). In Fig.~\ref{fig: fairness 2nd stage}, we plot the global selection ratios $\pya$ and $\pyb$ for each group (at the second stage), for the different first-stage algorithms considered in the paper. For selection budgets $\ax$ close to $\ay$, we observe that the first-stage demographic parity algorithm leads to the smallest disparity in global selection fractions. It is natural since the selection fractions are close to ones obtained in the one-stage case (so for demographic parity $\pya \approx \pyb$ since $\pxa = \pxb$). However, we observe that as $\ax$ grows, demographic parity can lead to a larger inequality in the global selection fractions than with the group oblivious and optimal algorithms. Thus, in a two-stage selection problem with implicit variance, imposing a fairness constraint at the first stage may lead to a degradation of the global fairness. We leave a detailed investigation of this counter-intuitive aspect as future work but note that this emphasizes the crucial importance of modeling implicit variance in multistage selection problems.

\begin{figure}
  \includegraphics[width=0.43\linewidth]{./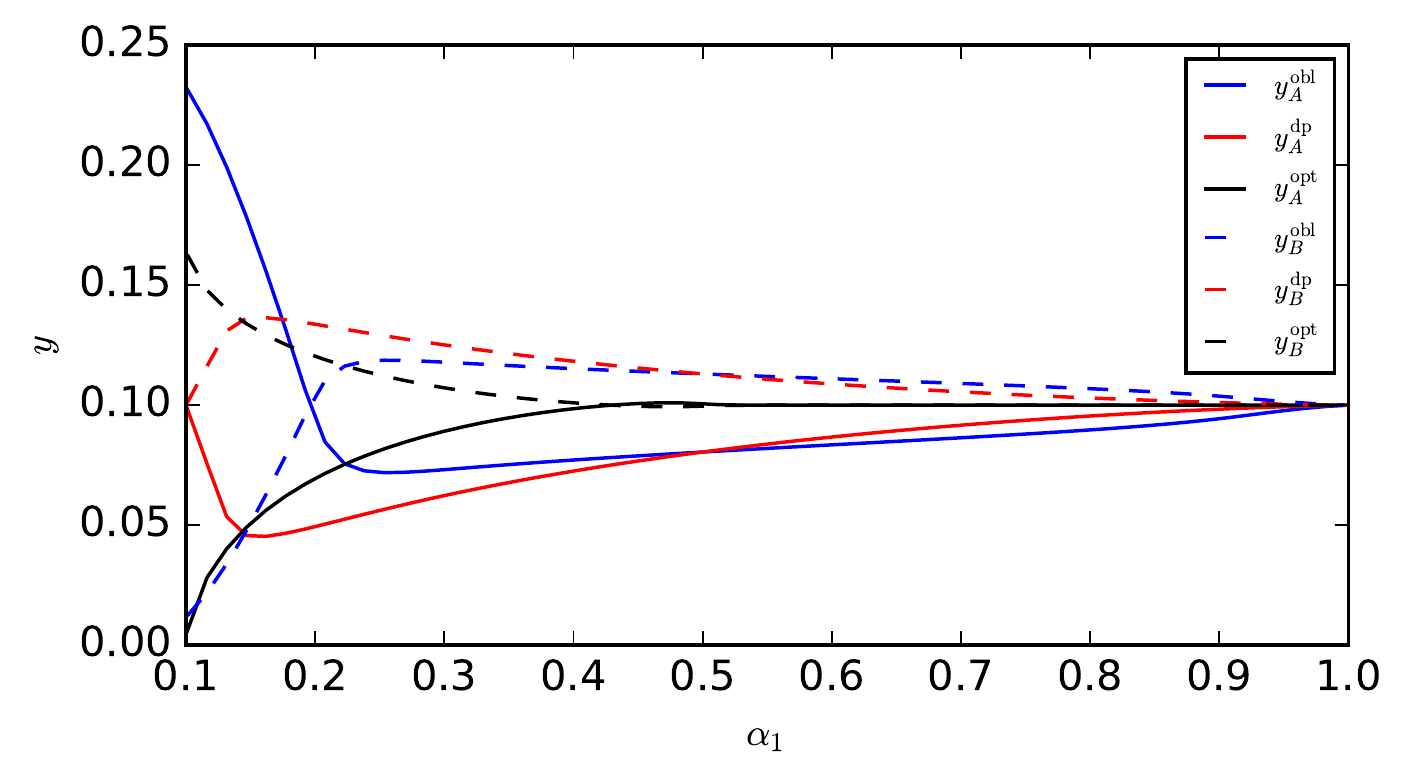}
  \vspace{-.5cm}
  \caption{Selection fractions $y_A$ and $y_B$ corresponding to selection at second stage. Parameters: $\w \sim \N(1,1)$, $\sxa=3$, $\sxb=0.2$, $\pa=0.4$ and $\ay=0.1$} 
  \label{fig: fairness 2nd stage}
  \vspace{-.5cm}
\end{figure}

\begin{acks}
This work has been partially supported by MIAI @ Grenoble Alpes (ANR-19-P3IA-0003) and by a European Research Council (ERC) Advanced Grant for the project ``Foundations for Fair Social Computing'' funded under the European Union's Horizon 2020 Framework Programme (grant agreement no. 789373). We thank the reviewers for their thoughtful comments. 
\end{acks}

\bibliographystyle{ACM-Reference-Format}
\bibliography{bibliography} 


\begin{thebibliography}{30}


\ifx \showCODEN    \undefined \def \showCODEN     #1{\unskip}     \fi
\ifx \showDOI      \undefined \def \showDOI       #1{#1}\fi
\ifx \showISBNx    \undefined \def \showISBNx     #1{\unskip}     \fi
\ifx \showISBNxiii \undefined \def \showISBNxiii  #1{\unskip}     \fi
\ifx \showISSN     \undefined \def \showISSN      #1{\unskip}     \fi
\ifx \showLCCN     \undefined \def \showLCCN      #1{\unskip}     \fi
\ifx \shownote     \undefined \def \shownote      #1{#1}          \fi
\ifx \showarticletitle \undefined \def \showarticletitle #1{#1}   \fi
\ifx \showURL      \undefined \def \showURL       {\relax}        \fi
\providecommand\bibfield[2]{#2}
\providecommand\bibinfo[2]{#2}
\providecommand\natexlab[1]{#1}
\providecommand\showeprint[2][]{arXiv:#2}

\bibitem[\protect\citeauthoryear{??}{jee}{2019}]%
        {jee09}
 \bibinfo{year}{2019}\natexlab{}.
\newblock \bibinfo{title}{{IIT-JEE dataset}}.
\newblock
  \bibinfo{howpublished}{\url{https://github.com/AnayMehrotra/Ranking-with-Implicit-Bias}}.
\newblock
\newblock
\shownote{[Online; accessed Jan 29, 2020].}


\bibitem[\protect\citeauthoryear{Aigner and Cain}{Aigner and Cain}{1977}]%
        {Aigner77a}
\bibfield{author}{\bibinfo{person}{Dennis~J. Aigner} {and}
  \bibinfo{person}{Glen~G. Cain}.} \bibinfo{year}{1977}\natexlab{}.
\newblock \showarticletitle{Statistical Theories of Discrimination in Labor
  Markets}.
\newblock \bibinfo{journal}{\emph{Industrial and Labor Relations Review}}
  \bibinfo{volume}{30}, \bibinfo{number}{2} (\bibinfo{year}{1977}),
  \bibinfo{pages}{175--187}.
\newblock


\bibitem[\protect\citeauthoryear{Balafoutas and Sutter}{Balafoutas and
  Sutter}{2012}]%
        {aff_action_balafoutas12}
\bibfield{author}{\bibinfo{person}{Loukas Balafoutas} {and}
  \bibinfo{person}{Matthias Sutter}.} \bibinfo{year}{2012}\natexlab{}.
\newblock \showarticletitle{Affirmative Action Policies Promote Women and Do
  Not Harm Efficiency in the Laboratory}.
\newblock \bibinfo{journal}{\emph{Science}}  \bibinfo{volume}{335}
  (\bibinfo{date}{Feb.} \bibinfo{year}{2012}), \bibinfo{pages}{579--82}.
\newblock


\bibitem[\protect\citeauthoryear{Baye and Monseur}{Baye and Monseur}{2016}]%
        {gender_variability_baye16}
\bibfield{author}{\bibinfo{person}{Ariane Baye} {and}
  \bibinfo{person}{Christian Monseur}.} \bibinfo{year}{2016}\natexlab{}.
\newblock \showarticletitle{Gender differences in variability and extreme
  scores in an international context}.
\newblock \bibinfo{journal}{\emph{Large-scale Assessments in Education}}
  \bibinfo{volume}{4} (\bibinfo{date}{Dec.} \bibinfo{year}{2016}).
\newblock


\bibitem[\protect\citeauthoryear{Bertrand and Mullainathan}{Bertrand and
  Mullainathan}{2004}]%
        {Bertrand04a}
\bibfield{author}{\bibinfo{person}{Marianne Bertrand} {and}
  \bibinfo{person}{Sendhil Mullainathan}.} \bibinfo{year}{2004}\natexlab{}.
\newblock \showarticletitle{{Are Emily and Greg More Employable Than Lakisha
  and Jamal? A Field Experiment on Labor Market Discrimination}}.
\newblock \bibinfo{journal}{\emph{American Economic Review}}
  \bibinfo{volume}{94}, \bibinfo{number}{4} (\bibinfo{date}{Sept.}
  \bibinfo{year}{2004}), \bibinfo{pages}{991--1013}.
\newblock


\bibitem[\protect\citeauthoryear{Cavicchia}{Cavicchia}{2015}]%
        {Cavicchia15a}
\bibfield{author}{\bibinfo{person}{Marilyn Cavicchia}.}
  \bibinfo{year}{2015}\natexlab{}.
\newblock \showarticletitle{How to fight implicit bias? With conscious thought,
  diversity expert tells NABE}.
\newblock \bibinfo{journal}{\emph{American Bar Association: Bar Leader}}
  \bibinfo{volume}{40}, \bibinfo{number}{1} (\bibinfo{year}{2015}).
\newblock


\bibitem[\protect\citeauthoryear{Celis, Mehrotra, and Vishnoi}{Celis
  et~al\mbox{.}}{2020}]%
        {celis20}
\bibfield{author}{\bibinfo{person}{L.~Elisa Celis}, \bibinfo{person}{Anay
  Mehrotra}, {and} \bibinfo{person}{Nisheeth~K. Vishnoi}.}
  \bibinfo{year}{2020}\natexlab{}.
\newblock \showarticletitle{Interventions for Ranking in the Presence of
  Implicit Bias}. In \bibinfo{booktitle}{\emph{Proceedings of the 2020
  Conference on Fairness, Accountability, and Transparency (FAT*)}}.
  \bibinfo{pages}{369--380}.
\newblock


\bibitem[\protect\citeauthoryear{Chouldechova}{Chouldechova}{2017}]%
        {Chouldechova17a}
\bibfield{author}{\bibinfo{person}{Alexandra Chouldechova}.}
  \bibinfo{year}{2017}\natexlab{}.
\newblock \showarticletitle{Fair Prediction with Disparate Impact: A Study of
  Bias in Recidivism Prediction Instruments}.
\newblock \bibinfo{journal}{\emph{Big Data}} \bibinfo{volume}{5},
  \bibinfo{number}{2} (\bibinfo{year}{2017}), \bibinfo{pages}{153--163}.
\newblock


\bibitem[\protect\citeauthoryear{Clauset, Shalizi, and Newman}{Clauset
  et~al\mbox{.}}{2009}]%
        {Clauset09a}
\bibfield{author}{\bibinfo{person}{Aaron Clauset},
  \bibinfo{person}{Cosma~Rohilla Shalizi}, {and} \bibinfo{person}{M.~E.~J.
  Newman}.} \bibinfo{year}{2009}\natexlab{}.
\newblock \showarticletitle{Power-Law Distributions in Empirical Data}.
\newblock \bibinfo{journal}{\emph{SIAM Rev.}} \bibinfo{volume}{51},
  \bibinfo{number}{4} (\bibinfo{year}{2009}), \bibinfo{pages}{661--703}.
\newblock


\bibitem[\protect\citeauthoryear{Coate and Loury}{Coate and Loury}{1993}]%
        {coate93}
\bibfield{author}{\bibinfo{person}{Stephen Coate} {and} \bibinfo{person}{Glenn
  Loury}.} \bibinfo{year}{1993}\natexlab{}.
\newblock \showarticletitle{Will Affirmative-Action Policies Eliminate Negative
  Stereotypes?}
\newblock \bibinfo{journal}{\emph{American Economic Review}}
  \bibinfo{volume}{83} (\bibinfo{date}{Feb.} \bibinfo{year}{1993}),
  \bibinfo{pages}{1220--40}.
\newblock


\bibitem[\protect\citeauthoryear{Collins}{Collins}{2007}]%
        {rooney_rule_collins07}
\bibfield{author}{\bibinfo{person}{Brian Collins}.}
  \bibinfo{year}{2007}\natexlab{}.
\newblock \showarticletitle{Tackling Unconscious Bias in Hiring Practices: The
  Plight of the Rooney Rule}.
\newblock \bibinfo{journal}{\emph{NYU Law Review}}  \bibinfo{volume}{82}
  (\bibinfo{date}{June} \bibinfo{year}{2007}).
\newblock


\bibitem[\protect\citeauthoryear{Corbett-Davies, Pierson, Feller, Goel, and
  Huq}{Corbett-Davies et~al\mbox{.}}{2017}]%
        {Corbett-Davies:2017}
\bibfield{author}{\bibinfo{person}{Sam Corbett-Davies}, \bibinfo{person}{Emma
  Pierson}, \bibinfo{person}{Avi Feller}, \bibinfo{person}{Sharad Goel}, {and}
  \bibinfo{person}{Aziz Huq}.} \bibinfo{year}{2017}\natexlab{}.
\newblock \showarticletitle{Algorithmic Decision Making and the Cost of
  Fairness}. In \bibinfo{booktitle}{\emph{Proceedings of the 23rd ACM SIGKDD
  International Conference on Knowledge Discovery and Data Mining (KDD)}}.
  \bibinfo{pages}{797--806}.
\newblock


\bibitem[\protect\citeauthoryear{Emelianov, Arvanitakis, Gast, Gummadi, and
  Loiseau}{Emelianov et~al\mbox{.}}{2019}]%
        {emelianov19}
\bibfield{author}{\bibinfo{person}{Vitalii Emelianov}, \bibinfo{person}{George
  Arvanitakis}, \bibinfo{person}{Nicolas Gast}, \bibinfo{person}{Krishna
  Gummadi}, {and} \bibinfo{person}{Patrick Loiseau}.}
  \bibinfo{year}{2019}\natexlab{}.
\newblock \showarticletitle{The Price of Local Fairness in Multistage
  Selection}. In \bibinfo{booktitle}{\emph{Proceedings of the 28th
  International Joint Conference on Artificial Intelligence (IJCAI)}}.
\newblock


\bibitem[\protect\citeauthoryear{Fortuin, Kasteleyn, and Ginibre}{Fortuin
  et~al\mbox{.}}{1971}]%
        {fortuin71}
\bibfield{author}{\bibinfo{person}{C.~M. Fortuin}, \bibinfo{person}{P.~W.
  Kasteleyn}, {and} \bibinfo{person}{J. Ginibre}.}
  \bibinfo{year}{1971}\natexlab{}.
\newblock \showarticletitle{Correlation inequalities on some partially ordered
  sets}.
\newblock \bibinfo{journal}{\emph{Comm. Math. Phys.}} \bibinfo{volume}{22},
  \bibinfo{number}{2} (\bibinfo{year}{1971}), \bibinfo{pages}{89--103}.
\newblock


\bibitem[\protect\citeauthoryear{Greenwald and Krieger}{Greenwald and
  Krieger}{2006}]%
        {implicit_bias_greenwald06}
\bibfield{author}{\bibinfo{person}{Anthony Greenwald} {and}
  \bibinfo{person}{Linda Krieger}.} \bibinfo{year}{2006}\natexlab{}.
\newblock \showarticletitle{Implicit Bias: Scientific Foundations}.
\newblock \bibinfo{journal}{\emph{California Law Review}}  \bibinfo{volume}{94}
  (\bibinfo{date}{July} \bibinfo{year}{2006}), \bibinfo{pages}{945}.
\newblock


\bibitem[\protect\citeauthoryear{Hardt, Price, and Srebro}{Hardt
  et~al\mbox{.}}{2016}]%
        {Hardt:2016}
\bibfield{author}{\bibinfo{person}{Moritz Hardt}, \bibinfo{person}{Eric Price},
  {and} \bibinfo{person}{Nathan Srebro}.} \bibinfo{year}{2016}\natexlab{}.
\newblock \showarticletitle{Equality of Opportunity in Supervised Learning}. In
  \bibinfo{booktitle}{\emph{Proceedings of the 30th International Conference on
  Neural Information Processing Systems (NIPS)}}. \bibinfo{pages}{3323--3331}.
\newblock


\bibitem[\protect\citeauthoryear{Holzer and Neumark}{Holzer and
  Neumark}{2000}]%
        {holzer00}
\bibfield{author}{\bibinfo{person}{Harry Holzer} {and} \bibinfo{person}{David
  Neumark}.} \bibinfo{year}{2000}\natexlab{}.
\newblock \showarticletitle{Assessing Affirmative Action}.
\newblock \bibinfo{journal}{\emph{Journal of Economic Literature}}
  \bibinfo{volume}{38}, \bibinfo{number}{3} (\bibinfo{date}{Sept.}
  \bibinfo{year}{2000}), \bibinfo{pages}{483--568}.
\newblock


\bibitem[\protect\citeauthoryear{Kannan, Roth, and Ziani}{Kannan
  et~al\mbox{.}}{2019}]%
        {kannan19}
\bibfield{author}{\bibinfo{person}{Sampath Kannan}, \bibinfo{person}{Aaron
  Roth}, {and} \bibinfo{person}{Juba Ziani}.} \bibinfo{year}{2019}\natexlab{}.
\newblock \showarticletitle{Downstream Effects of Affirmative Action}. In
  \bibinfo{booktitle}{\emph{Proceedings of the Conference on Fairness,
  Accountability, and Transparency (FAT*)}}.
\newblock


\bibitem[\protect\citeauthoryear{Kleinberg and Raghavan}{Kleinberg and
  Raghavan}{2018}]%
        {kleinberg18}
\bibfield{author}{\bibinfo{person}{Jon~M. Kleinberg} {and}
  \bibinfo{person}{Manish Raghavan}.} \bibinfo{year}{2018}\natexlab{}.
\newblock \showarticletitle{Selection Problems in the Presence of Implicit
  Bias}. In \bibinfo{booktitle}{\emph{Proceedings of the 9th Innovations in
  Theoretical Computer Science Conference (ITCS)}}.
  \bibinfo{pages}{33:1--33:17}.
\newblock


\bibitem[\protect\citeauthoryear{Lipton, McAuley, and Chouldechova}{Lipton
  et~al\mbox{.}}{2018}]%
        {Lipton18a}
\bibfield{author}{\bibinfo{person}{Zachary Lipton}, \bibinfo{person}{Julian
  McAuley}, {and} \bibinfo{person}{Alexandra Chouldechova}.}
  \bibinfo{year}{2018}\natexlab{}.
\newblock \showarticletitle{Does mitigating ML's impact disparity require
  treatment disparity?}. In \bibinfo{booktitle}{\emph{Proceedings of the 32nd
  International Conference on Neural Information Processing Systems (NIPS)}}.
  \bibinfo{pages}{8125--8135}.
\newblock


\bibitem[\protect\citeauthoryear{Locatello, Abbati, Rainforth, Bauer,
  Sch\"{o}lkopf, and Bachem}{Locatello et~al\mbox{.}}{2019}]%
        {locatello19}
\bibfield{author}{\bibinfo{person}{Francesco Locatello},
  \bibinfo{person}{Gabriele Abbati}, \bibinfo{person}{Thomas Rainforth},
  \bibinfo{person}{Stefan Bauer}, \bibinfo{person}{Bernhard Sch\"{o}lkopf},
  {and} \bibinfo{person}{Olivier Bachem}.} \bibinfo{year}{2019}\natexlab{}.
\newblock \showarticletitle{On the Fairness of Disentangled Representations}.
\newblock In \bibinfo{booktitle}{\emph{Proceedings of the 33rd International
  Conference on Neural Information Processing Systems (NeurIPS)}}.
  \bibinfo{pages}{14584--14597}.
\newblock


\bibitem[\protect\citeauthoryear{Mathioudakis, Castillo, Barnabo, and
  Celis}{Mathioudakis et~al\mbox{.}}{2020}]%
        {Mathioudakis19a}
\bibfield{author}{\bibinfo{person}{Michael Mathioudakis},
  \bibinfo{person}{Carlos Castillo}, \bibinfo{person}{Giorgio Barnabo}, {and}
  \bibinfo{person}{Sergio Celis}.} \bibinfo{year}{2020}\natexlab{}.
\newblock \showarticletitle{Affirmative Action Policies for Top-k Candidates
  Selection, With an Application to the Design of Policies for University
  Admissions}. In \bibinfo{booktitle}{\emph{Proceedings of the ACM Symposium on
  Applied Computing (SAC)}}. \bibinfo{pages}{440--449}.
\newblock


\bibitem[\protect\citeauthoryear{O'Dea, Lagisz, Jennions, and Nakagawa}{O'Dea
  et~al\mbox{.}}{2018}]%
        {O'Dea18a}
\bibfield{author}{\bibinfo{person}{R.~E. O'Dea}, \bibinfo{person}{M. Lagisz},
  \bibinfo{person}{M.~D. Jennions}, {and} \bibinfo{person}{S. Nakagawa}.}
  \bibinfo{year}{2018}\natexlab{}.
\newblock \showarticletitle{Gender differences in individual variation in
  academic grades fail to fit expected patterns for STEM}.
\newblock \bibinfo{journal}{\emph{Nature Communications}} \bibinfo{volume}{9},
  \bibinfo{number}{1} (\bibinfo{year}{2018}), \bibinfo{pages}{3777}.
\newblock


\bibitem[\protect\citeauthoryear{Passariello}{Passariello}{2016}]%
        {Passariello16a}
\bibfield{author}{\bibinfo{person}{Christina Passariello}.}
  \bibinfo{year}{2016}\natexlab{}.
\newblock \showarticletitle{Tech Firms Borrow Football Play to Increase Hiring
  of Women}.
\newblock \bibinfo{journal}{\emph{Wall Street Journal}} (\bibinfo{date}{27
  Sept.} \bibinfo{year}{2016}).
\newblock


\bibitem[\protect\citeauthoryear{Pedreshi, Ruggieri, and Turini}{Pedreshi
  et~al\mbox{.}}{2008}]%
        {Pedreshi08a}
\bibfield{author}{\bibinfo{person}{Dino Pedreshi}, \bibinfo{person}{Salvatore
  Ruggieri}, {and} \bibinfo{person}{Franco Turini}.}
  \bibinfo{year}{2008}\natexlab{}.
\newblock \showarticletitle{Discrimination-aware Data Mining}. In
  \bibinfo{booktitle}{\emph{Proceedings of the 14th ACM SIGKDD International
  Conference on Knowledge Discovery and Data Mining (KDD)}}.
  \bibinfo{pages}{560--568}.
\newblock


\bibitem[\protect\citeauthoryear{Phelps}{Phelps}{1972}]%
        {phelps72}
\bibfield{author}{\bibinfo{person}{Edmund Phelps}.}
  \bibinfo{year}{1972}\natexlab{}.
\newblock \showarticletitle{The Statistical Theory of Racism and Sexism}.
\newblock \bibinfo{journal}{\emph{American Economic Review}}
  \bibinfo{volume}{62}, \bibinfo{number}{4} (\bibinfo{year}{1972}),
  \bibinfo{pages}{659--61}.
\newblock


\bibitem[\protect\citeauthoryear{Raghavan, Barocas, Kleinberg, and
  Levy}{Raghavan et~al\mbox{.}}{2020}]%
        {Raghavan20a}
\bibfield{author}{\bibinfo{person}{Manish Raghavan}, \bibinfo{person}{Solon
  Barocas}, \bibinfo{person}{Jon Kleinberg}, {and} \bibinfo{person}{Karen
  Levy}.} \bibinfo{year}{2020}\natexlab{}.
\newblock \showarticletitle{Mitigating Bias in Algorithmic Hiring: Evaluating
  Claims and Practices}. In \bibinfo{booktitle}{\emph{Proceedings of the 2020
  Conference on Fairness, Accountability, and Transparency (FAT*)}}.
  \bibinfo{pages}{469--481}.
\newblock


\bibitem[\protect\citeauthoryear{Zafar, Valera, Gomez~Rodriguez, and
  Gummadi}{Zafar et~al\mbox{.}}{2017a}]%
        {Zafar17a}
\bibfield{author}{\bibinfo{person}{Muhammad~B. Zafar}, \bibinfo{person}{Isabel
  Valera}, \bibinfo{person}{Manuel Gomez~Rodriguez}, {and}
  \bibinfo{person}{Krishna~P. Gummadi}.} \bibinfo{year}{2017}\natexlab{a}.
\newblock \showarticletitle{Fairness Beyond Disparate Treatment \& Disparate
  Impact: Learning Classification Without Disparate Mistreatment}. In
  \bibinfo{booktitle}{\emph{Proceedings of the 26th International Conference on
  World Wide Web (WWW)}}. \bibinfo{pages}{1171--1180}.
\newblock


\bibitem[\protect\citeauthoryear{Zafar, Valera, Gomez~Rogriguez, and
  Gummadi}{Zafar et~al\mbox{.}}{2017b}]%
        {Zafar17c}
\bibfield{author}{\bibinfo{person}{Muhammad~B. Zafar}, \bibinfo{person}{Isabel
  Valera}, \bibinfo{person}{Manuel Gomez~Rogriguez}, {and}
  \bibinfo{person}{Krishna~P. Gummadi}.} \bibinfo{year}{2017}\natexlab{b}.
\newblock \showarticletitle{{Fairness Constraints: Mechanisms for Fair
  Classification}}. In \bibinfo{booktitle}{\emph{Proceedings of the 20th
  International Conference on Artificial Intelligence and Statistics
  (AISTATS)}}. \bibinfo{pages}{962--970}.
\newblock


\bibitem[\protect\citeauthoryear{Zemel, Wu, Swersky, Pitassi, and Dwork}{Zemel
  et~al\mbox{.}}{2013}]%
        {zemel13}
\bibfield{author}{\bibinfo{person}{Rich Zemel}, \bibinfo{person}{Yu Wu},
  \bibinfo{person}{Kevin Swersky}, \bibinfo{person}{Toni Pitassi}, {and}
  \bibinfo{person}{Cynthia Dwork}.} \bibinfo{year}{2013}\natexlab{}.
\newblock \showarticletitle{Learning Fair Representations}. In
  \bibinfo{booktitle}{\emph{Proceedings of the 30th International Conference on
  Machine Learning (ICML)}}. \bibinfo{pages}{325--333}.
\newblock


\end{thebibliography}

\appendix

\section{Additional Plots}
\label{section: additional figures}

As was mentioned in Section~\ref{section: experiments}, we perform our study on different prior distributions of $\w$. In Section~\ref{section: experiments}, we studied the Pareto case. In this section, we show the selection results for other distributions: Uniform, Gaussian mixture and Beta distribution. 

The most interesting case is a two component Gaussian mixture, where with probability $\pi=0.2$ there appears a ``good'' candidate, and with probability $0.8$ appears a ``bad'' candidate. This is a typical situation where only small proportion of candidates are ``suitable'' for the selection. We set selection budget $\ay=0.1$, which means that we aim to select only candidates from the second peak. Parameters of this mixture are as follows: $\mu_1=0.0$, $\mu_2=0.2$, $\sigma_1 = \sigma_2=0.05$.

While performing \textbf{one-stage} selection, we vary the selection budget $\ax$ from 0.1 to 1. In \textbf{two-stage} selection, we fix $\ay=0.1$ and vary $\ax$ from $\ay$ to 1. The implicit variance for $B$-candidates is fixed to $\sxb=0.1$, the implicit variance for $A$-candidates varies with $k=1,2,3,4$ as $\sxa=k\sxb$. Parameters $\pa=0.4$ and $\pb=0.6$.

On top row of Figure~\ref{fig: selections differen priors}, we plot corresponding pdfs, on middle row we show the result for one-stage selection and on bottom row, the two-stage selection is displayed. We observe that the demographic parity selection outperforms the group oblivious algorithm in one-stage selection for small and large $\ax$. In two-stage selection, demographic parity is also better than the group oblivious algorithm for values $\ax$ close $\ay$, while for $\ax \gg \ay$, both tend to perform similarly.

\begin{figure} 

    \begin{subfigure}{0.3\textwidth}
        \includegraphics[width=\linewidth]{./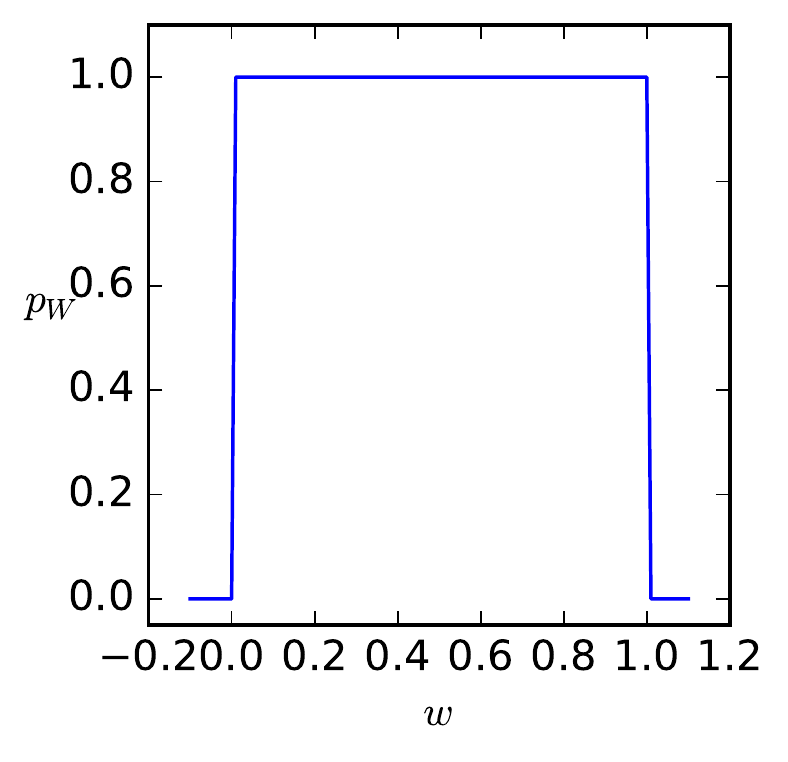}
        \end{subfigure}
        \begin{subfigure}{0.3\textwidth}
        \includegraphics[width=\linewidth]{./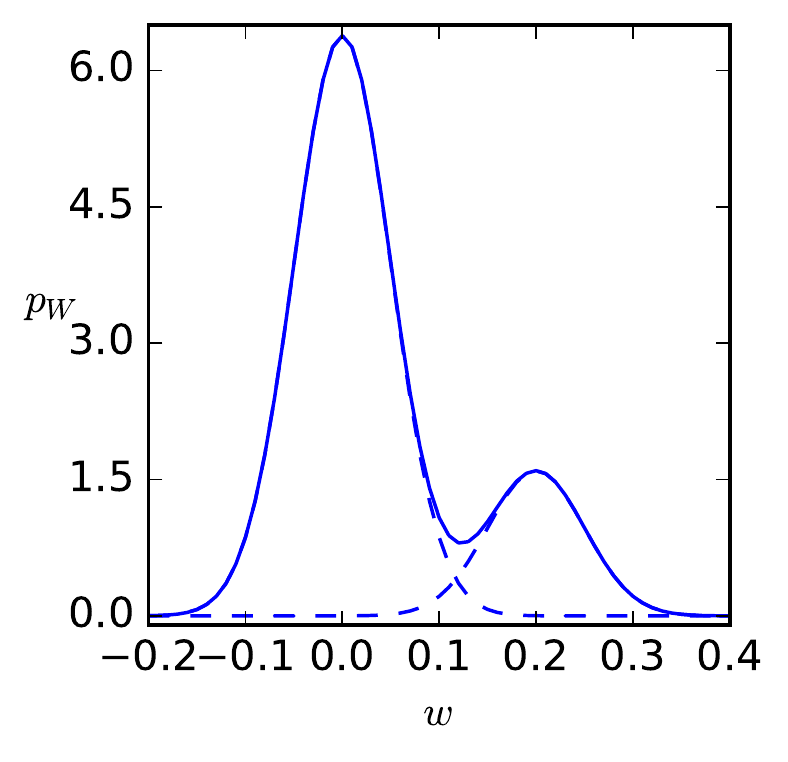}
        \end{subfigure}
        \begin{subfigure}{0.3\textwidth}
            \includegraphics[width=\linewidth]{./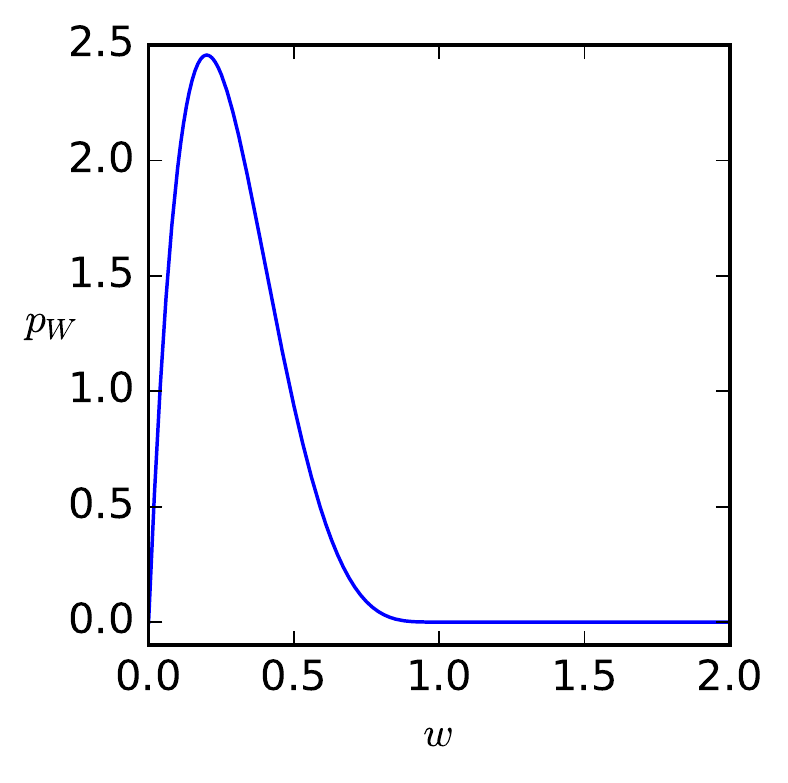}
        \end{subfigure}

    \medskip
    \begin{subfigure}{0.3\textwidth}
    \includegraphics[width=\linewidth]{./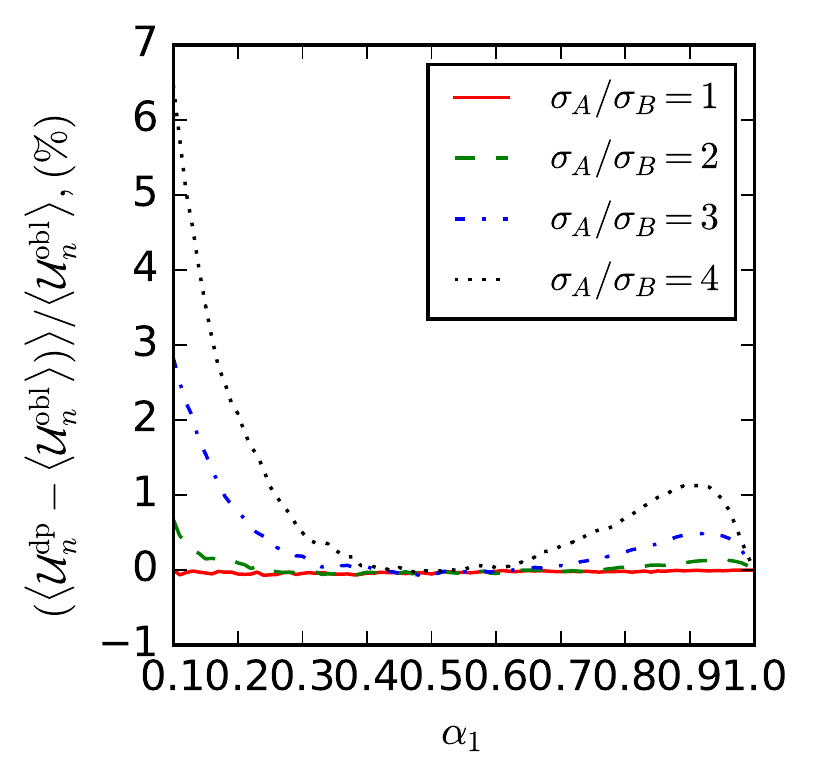}
    \end{subfigure}
    \begin{subfigure}{0.3\textwidth}
    \includegraphics[width=\linewidth]{./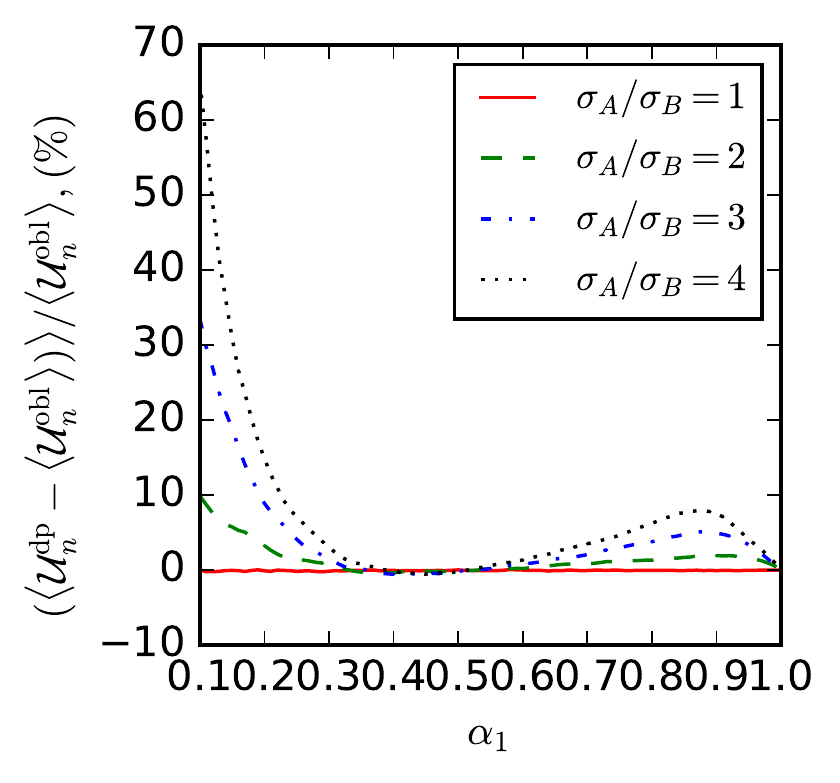}
    \end{subfigure}
    \begin{subfigure}{0.3\textwidth}
        \includegraphics[width=\linewidth]{./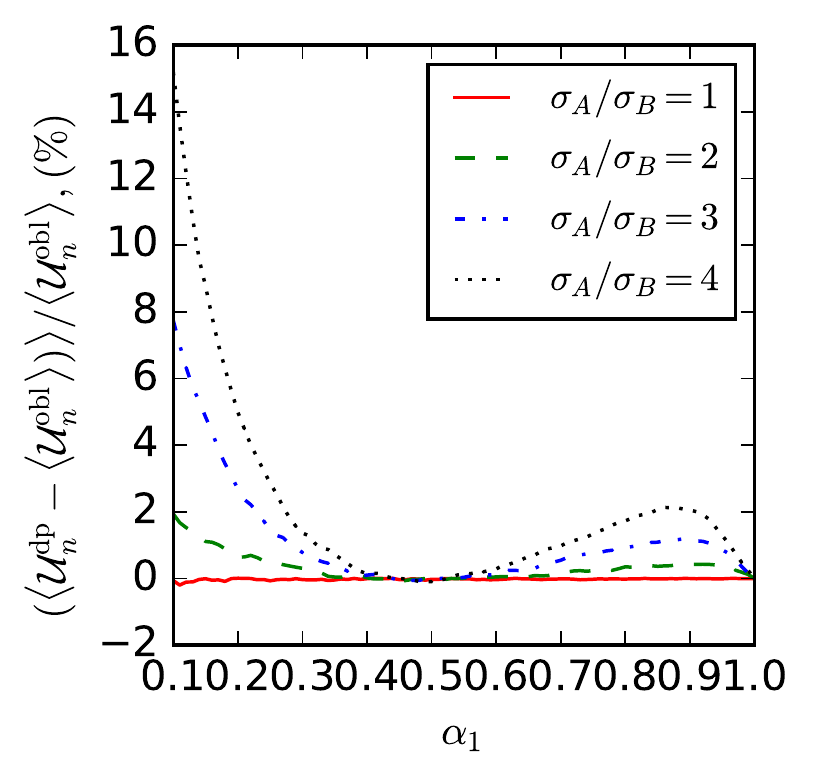}
    \end{subfigure}

    \medskip
    \begin{subfigure}{0.3\textwidth}
    \includegraphics[width=\linewidth]{./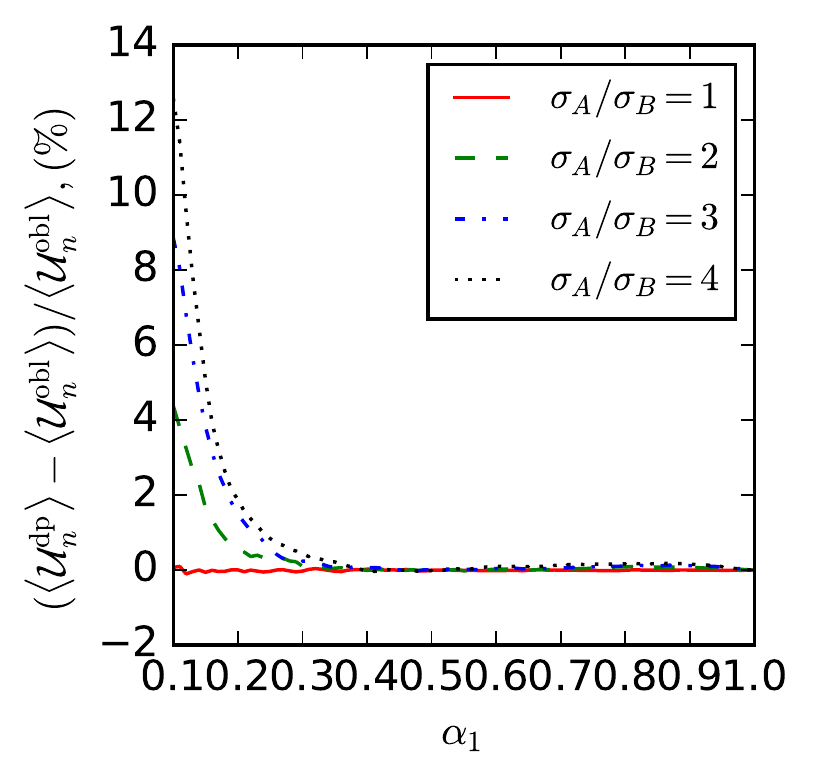}
    \caption{Uniform$(0, 1)$} \label{fig:c}
    \end{subfigure}
    \begin{subfigure}{0.3\textwidth}
    \includegraphics[width=\linewidth]{./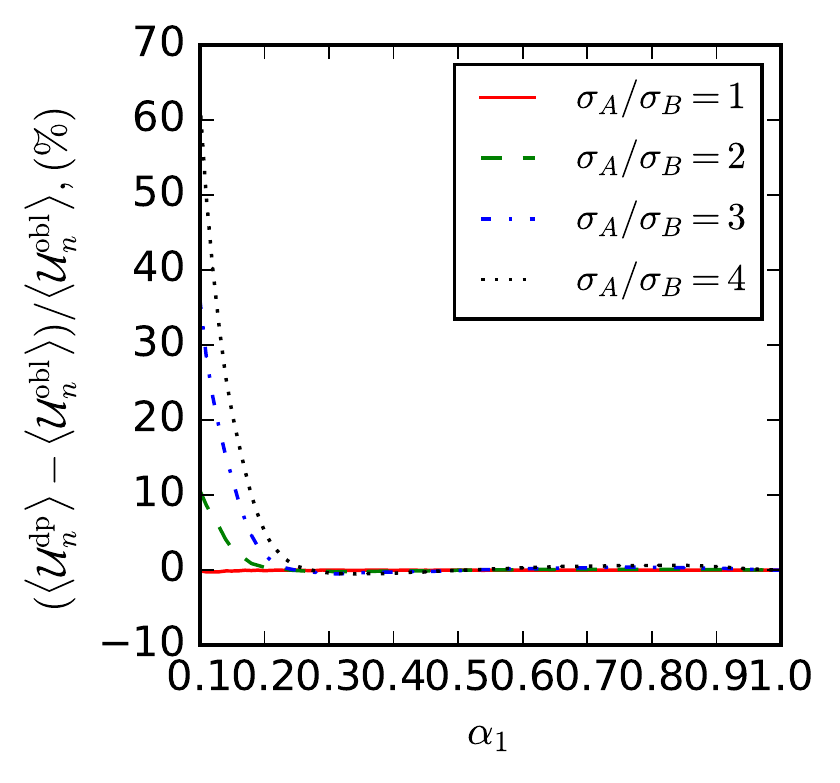}
    \caption{Gaussian Mixture} \label{fig:d}
    \end{subfigure}
    \begin{subfigure}{0.3\textwidth}
        \includegraphics[width=\linewidth]{./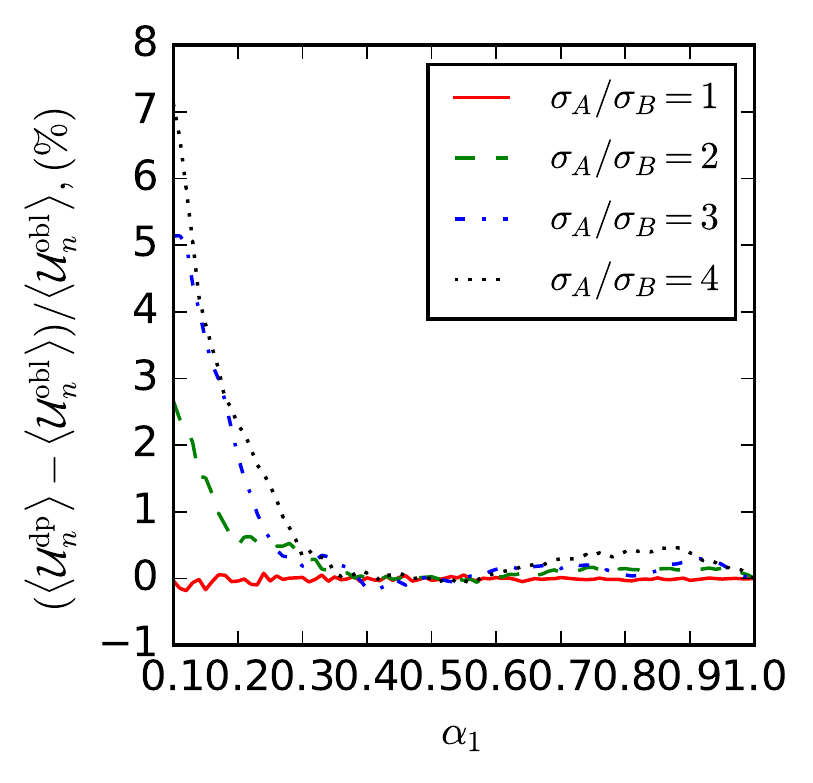}
        \caption{Beta$(2, 5)$}
         \label{fig:b}
        \end{subfigure}
    \caption{One- and two-stage selection results for different prior distributions of quality $\w$. The implicit variance for $B$-candidates is fixed to $\sxb=0.1$, the implicit variance for $A$-candidates varies with $k=1,2,3,4$ as $\sxa=k\sxb$. We set $n=1,000$ and average over $100$ experiments.}
    \label{fig: selections differen priors}
\end{figure}


\section{Ommited proofs}
\label{section: proofs}

In this section we provide detailed proofs of the statements given before. Namely, these are proofs of Lemma~\ref{lemma: selections for dp and greedy 1 stage}, Lemma~\ref{lemma: optimal selection} and Theorem~\ref{theorem: two stage}.

\subsection{Proof of Lemma \ref{lemma: selections for dp and greedy 1 stage}}
\label{proof: selections for dp and greedy 1 stage}

The distribution of $\w$ is common for both groups and follows a normal law with parameters $\mu_\w$ and $\sigma_\w^2$. The noise $\varepsilon$ is centered normal, thus,  $\wh=\w + \varepsilon\sigma_G$ follows a normal law with parameters $\mu_{\wh|G}= \mu_\w$ and $\sigma^2_{\wh|G} = \sq^2 + \sxg^2$. 

The selection  fraction $\pxggreedy$ is, by definition, the probability to observe a $G$-candidate with $\wh$ larger than $\txggreedy$ (where $\txggreedy = \txggreedy_A = \txggreedy_B$ since the thresholds are the same in the group-oblivious selection algoritm). Thus, $\pxggreedy = \Pb(\wh \geq \txggreedy \,|\,G) = \Phi^c\left(\frac{\txggreedy -\mq}{\sqrt{\sxg^2 + \sq^2}}\right)$. Recall that to shorten the notation, we write $\Pb(\wh \geq \txggreedy \,|\,G)$ for $\Pb(\wh_i \geq \txggreedy \,|\,G_i=G)$.

Now, we consider three different cases:
\begin{itemize}
\item[(1)]  if $\ax < 1/2$, then $\txggreedy > \mq$ and $\frac{\txggreedy -\mq}{\sqrt{\sxa^2 + \sq^2}} <  \frac{\txggreedy -\mq}{\sqrt{\sxb^2 + \sq^2}}$;
\item[(2)]  if $\ax = 1/2$, then $\txggreedy = \mq$ and $\frac{\txggreedy -\mq}{\sqrt{\sxa^2 + \sq^2}} =  \frac{\txggreedy -\mq}{\sqrt{\sxb^2 + \sq^2}} = 0$;
\item[(3)]  if $\ax > 1/2$, then $\txggreedy < \mq$ and $\frac{\txggreedy -\mq}{\sqrt{\sxa^2 + \sq^2}} >  \frac{\txggreedy -\mq}{\sqrt{\sxb^2 + \sq^2}}$. 
\end{itemize}
The function $\Phi^c$ is decreasing with its argument, thus the statement of lemma is a direct consequence of the above inequalities.

\subsection{Proof of Lemma \ref{lemma: optimal selection}}
\label{proof: optimal selection}

The following lemma is necessary to study the optimal one-stage selection. The proof of it is technical and we postpone it to Section~\ref{ssec:q' 1 stage proof}.
\begin{lemma}[Derivatives of One-Stage Utility $\Q$]
\label{lemma: q' 1 stage}
For  any $\ax$, where $0 < \ax \leq 1$, we have 
    \begin{enumerate}
    \item $\frac{d\Q}{d\pxa} = \frac{\pa}{\ax}\left[\frac{\txa \cdot \sq^2 + \mq \cdot \sxa^2}{\sq^2 + \sxa^2} - \frac{\txb \cdot \sq^2 + \mq \cdot \sxb^2}{\sq^2 + \sxb^2}\right]$;
    \item $\frac{d^2\Q}{d\pxa^2} < 0$,
\end{enumerate}
where $\twha$ and $\twhb$ are such that
\begin{align*}
\left\{\begin{array}{l}
    \Pb(\wh_A \ge \twha\,|\,G=A) = \pxa,\\
    \sum_{G\in\{A,B\}}\Pb(\wh \ge \twhg\,|\,G)\cdot\Pb(G) = \ax. \\
  \end{array}
\right. 
\end{align*}
\end{lemma}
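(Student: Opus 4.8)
\textbf{Setup.} The plan is to write the utility $\Q(\pxa)$ as an explicit integral and then differentiate under the integral sign. Recall that $\w$ is $\N(\mq,\sq^2)$ and, conditioned on group $G$, $\wh = \w + \sigma_G\varepsilon$ is $\N(\mq,\sq^2+\sxg^2)$, and that $(\w,\wh)$ is jointly Gaussian with $\E[\w\mid\wh=\hat w] = \frac{\sq^2}{\sq^2+\sxg^2}\hat w + \frac{\sxg^2}{\sq^2+\sxg^2}\mq$ by \eqref{eq:expected w given w hat}. Write $q_G := \sq^2/(\sq^2+\sxg^2)$ for the shrinkage coefficient. First I would express
\begin{align*}
  \Q(\pxa) &= \frac{1}{\ax}\sum_{G\in\{A,B\}}\pg\,\E\big[\w\,\mathbbm{1}_{\wh\ge\twhg}\,\big|\,G\big]
          = \frac{1}{\ax}\sum_{G\in\{A,B\}}\pg\int_{\twhg}^\infty \big(q_G\hat w + (1-q_G)\mq\big)\,p_{\wh|G}(\hat w)\,d\hat w,
\end{align*}
using the tower property $\E[\w\mathbbm{1}_{\wh\ge\twhg}\mid G] = \E[\E[\w\mid\wh]\mathbbm{1}_{\wh\ge\twhg}\mid G]$. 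The denominator $\ax$ is fixed (one-stage budget), so only the numerator moves with $\pxa$.

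\textbf{Differentiation.} The variable $\pxa$ controls $\twha$ through $\Phi^c\big((\twha-\mq)/\sqrt{\sq^2+\sxa^2}\big)=\pxa$, hence $d\twha/d\pxa = -\sqrt{\sq^2+\sxa^2}/\phi(\cdot) < 0$; and it controls $\twhb$ through the budget constraint $\pa\pxa + \pb\,\Phi^c\big((\twhb-\mq)/\sqrt{\sq^2+\sxb^2}\big) = \ax$, giving a corresponding expression for $d\twhb/d\pxa>0$. Differentiating the numerator, the only contributions come from the moving lower limits (Leibniz rule), since the integrands themselves do not depend on $\pxa$. A short computation shows
\begin{align*}
  \frac{d}{d\pxa}\Big[\pg\int_{\twhg}^\infty(q_G\hat w+(1-q_G)\mq)p_{\wh|G}(\hat w)\,d\hat w\Big]
  = -\pg\,\big(q_G\twhg+(1-q_G)\mq\big)p_{\wh|G}(\twhg)\,\frac{d\twhg}{d\pxa}.
\end{align*}
Now $p_{\wh|G}(\twhg) = \phi\big((\twhg-\mq)/\sqrt{\sq^2+\sxg^2}\big)/\sqrt{\sq^2+\sxg^2}$, and combining with the expressions for $d\twhg/d\pxa$ the Gaussian density factors cancel exactly: for group $A$ one gets $-\pa\,(q_A\twha+(1-q_A)\mq)\cdot(-1)$ after cancellation, and from $\pa\,d\pxa = -\pb\,d\big(\Phi^c(\cdots)\big)$ the group-$B$ term yields $+\pa\,(q_B\twhb+(1-q_B)\mq)$. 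Dividing by $\ax$ and substituting $q_G = \sq^2/(\sq^2+\sxg^2)$ gives
\begin{align*}
  \frac{d\Q}{d\pxa} = \frac{\pa}{\ax}\left[\frac{\twha\sq^2+\mq\sxa^2}{\sq^2+\sxa^2} - \frac{\twhb\sq^2+\mq\sxb^2}{\sq^2+\sxb^2}\right],
\end{align*}
which is exactly claim (1); note it has the transparent interpretation $\frac{d\Q}{d\pxa} = \frac{\pa}{\ax}\big(\E[\w\mid\wh=\twha]-\E[\w\mid\wh=\twhb]\big)$, i.e. the marginal gain of moving one unit of selection mass from the worst selected $B$-candidate to the worst selected $A$-candidate.

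\textbf{Concavity (claim 2).} For the second derivative I would differentiate the formula above once more. The term $\E[\w\mid\wh=\twha] = q_A\twha + (1-q_A)\mq$ is increasing in $\twha$, and $\twha$ is \emph{decreasing} in $\pxa$ (larger selection fraction $\Rightarrow$ lower threshold), so this term contributes negatively; symmetrically, $\E[\w\mid\wh=\twhb]$ is decreasing in $\pxa$ (since $\twhb$ increases as fewer $B$-candidates are selected), so subtracting it also contributes negatively. Explicitly, $\frac{d}{d\pxa}\E[\w\mid\wh=\twha] = q_A\frac{d\twha}{d\pxa} = -q_A\sqrt{\sq^2+\sxa^2}/\phi\big((\twha-\mq)/\sqrt{\sq^2+\sxa^2}\big) < 0$, and analogously for the $B$-term with an overall plus sign in front of a positive quantity, which enters $d^2\Q/d\pxa^2$ with a minus. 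Both pieces are strictly negative for every $\pxa\in(0,1]$, so $d^2\Q/d\pxa^2<0$. The main obstacle here is purely bookkeeping — keeping the chain-rule factors $d\twhg/d\pxa$ and the Gaussian densities straight through two differentiations — but no inequality as delicate as Harris/FKG is needed in the one-stage case; monotonicity of the normal quantile function suffices. (The paper mentions Harris' inequality for the harder two-stage analogue; here I would remark that the one-stage case is the special case where the second-stage density collapses and the argument degenerates to the elementary one above.)
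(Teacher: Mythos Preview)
Your argument is correct. You exploit the Gaussian assumption from the outset: because $\E[\w\mid\wh=\hat w]$ is affine in $\hat w$, the utility integrand becomes an explicit linear function of $\hat w$, the Leibniz-rule boundary terms combine with the chain-rule factors $d\twhg/d\pxa = -1/p_{\wh|G}(\twhg)$ so that the densities cancel, and concavity follows from plain monotonicity of $\twhg\mapsto q_G\twhg+(1-q_G)\mq$ together with the signs of $d\twhg/d\pxa$. No correlation inequality is required.

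The paper takes a different route: it proves a version of the lemma for a \emph{general} latent-quality density $p_W$ (see the unnumbered ``Derivatives of $\Q$'' lemma in the appendix), obtaining
\[
\frac{d\Q}{d\pxa}=\frac{p_A}{\ax}\left[\frac{\int w\,p_W(w)\,\phi((w-\twha)/\sxa)\,dw}{\int p_W(w)\,\phi((w-\twha)/\sxa)\,dw}-\text{(same for $B$)}\right],
\]
and then specializes to the normal case to recover the stated formula. For concavity in that general setting the affine trick is unavailable, so the paper invokes Harris' (FKG) inequality to show $\partial\Z/\partial\twhg>0$. Thus your proof is shorter and entirely elementary but tied to the Gaussian prior, while the paper's argument costs a correlation inequality but establishes strict concavity of $\Q$ for any prior with a density --- a fact the paper later leans on implicitly when discussing non-normal priors. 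Your parenthetical remark that Harris is only needed for the two-stage case slightly understates this: the paper uses it in the one-stage proof as well, precisely to avoid the Gaussian-specific linearity you exploit.
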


The result in Lemma~\ref{lemma: q' 1 stage} helps to obtain several insights about the optimal one-stage selection. From the condition on a maximum $\frac{d\Q}{d\pxa}=0$, we obtain that optimal quantiles $\txgopt$ should satisfy $\txgopt = C\cdot(\sq^2 + \sxg^2) + \mq$ for some $C$ that does not depend on the group $G$. Since $\wh|G$ follows a normal law with parameters $\mu_{\wh|G} = \mq$,  $\sigma_{\wh|G}^2=\sxg^2 + \sq^2$,  the selection probability $\pxgopt$ can be written as $\pxgopt = \Pb(\wh \ge \txgopt\,|\, G) = \Phi^c\left(\frac{\txgopt - \mq}{\sqrt{\sq^2 + \sxg^2}}\right)$. Using the expression for optimal quantiles, we obtain $\pxgopt = \Phi^c\left(C\cdot\sqrt{\sq^2 + \sxg^2}\right)$.

Now, three cases are possible: 
\begin{itemize}
\item[(1)] If $\txgopt < \mq$, then $C < 0$. As a result, using the expression for $\pxgopt$, we  obtain that $\pxaopt > \pxbopt$.
\item[(2)] If $\txgopt > \mq$, then $C >  0$. Hence,  $\pxaopt < \pxbopt$.
\item[(3)] If $\txgopt = \mq$, then $C = 0$ and $\pxaopt = \pxbopt = 1/2$. 
\end{itemize}
Since $\pa \pxaopt + \pb\pxbopt  = \ax$, then $\pxadp$ is a convex combination of $\pxaopt$ and $\pxbopt$. Thus, $\pxadp < \pxaopt$ for $\txgopt < \mq$, $\pxadp > \pxaopt$ for $\txgopt > \mq$ and $\pxadp = \pxaopt$ for $\txgopt = \mq$.

\subsection{Proof of Theorem \ref{theorem: two stage}}
\label{proof: two stage}

The following lemma is necessary to study the optimal two-stage selection. Due to the technicality of its proof, we postpone it to Section~\ref{ssec:q' 2 stage proof}.
\begin{lemma}[Derivatives of Two-Stage Utility   $\Q$]
    \label{lemma: derivatives Q}
    For any $\ax$, $\ay$, where $0 < \ay \leq \ax$:
    \begin{enumerate}
    \item 
        $ \frac{d\Q}{d\pxa} = \frac{1}{\ay} \pa \left[ \hat \sigma_A \int_{-\infty}^{(\hat \mu_A - \ty)/
        \hat \sigma_{A}} \Phi(\tau)d\tau - \hat \sigma_B \int_{-\infty}^{(\hat \mu_B - \ty)/
        \hat \sigma_{B}} \Phi(\tau)d\tau  \right]$, where 
    \begin{align*}
        \hat \mu_G = \frac{\mq \sxg^2 + \twhg  \sq^2}{\sxg^2 + \sq^2},\;\; \hat \sigma_{G}^2 = \frac{\sxg^2  \sq^2}{\sxg^2  + \sq^2};
    \end{align*}
    \item $\frac{d^2\Q}{d\pxa^2} < 0$,
\end{enumerate}
where $\twha$, $\twhb$ and $\tw$ are such that
\begin{align*}
\left\{\begin{array}{l}
    \Pb(\wh_A \ge \twha\,|\,G=A) = \pxa,\\
    \sum_{G\in\{A,B\}}\Pb(\wh \ge \twhg\,|\,G)\cdot\Pb(G) = \ax, \\
    \sum_{G\in\{A,B\}}\Pb(\wh \ge \twhg, \w \ge \tw \,|\,G )\cdot\Pb(G) = \ay.
  \end{array}
\right. 
\end{align*}
\end{lemma}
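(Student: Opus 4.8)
The plan is to prove both parts by implicitly differentiating, with respect to $\pxa$, the three equations that pin down $\twha,\twhb,\tw$ (with $\ax$ and $\ay$ held fixed). Introduce $N_G(s,t):=\esp{\w\,\mathbbm{1}_{\wh\ge s}\,\mathbbm{1}_{\w\ge t}\mid G}$ and $D_G(s,t):=\Pb(\wh\ge s,\w\ge t\mid G)$. Since the third constraint makes $\sum_{G}\Pb(G)D_G(\twhg,\tw)=\ay$, the utility is $\Q=\frac1{\ay}\sum_{G\in\{A,B\}}\Pb(G)\,N_G(\twhg,\tw)$. The first constraint gives $\twha$ explicitly from $\pxa$, so $d\twha/d\pxa=-1/p_{\wh|A}(\twha)<0$ (with $p_{\wh|G}$ the $\N(\mq,\sq^2+\sxg^2)$ density); differentiating the second equation at fixed $\ax$ gives $d\twhb/d\pxa=\pa/(\pb\,p_{\wh|B}(\twhb))>0$; and differentiating the third at fixed $\ay$ expresses $d\tw/d\pxa$ through those. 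The four partials $\partial_sN_G,\partial_tN_G,\partial_sD_G,\partial_tD_G$ (evaluated at $(\twhg,\tw)$) are one-dimensional Gaussian integrals over the two sides of the selection region $\{\wh\ge\twhg,\,\w\ge\tw\}$, obtained by differentiating under the integral sign.

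For part (1) the key identity is $\partial_tN_G=t\,\partial_tD_G$ (both equal $-t$ times $\int_s^\infty f_G(t,\hat w)\,d\hat w$, where $f_G$ is the bivariate normal density of $(\w,\wh)\mid G$). Plugging in the differentiated third constraint then makes every $d\tw/d\pxa$ contribution cancel, and one is left with a sum of two terms $\Pb(G)\big(\partial_sN_G-\tw\,\partial_sD_G\big)\frac{d\twhg}{d\pxa}$; here $\partial_sN_G-\tw\,\partial_sD_G=-p_{\wh|G}(\twhg)\,\esp{(\w-\tw)^+\mid\wh=\twhg,G}$. Using $\w\mid\wh=\twhg,G\sim\N(\hat\mu_G,\hat\sigma_G^2)$ with $\hat\mu_G,\hat\sigma_G$ as in the statement, a change of variables and an integration by parts give $\esp{(\w-\tw)^+\mid\wh=\twhg,G}=\hat\sigma_G\int_{-\infty}^{(\hat\mu_G-\tw)/\hat\sigma_G}\Phi(\tau)\,d\tau=:g_G$. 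The densities $p_{\wh|G}(\twhg)$ cancel the threshold derivatives, leaving $\frac{d\Q}{d\pxa}=\frac{\pa}{\ay}\big(g_A-g_B\big)$, which is the claimed formula.

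For part (2), differentiate once more. Set $\Psi_G:=\Phi\big((\hat\mu_G-\tw)/\hat\sigma_G\big)=\Pb(\w\ge\tw\mid\wh=\twhg,G)\in(0,1)$ and $c_G:=\sq^2/(\sxg^2+\sq^2)\in(0,1)$; differentiating $g_G$ gives $\partial_{\twhg}g_G=c_G\Psi_G$ and $\partial_\tw g_G=-\Psi_G$, while re-using the differentiated third constraint yields $d\tw/d\pxa=(\Psi_A-\Psi_B)\,\pa/\big(p_\w(\tw)\,Q\big)$ with $Q:=\pa\Phi^c\!\big((\twha-\tw)/\sxa\big)+\pb\Phi^c\!\big((\twhb-\tw)/\sxb\big)>0$. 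Collecting terms,
\[
  \frac{d^2\Q}{d\pxa^2}=\frac{\pa}{\ay}\Big[c_A\Psi_A\frac{d\twha}{d\pxa}-c_B\Psi_B\frac{d\twhb}{d\pxa}-(\Psi_A-\Psi_B)\frac{d\tw}{d\pxa}\Big],
\]
and every summand is non-positive: the first two strictly, since $d\twha/d\pxa<0<d\twhb/d\pxa$ and $c_G,\Psi_G>0$; the last equals $-(\Psi_A-\Psi_B)^2\,\pa/(p_\w(\tw)Q)\le0$. Hence $d^2\Q/d\pxa^2<0$. (If one would rather not rely on these closed forms, the Harris/FKG positive-association inequality \cite{fortuin71} for coordinatewise-monotone functions of a Gaussian vector supplies the needed monotonicity.)

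I expect the real work to be bookkeeping rather than conceptual: differentiating carefully under the integral sign, keeping track of the implicit dependences $\twhb=\twhb(\pxa)$ and $\tw=\tw(\pxa)$, and checking that $\pxa\mapsto(\twha,\twhb,\tw)$ is $C^1$ on the feasible range of $\pxa$ — which follows from the implicit function theorem because the Gaussian densities involved are strictly positive there. The limiting case $\ay=\ax$ makes $\tw=-\infty$; there the $\tw$'s cancel in $g_A-g_B\to\hat\mu_A-\hat\mu_B$ and the statement reduces to the one-stage identity of Lemma~\ref{lemma: q' 1 stage} (or follows by continuity in $\ay$).
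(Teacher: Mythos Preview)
Your proposal is correct. For part~(1) it follows essentially the same route as the paper: implicit differentiation of the three constraints, the cancellation of the $d\tw/d\pxa$ contribution via the identity $\partial_tN_G=t\,\partial_tD_G$ (which the paper obtains in slightly different notation), and reduction to the conditional expectation $\esp{(\w-\tw)^+\mid\wh=\twhg,G}$; you then carry out explicitly the Gaussian specialization that the paper leaves implicit in passing from its general formula to the lemma's closed form.

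For part~(2) your argument is genuinely different. The paper differentiates the general integral expression for $d\Q/d\pxa$ (valid for an arbitrary prior $p_\w$) and bounds the sign of $\partial\Z/\partial\twhg$ via the Harris/FKG inequality applied to the increasing functions $w\mapsto w$ and $w\mapsto(w-\tw)^+$ against the tilted measure $dP\propto p_\w(w)\phi((w-\twhg)/\sxg)\,dw$. You instead differentiate the Gaussian closed form $g_G=\hat\sigma_G\int_{-\infty}^{(\hat\mu_G-\tw)/\hat\sigma_G}\Phi(\tau)\,d\tau$ directly, obtaining $\partial_{\twhg}g_G=c_G\Psi_G$ and $\partial_\tw g_G=-\Psi_G$, and then read off the sign of each summand (the cross term $-(\Psi_A-\Psi_B)\,d\tw/d\pxa$ becoming $-\pa(\Psi_A-\Psi_B)^2/(p_\w(\tw)Q)\le0$). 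This is more elementary and fully explicit in the Gaussian case, and as a bonus it handles transparently the $\partial_\tw$ contribution that the paper dismisses as zero; the price is that, unlike the Harris-inequality argument, it does not immediately generalize to non-Gaussian priors.
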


The proof proceeds as follows. First, we show that for $\ax > 1/2$, we have $\pxaopt > \pxadp$. Then, using Lemma \ref{lemma: selections for dp and greedy 1 stage}, we have $\pxadp > \pxagreedy$ for $\ax > 1/2$. Using the property that $\pxagreedy < \pxadp < \pxaopt$, we conclude that there exists $\lambda \in (0,1)$, such that $\pxadp = \lambda \pxagreedy + (1-\lambda)\pxaopt$. By strict concavity of two-stage selection utility, we obtain
\begin{align*}
    \Q\left(\pxadp\right) &= \Q\left(\lambda \pxagreedy + (1-\lambda) \pxaopt\right) > \lambda\Q\left(\pxagreedy\right) + (1-\lambda) \Q\left(\pxaopt\right)  >  \Q\left(\pxagreedy\right).
\end{align*}
Hence, to complete the proof we need  to show that $\pxaopt > \pxadp$ for $\ax > 1/2$, which we do in the next few paragraphs.

The random variable $\wh|G$ follows a normal law with parameters $\mu_{\wh|G}=\mq$ and $\sigma^2_{\wh|G}=\sxg^2 + \sq^2$. Hence, the threshold $\txgdp = F^{-1}_{\wh|G}(1-\ax)$ can be simplified to $\txgdp = \mq + \sqrt{\sq^2 + \sxg^2} \cdot \Phi^{-1}(1-\ax)$. By using the expression for $\Q'$ from Lemma \ref{lemma: derivatives Q} and substituting $\txgdp$ into it, we obtain
\begin{align*}
\Q'(\pxadp) = \frac{1}{\ay} \pa \Big( \hat \sigma_A \int_{-\infty}^{(\hat \mu_A - \ty)/
            \hat \sigma_{A}} \Phi(\tau)d\tau - \hat \sigma_B \int_{-\infty}^{(\hat \mu_B - \ty)/
            \hat \sigma_{B}} \Phi(\tau)d\tau  \Big)\;,
\end{align*}
where $\hat \mu_G = \mq + \frac{\Phi^{-1}(1-\ax) \cdot\sq^2}{\sqrt{\sq^2 + \sxg^2}}$ and $\hat \sigma_G = \frac{\sxg^2  \sq^2}{\sxg^2 + \sq^2}$. As a result,
$\hat \mu_A - \ty > \hat \mu_B - \ty$ for $\Phi^{-1}(1-\ax) < 0 \iff \ax > 1/2$.
Thus,  
$\Q'(\pxadp) \geq I(\hat \sxa) - I(\hat \sxb)$, where $I(\sigma) = \sigma \int_{-\infty}^{I_0/\sigma} \Phi(\tau)d\tau$ for some $I_0 \in \mathbb{R}$.

We can show that the function $I(\sigma)$ is increasing for any value of $I_0$ by looking at its first derivative: 
\begin{align*}
I'_\sigma &= \int_{-\infty}^{I_0/\sigma} \Phi(\tau)d\tau + \sigma \Phi\left(\frac{I_0}{\sigma}\right) \left( -\frac{I_0}{\sigma^2}\right) 
= \frac{I_0}{\sigma} \Phi\left(\frac{I_0}{\sigma}\right) + \phi\left(\frac{I_0}{\sigma}\right) - \frac{I_0}{\sigma} \Phi\left(\frac{I_0}{\sigma}\right) 
= \phi\left(\frac{I_0}{\sigma}\right) > 0\;.
\end{align*}
Thus, for demographic parity selection, $\Q'(\pxadp) > 0$. Due to concavity of $\Q$, we obtain $\pxaopt > \pxadp$, which concludes the proof.

\subsection{Proof of Lemma \ref{lemma: q' 1 stage}}
\label{ssec:q' 1 stage proof}

In this section, we provide a proof of Lemma \ref{lemma: q' 1 stage}. We prove it in general setting, i.e. quality $\w$ follows any general distribution having a probability density function $p_\w(w)$.

The one-stage selection can be viewed as a two-stage selection where we select all candidates at second stage. This corresponds to setting the second-stage threshold to $\theta=-\infty$. Writing $\V(\twha, \twhb)=\V(\twha, \twhb,-\infty)$ by abuse of notation, the one-stage utility $\Q$ is defined as
\begin{align*}
    \Q(\pxa) &=\V(\twha, \twhb),\\
    &\text{with $\twha, \twhb$ such that}
      \left\{\begin{array}{l}
               \Pb(\wh \ge \twha\,|\,G=A) = \pxa,\\
               \sum_{G\in\{A,B\}}\Pb(\wh \ge \twhg\,|\,G)\cdot\Pb(G) = \ax.
             \end{array}
    \right.
    \nonumber
\end{align*}

For a general prior distribution of the quality $\w$, the quantities we study can be  expressed as
\begin{align*}
\V(\twha, \twhb) &= \frac 1 {\ax} \sum_G \pg  \int_{\txg} d\hat w \int dw \, w\cdot \pq \cdot \phig,\\
\Pb(\wh_G \ge \twhg\,|\,G)&=  \int_{\txg} d\hat w \int  dw\, \pq \cdot \phig=\pxg(\twhg),
\end{align*}
where $\pq$ is a probability density function of the distribution of quality $\w$. The symbol $\int$ here represents the integration over all support of the corresponding probability density function $\pq$, unless the limits are specified.
We prove the following statement about the derivatives of the one-stage utility $\Q$.
\begin{lemma*}[Derivatives of $\Q$]
For the one-stage selection utility $\Q$, we have
\begin{align}
\label{eq: Q1'}
&\frac{d\Q}{d\pxa} = \frac{p_A}{\ax} \left[ \frac{\int w \cdot \pq \cdot \phita dw}{\int  \pq \cdot \phita dw} - \frac{\int w \cdot \pq \cdot \phitb dw}{\int \pq \cdot \phitb dw} \right];\\
&\frac{d^2\Q}{d\pxa^2} < 0,
\end{align}    
where $\twha$ and $\twhb$ are such that
\begin{align*}
\left\{\begin{array}{l}
    \Pb(\wh_A \ge \twha\,|\,G=A) = \pxa,\\
    \sum_{G\in\{A,B\}}\Pb(\wh \ge \twhg\,|\,G)\cdot\Pb(G) = \ax. \\
  \end{array}
\right. 
\end{align*}
\end{lemma*}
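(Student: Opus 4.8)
The plan is to view $\Q$ as a function of $\pxa$ only through the two implicitly-defined first-stage thresholds $\twha(\pxa)$ and $\twhb(\pxa)$, and then differentiate by the chain rule. First I would record the two defining relations in differential form. Since $\pxa=\Proba{\wh\ge\twha\mid G=A}=\int_{\twha}^{\infty}d\hat w\int dw\,\pq\,\tfrac1{\sxa}\phi\bigl(\tfrac{\hat w-w}{\sxa}\bigr)$, differentiating under the integral sign gives $\pxa'(\twha)=-\int\pq\,\phita\,dw<0$, so $\tfrac{d\twha}{d\pxa}=1/\pxa'(\twha)<0$. Because the first-stage budget constraint $\pa\pxa+\pb\,\pxb(\twhb)=\ax$ is held fixed, differentiating it in $\pxa$ yields $\tfrac{d\twhb}{d\pxa}=-\pa/\bigl(\pb\,\pxb'(\twhb)\bigr)>0$, where $\pxb'(\twhb)=-\int\pq\,\phitb\,dw<0$. (All these derivatives are finite and nonzero because $\sxg>0$ forces the denominators $\int\pq\,\phita\,dw$ and $\int\pq\,\phitb\,dw$ to be strictly positive.)

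Second, I would differentiate $\V$ itself. The thresholds enter $\V$ only as the lower limit of the $\hat w$-integral, so Leibniz's rule gives $\tfrac{\partial}{\partial\twha}\bigl(\ax\V\bigr)=-\pa\int w\,\pq\,\phita\,dw$ and symmetrically for $\twhb$. Combining with the first step,
\[
\frac{d\Q}{d\pxa}=\frac{\partial\V}{\partial\twha}\,\frac{d\twha}{d\pxa}+\frac{\partial\V}{\partial\twhb}\,\frac{d\twhb}{d\pxa}=\frac{\pa}{\ax}\left[\frac{\int w\,\pq\,\phita\,dw}{\int\pq\,\phita\,dw}-\frac{\int w\,\pq\,\phitb\,dw}{\int\pq\,\phitb\,dw}\right],
\]
which is exactly \eqref{eq: Q1'}. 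I would then observe that each bracketed ratio equals the posterior mean $g_G(\twhg):=\esp{\w\mid\wh=\twhg,\,G}$, so $\tfrac{d\Q}{d\pxa}=\tfrac{\pa}{\ax}\bigl(g_A(\twha)-g_B(\twhb)\bigr)$; this is the representation I differentiate once more. (For a normal prior $\w$, $g_G(\hat w)=\tfrac{\hat w\,\sq^2+\mq\,\sxg^2}{\sq^2+\sxg^2}$ by \eqref{eq:expected w given w hat}, which recovers part~1 of Lemma~\ref{lemma: q' 1 stage}.)

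For the concavity statement I would differentiate again, getting $\tfrac{d^2\Q}{d\pxa^2}=\tfrac{\pa}{\ax}\bigl(g_A'(\twha)\,\tfrac{d\twha}{d\pxa}-g_B'(\twhb)\,\tfrac{d\twhb}{d\pxa}\bigr)$. The only nontrivial ingredient is that $\hat w\mapsto g_G(\hat w)$ is \emph{strictly increasing} for each group; granting this, the first term is (positive)$\times$(negative) and the second is $-$(positive)$\times$(positive) by the signs found in the first paragraph, so the sum, hence $\tfrac{d^2\Q}{d\pxa^2}$, is strictly negative. To establish strict monotonicity of $g_G$ for a general (non-degenerate) prior $\pq$, I would exploit the Gaussian form of the noise: for $\hat w_2>\hat w_1$ the likelihood ratio $\phi\bigl(\tfrac{\hat w_2-w}{\sxg}\bigr)/\phi\bigl(\tfrac{\hat w_1-w}{\sxg}\bigr)$ is log-linear and increasing in $w$, so the posterior of $\w$ given $\wh=\hat w_2$ is the posterior given $\wh=\hat w_1$ reweighted by the increasing function $w\mapsto e^{(\hat w_2-\hat w_1)w/\sxg^2}$; applying Harris' inequality \cite{fortuin71} to the two increasing functions $w$ and $w\mapsto e^{(\hat w_2-\hat w_1)w/\sxg^2}$ under the posterior at $\hat w_1$ gives $g_G(\hat w_2)\ge g_G(\hat w_1)$, strictly since $\w$ has a density and hence positive conditional variance. (Equivalently, a one-line computation with $\phi'(x)=-x\,\phi(x)$ yields the sharper identity $g_G'(\hat w)=\mathrm{Var}(\w\mid\wh=\hat w,G)/\sxg^2>0$, sidestepping Harris.)

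The main obstacle is precisely that last step — strict monotonicity of the posterior mean in the observation for an arbitrary prior — which is where independence and normality of the noise are genuinely used; everything else is differentiation under the integral sign, the implicit function theorem for $\twha,\twhb$ as functions of $\pxa$, and careful sign-tracking. The normal-prior specialization in Lemma~\ref{lemma: q' 1 stage} then follows by substituting $g_G(\hat w)=\tfrac{\hat w\,\sq^2+\mq\,\sxg^2}{\sq^2+\sxg^2}$ into these formulas.
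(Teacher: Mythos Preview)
Your argument is correct and follows essentially the same route as the paper: chain rule through the implicitly defined thresholds $\twha(\pxa),\twhb(\pxa)$, Leibniz differentiation of $\V$ at the lower limit, and then a second differentiation reducing strict concavity to positivity of $\partial_{\twhg}$ of the posterior-mean term. The only difference is in that last step: the paper expands $\partial\Z/\partial\twhg$ directly and invokes Harris on the pair $f(w)=w^2$, $g(w)=w$ (a mis-statement since $w^2$ is not monotone on $\mathbb{R}$, though the resulting expression is simply the conditional variance and hence positive anyway), whereas your framing via $g_G'(\hat w)=\mathrm{Var}(\w\mid\wh=\hat w,G)/\sxg^2>0$, or alternatively via MLR of the Gaussian likelihood plus Harris applied to the genuinely increasing pair $w$ and $e^{cw}$, is cleaner and correctly stated.
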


\subsubsection*{First Derivative of $\Q$}
Using the parameterization by $\pxa$, we can write
\begin{align*}
\frac{d\Q}{d\pxa} &= \sum_G \frac{\partial \V}{\partial \twhg}\frac{d\twhg}{d\pxa}.
\end{align*}
From the first stage constraint $\pa\pxa + \pb\pxb = \ax$,
\begin{align*}
\pa \frac{d\pxa}{d\pxa} + \pb\frac{\partial\pxb}{\partial\twhb} \frac{d\twhb}{d\pxa} = 0 \implies \frac{d\twhb}{d\pxa} = -\frac{\pa}{\pb} \frac{\partial \twhb}{\partial \pxb}.
\end{align*}
Then $\frac{d\Q}{d\pxa} = \pa \left( \frac{\partial \V}{\partial \twha}\frac{\partial\twha}{\partial\pxa}- \frac{\partial \V}{\partial \twhb}\frac{\partial\twhb}{\partial\pxb} \right)$ from which \eqref{eq: Q1'}  follows directly.

\subsubsection*{Second Derivative of $\Q$}

The second derivative of $\Q$ with respect to $\pxa$, then can be written as $\frac{d^2\Q}{d\pxa^2}  = \sum_G \frac{\partial \Z}{\partial \twhg}\frac{d\twhg}{d\pxa}$, where by $\Z$ we denote the expression \eqref{eq: Q1'}.
Using \eqref{eq: Q1'}, we conclude that
\begin{align*}
\frac{\partial \Z}{\partial \twhg} &= \frac{-\int w \cdot \pq \cdot \left(\frac{\twhg - w}{\sxg^2} \right)\phitg\,dw \int \pq \phitg\,dw}{\ax\left(\int \pq \phitg\,dw\right)^2} \\
& \phantom{=} + \frac{\int w\cdot\pq\phitg\,dw \int \pq \cdot\left(\frac{\twhg - w}{\sxg^2}\right)\phitg\,dw}{\ax\left(\int \pq \phitg\,dw\right)^2}.
\end{align*}

If we use the notation $dP = \pq \phitg dw$, then
\begin{align*}
\frac{\partial \Z}{\partial \twhg} 
&= \frac{-\int w (\twhg - w)dP \cdot \int dP + \int w dP \cdot \int (\twhg - w)dP}{\ax\sxg^2 \left(\int dP\right)^2}\\
& = \frac{\int w^2  dP \cdot \int dP - \int w  dP \cdot \int wdP}{\ax\sxg^2\left(\int dP\right)^2}
= \frac{1}{\ax\sxg^2} \left[ \frac{\int w^2 dP}{\int dP}  - \frac{\int w dP}{\int dP} \cdot \frac{\int w dP}{\int dP}\right].
\end{align*}
Since $f(w) = w^2$ and $g(w) = w$ are increasing functions of $w$, then by applying Harris inequality \cite{fortuin71} to $f$, $g$ and probability measure $d\mu = dP / \int dP$, we obtain
\begin{align*}
\frac{\int w^2 dP}{\int dP}  - \frac{\int w dP}{\int dP} \cdot \frac{\int w dP}{\int dP} > 0\;.
\end{align*}
Hence, the second derivative of $\Q$ with respect to $\pxa$ can be written as
\begin{align*}
\frac{d^2\Q}{d\pxa^2} &= \pa \frac{d\twha}{d\pxa} \frac{\partial \Z}{\partial \twha}   - \pa \frac{d\twhb}{\pxa}\frac{\partial \Z}{\partial \twhb} = \pa \frac{\partial\twha}{\partial \pxa} \frac{\partial \Z}{\partial \twha}   + \frac{\pa^2}{\pb} \frac{\partial \twhb}{\partial \pxb} \frac{\partial \Z}{\partial \twhb} < 0.
\end{align*}

\subsection{Proof of Lemma \ref{lemma: derivatives Q}}
\label{ssec:q' 2 stage proof}

In this section, we provide a proof of Lemma~\ref{lemma: derivatives Q}. We also prove it in general setting, i.e. quality $\w$ follows any general distribution having a probability density function $p_\w(w)$.

The two-stage utility  is
\begin{align*}
    \Q(\pxa) &=\V(\twha, \twhb, \tw),\\
    &\text{with $\twha, \twhb, \tw$ such that}
      \left\{\begin{array}{l}
               \Pb(\wh_A \ge \twha\,|\,G=A) = \pxa,\\
               \sum_{G\in\{A,B\}}\Pb(\wh \ge \twhg\,|\,G)\cdot\Pb(G) = \ax, \\
               \sum_{G\in\{A,B\}}\Pb(\wh \ge \twhg, \w \ge \tw \,|\,G )\cdot\Pb(G) = \ay.
             \end{array}
    \right.
    \nonumber
\end{align*}

For a general prior distribution of the quality $\w$, the quantities we study can be written as
\begin{align*}
\V(\twha, \twhb, \tw) &= \frac 1 {\ay} \sum_G \pg  \int_{\twhg} d\hat w \int_{\ty} dw  \,w\cdot \pq \cdot \phig,\\
\Pb(\wh \ge \twhg\,|\,G) &=  \int_{\twhg} d \hat w \int  dw \, \pq \cdot \phig = \pxg(\twhg), \\
\Pb(\wh \ge \twhg, \w \ge \tw \,|\,G )&=  \int_{\twhg} d\hat w\int_{\ty}  dw\,\pq \cdot \phig =  \pyg(\twhg,\tw),
\end{align*}
where $\pq$ is a probability density function of the distribution of the quality $\w$. The symbol $\int$ here represents the integration over all support of the corresponding probability density function $\pq$, unless the limits are specified. We want to prove the following statement about the derivatives of the two-stage utility $\Q$.
\begin{lemma*}[Derivatives of $\Q$]
For the two-stage utility $\Q$, we have
\begin{align}
\label{eq: general Q'}
&\frac{d\Q}{d\pxa} = \frac{p_A}{\ay} \left( \frac{\int_{\ty} (w - \ty)\cdot \pq \cdot \phita dw}{\int  \pq \cdot \phita dw} - \frac{\int_{\ty} (w - \ty)\cdot \pq \cdot \phitb dw}{\int \pq \cdot \phitb dw} \right);\\
&\frac{d^2\Q}{d\pxa^2} <0,
\end{align}  
where $\twha$, $\twhb$ and $\tw$ are such that
\begin{align*}
\left\{\begin{array}{l}
    \Pb(\wh_A \ge \twha\,|\,G=A) = \pxa,\\
    \sum_{G\in\{A,B\}}\Pb(\wh \ge \twhg\,|\,G)\cdot\Pb(G) = \ax, \\
    \sum_{G\in\{A,B\}}\Pb(\wh \ge \twhg, \w \ge \tw \,|\,G )\cdot\Pb(G) = \ay.
  \end{array}
\right. 
\end{align*}
\end{lemma*}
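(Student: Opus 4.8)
The statement to prove is the "Derivatives of $\Q$" lemma for the two-stage utility, in the general-prior setting. The plan is to differentiate $\Q(\pxa) = \V(\twha,\twhb,\tw)$ through the chain rule, exactly as in the one-stage proof of Section~\ref{ssec:q' 1 stage proof}, but now with two binding constraints (the first-stage budget $\ax$ and the second-stage budget $\ay$) rather than one. First I would treat $\twhb$ and $\tw$ as implicit functions of $\pxa$ (equivalently of $\twha$), pinned down by the two constraint equations, and compute the partial derivatives of $\V$, of $\pxg$, and of $\pyg$ with respect to each threshold. The key simplifying observation is the same as in the one-stage case: differentiating $\int_{\twhg}d\hat w\,(\cdots)\phig$ with respect to $\twhg$ just evaluates the inner integrand at $\hat w = \twhg$, which reintroduces the same Gaussian kernel but now centered differently; collecting the Gaussian-in-$\hat w$ factor with $\pq$ and completing the square turns $\pq(w)\,\phi((\twhg-w)/\sxg)$ into a Gaussian in $w$ with mean $\hat\mu_G = (\mq\sxg^2 + \twhg\sq^2)/(\sxg^2+\sq^2)$ and variance $\hat\sigma_G^2 = \sxg^2\sq^2/(\sxg^2+\sq^2)$ (times the marginal density of $\wh$ at $\twhg$). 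This is precisely the posterior-of-$\w$-given-$\wh$ computation from \eqref{eq:expected w given w hat}, so I would reuse it.

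The main computation is bookkeeping of the linear system coming from the two constraints. Differentiating $\pa\pxa + \pb\pxb = \ax$ and $\sum_G \pg\,\pyg = \ay$ with respect to $\pxa$ gives two linear equations in $d\twhb/d\pxa$ and $d\tw/d\pxa$; I would solve for both and substitute into $d\Q/d\pxa = \sum_G \partial_{\twhg}\V \cdot d\twhg/d\pxa + \partial_{\tw}\V\cdot d\tw/d\pxa$. The hoped-for cancellation is that the $\tw$-derivative terms drop out: $\partial_{\tw}\V$ is proportional to $\ty$ times a mass term, and the second-stage constraint forces the $d\tw/d\pxa$ contributions from the two groups to combine with it so that only the shift-by-$\ty$ survives inside each group integral — this is what produces the $(w-\ty)$ factor (rather than $w$) in \eqref{eq: general Q'} as compared to \eqref{eq: Q1'}. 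Concretely, I expect $d\Q/d\pxa = \frac{\pa}{\ay}\big[\E(\w-\ty \mid \wh=\twha, \w\ge\ty) - \E(\w-\ty\mid \wh=\twhb,\w\ge\ty)\big]$ after recognizing the truncated posterior expectations. Specializing to the normal prior, where $\pq \cdot \phitg$ is Gaussian in $w$ as above, the inner integral $\int_{\ty}(w-\ty)\,(\text{Gaussian})\,dw$ equals $\hat\sigma_G\int_{-\infty}^{(\hat\mu_G-\ty)/\hat\sigma_G}\Phi(\tau)\,d\tau$ after an integration by parts, which is the form stated in Lemma~\ref{lemma: derivatives Q}(1).

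For the second derivative I would follow the one-stage template verbatim: write $\frac{d^2\Q}{d\pxa^2} = \sum_G \partial_{\twhg}\Z\cdot d\twhg/d\pxa$ where $\Z$ denotes the right-hand side of \eqref{eq: general Q'}, expand $\partial_{\twhg}\Z$ using the product rule, and introduce the (unnormalized) measure $dP = \1_{w\ge\ty}\,\pq\,\phitg\,dw$. The algebra should collapse to $\partial_{\twhg}\Z$ being proportional to $\E_{d\mu}[(w-\ty)^2] - (\E_{d\mu}[w-\ty])^2 = \mathrm{Var}_{d\mu}(w-\ty) > 0$, where $d\mu = dP/\int dP$; here I would invoke Harris/FKG inequality \cite{fortuin71} with the two increasing functions $f(w) = (w-\ty)^2\1_{w\ge\ty}$ and $g(w) = (w-\ty)\1_{w\ge\ty}$ exactly as the one-stage proof uses $w^2$ and $w$. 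Then, since $\partial_{\twha}\pxa > 0$ and $d\twhb/d\pxa = -\frac{\pa}{\pb}\partial_{\twhb}\pxb$ with $\partial_{\twhb}\pxb < 0$, both summands in $\frac{d^2\Q}{d\pxa^2} = \pa\,\partial_{\twha}\pxa\cdot\partial_{\twha}\Z + \frac{\pa^2}{\pb}\,\partial_{\twhb}\pxb\cdot\partial_{\twhb}\Z$ are negative, giving strict concavity.

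The main obstacle I anticipate is the first step: showing that the $d\tw/d\pxa$ terms genuinely cancel out of $d\Q/d\pxa$. This requires carefully matching the coefficient of $\partial_{\tw}\V$ against the coefficients produced when $d\tw/d\pxa$ feeds back through $\partial_{\tw}\pxg$ (which vanishes, since $\pxg$ doesn't involve $\tw$) and through $\partial_{\tw}\pyg$ in the second constraint, and this is where the $-\ty$ shift has to materialize cleanly. Once that cancellation is verified, everything downstream is a routine replay of the one-stage argument with $w$ replaced by $(w-\ty)^+$.
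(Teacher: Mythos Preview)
Your plan for the first derivative matches the paper's: chain-rule through the two constraints, solve for $d\twhb/d\pxa$ and $d\tw/d\pxa$, and watch the $\partial_\tw\V$ contribution collapse into the $-\ty$ shift inside each group term. That part is right, with one caveat: the denominator in each group term of \eqref{eq: general Q'} is $\int \pq\,\phitg\,dw$ over the \emph{full} line, so the quantity is $\E[(\w-\ty)^+\mid \wh=\twhg]$, not the truncated conditional expectation $\E[\w-\ty\mid \wh=\twhg,\,\w\ge\ty]$ that you wrote.

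This misreading propagates and breaks your second-derivative argument. The correct reference measure is $dP=\pq\,\phitg\,dw$ \emph{without} the indicator; differentiating $\Z_G=\E_{d\mu}[(w-\ty)^+]$ (with $d\mu=dP/\!\int dP$) in $\twhg$ gives
\[
\partial_{\twhg}\Z_G \;\propto\; \E_{d\mu}\!\big[w\,(w-\ty)^+\big]-\E_{d\mu}[(w-\ty)^+]\,\E_{d\mu}[w]\;=\;\mathrm{Cov}_{d\mu}\!\big(w,\,(w-\ty)^+\big),
\]
a covariance of two distinct increasing functions under the \emph{full} posterior --- this is precisely where Harris/FKG is genuinely needed, with $f(w)=w$ and $g(w)=(w-\ty)\1_{w\ge\ty}$ as in the paper. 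It is not a variance under a truncated measure, and your proposed pair $f=(w-\ty)^2\1_{w\ge\ty}$, $g=(w-\ty)\1_{w\ge\ty}$ would in any case yield a third-moment inequality, not the variance you claimed. You also silently drop the term $\partial_\ty\Z\cdot d\ty/d\pxa$ from the chain rule for $d^2\Q/d\pxa^2$; the paper confronts this term explicitly (arguing $\partial_\ty\Z=0$), and you must as well since $\ty$ appears in both the integration limit and the integrand of $\Z$. Finally, your sign bookkeeping is inconsistent: $\partial\pxa/\partial\twha$ and $\partial\pxb/\partial\twhb$ must carry the same (negative) sign.
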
  

\subsubsection*{First Derivative of $\Q$}

Using the parameterization by $\pxa$, $\frac{d\Q}{d\pxa}$ can be written as
\begin{align*}
\frac{d\Q}{d\pxa} &= \sum_G \frac{\partial \V}{\partial \twhg}\frac{d\twhg}{d\pxa} + \frac{\partial \V}{\partial \ty}\frac{d\ty}{d\pxa}.
\end{align*}
From the first stage constraint $\pa\pxa+ \pb\pxb = \ax$:
\begin{align*}
\pa \frac{d\pxa}{d\pxa} + \pb\frac{\partial\pxb}{\partial\twhb} \frac{d\twhb}{d\pxa} = 0 \implies \frac{d\twhb}{d\pxa} = -\frac{\pa}{\pb} \frac{\partial \twhb}{\partial \pxb}.
\end{align*}
From second stage budget constraint $\py:= \pa\pya + \pb\pyb = \ay$, we get
\begin{align*}
 \sum_G \pg \frac{\partial \pyg}{\partial \twhg}\frac{d\twhg}{d\pxa} + \frac{\partial \py}{\partial \ty} \frac{d\ty}{d\pxa} = 0
\implies \frac{d\ty}{d\pxa} = - \left(\frac{\partial \py}{\partial \ty}\right)^{-1}\sum_G \pg  \frac{\partial \pyg}{\partial \twhg}\frac{d\twhg}{d\pxa}.
\end{align*}
Hence,
\begin{align*}
\frac{d\Q}{d\pxa} &= \pa \left( \frac{\partial \V}{\partial \twha}\frac{\partial\twha}{\partial\pxa}- \frac{\partial \V}{\partial \twhb}\frac{\partial\twhb}{\partial\pxb} \right)
- \left(\frac{\partial \py}{\partial \ty}\right)^{-1} \pa \left(\frac{\partial \pya}{\partial \twha}\frac{\partial \twha}{\partial \pxa} - \frac{\partial \pyb}{\partial \twhb}\frac{\partial \twhb}{\partial \pxb}\right) \frac{\partial \V}{\partial \ty}\\
& = \pa \left[ \frac{\partial \twha}{\partial \pxa}\left(\frac{\partial \V}{\partial \twha} - \frac{\partial \V}{\partial \ty} \frac{\partial \ty}{\partial \py} \frac{\partial \pya}{\partial \txa} \right) - \frac{\partial \twhb}{\partial \pxb} \left( \frac{\partial \V}{\partial \twhb} - \frac{\partial \V}{\partial \ty} \frac{\partial \ty}{\partial \py} \frac{\partial \pyb}{\partial \twhb}  \right) \right].
\end{align*}

Let us consider the following quantity:
\begin{align*}
\frac{\partial \twhg}{\partial \pxg} &\left( \frac{\partial \V}{\partial \twhg} - \frac{\partial \V}{\partial \ty} \frac{\partial \ty}{\partial \py} \frac{\partial \pyg}{\partial \twhg} \right)  = 
\frac{\frac 1 {\ay} \int_{\ty} w  \pq \frac 1 {\sxg}\phi\left(\frac{\twhg - w}{\sxg}\right)\,dw }
{\int \pq \frac{1}{\sxg}\phi\left(\frac{\twhg - w}{\sxg}\right)\,dw}\\
& - \frac{\int_{\ty} \pq \frac 1 {\sigma_g}\phi\left(\frac{\twhg - w}{\sxg}\right)\, dw}
{\int \pq \frac{1}{\sxg}\phi\left(\frac{\twhg - w}{\sxg}\right)\,dw}
\cdot \underbrace{\frac{\frac 1 {\ay} \sum_g \pg  \int_{\twhg} \ty  p_W(\ty) \frac{1}{\sxg}\phi\left(\frac{\hat w - \ty}{\sxg}\right) \, d\hat w}
{\sum_g \pg \int_{\twhg} p_W(\ty) \frac{1}{\sxg}\phi\left(\frac{\hat w - \ty}{\sxg}\right)\, d\hat w}}_{=\ty}\\
& = \frac{1}{\ay} \frac{\int_{\ty} (w - \ty)\cdot \pq \cdot \phitg dw}{\int  \pq \cdot \phitg dw}.
\end{align*}
Equation \eqref{eq: general Q'}  follows directly from  the expression developed above.

\subsubsection*{Second Derivative of $\Q$}

The second derivative of $\Q$ with respect to $\pxa$, then can be written as:
\begin{align*}
\frac{d^2\Q}{d\pxa^2}  = \sum_G \frac{\partial \Z}{\partial \twhg}\frac{d\twhg}{d\pxa} + \frac{\partial \Z}{\partial \ty}\frac{d\ty}{d\pxa},
\end{align*}
where by $\Z$ we denote the expression \eqref{eq: general Q'}. Then, by using \eqref{eq: general Q'}, we can calculate
\begin{align*}
\frac{\partial \Z}{\partial \ty} &= 0;\\
\frac{\partial \Z}{\partial \twhg} &= \frac{-\int_{\ty} (w-\ty) \cdot \pq \cdot \left(\frac{\twhg - w}{\sxg^2} \right)\phitg\,dw \int \pq \phitg\,dw}{\ay\left(\int \pq \phitg\,dw\right)^2}\\
& \phantom{=} + \frac{\int_{\ty} (w-\ty)\cdot\pq\phitg\,dw \int \pq \cdot\left(\frac{\twhg -w}{\sxg^2}\right)\phitg\,dw}{\ay\left(\int \pq \phitg\,dw\right)^2}.\\
\end{align*}

Using the notation $dP = \pq \phitg dw$, we get
\begin{align*}
\frac{\partial \Z}{\partial \twhg} 
&= \frac{-\int_{\ty} (w-\ty)(\twhg - w)dP \cdot \int dP + \int_{\ty} (w-\ty) dP \cdot \int (\twhg - w)dP}{\ay\sxg^2 \left(\int dP\right)^2}\\
& = \frac{\int_{\ty}(w^2 - \ty w)  dP \cdot \int dP - \int_{\ty} (w-\ty) dP \cdot \int wdP}{\ay\sxg^2\left(\int dP\right)^2}\\
& = \frac{1}{\ay\sxg^2} \left[ \frac{\int w \cdot (w-\ty) \1_{w \geq \ty} dP}{\int dP}  - \frac{\int (w-\ty) \1_{w \geq \ty} dP}{\int dP} \cdot \frac{\int w dP}{\int dP}\right].\\
\end{align*}
Since $f(w) = w$ and $g(w) = (w-\ty)\cdot \1_{w\geq \ty}$ are increasing functions of $w$, then by applying Harris inequality \cite{fortuin71} to $f$, $g$ and probability measure $d\mu = dP / \int dP$, we obtain:
\begin{align*}
\frac{\int w \cdot (w-\ty) \1_{w \geq \ty} dP}{\int dP}  - \frac{\int (w-\ty) \1_{w \geq \ty} dP}{\int dP} \cdot \frac{\int w dP}{\int dP} > 0\;.
\end{align*}
Then, the second derivative of $\Q$ with respect to $\pxa$ can be written as
\begin{align*}
\frac{d^2\Q}{d\pxa^2}  &= \pa \frac{d\twha}{d\pxa} \frac{\partial \Z}{\partial \twha}   - \pa \frac{d\twhb}{\pxa}\frac{\partial \Z}{\partial \twhb}
= \pa \frac{\partial\twha}{\partial \pxa} \frac{\partial \Z}{\partial \twha}   + \frac{\pa^2}{\pb} \frac{\partial \twhb}{\partial \pxb} \frac{\partial \Z}{\partial \twhb} < 0.
\end{align*}


\end{document}